\theoremstyle{definition}
\newtheorem{theorem}{Theorem}[section]
\newtheorem{corollary}[theorem]{Corollary}
\newtheorem{lemma}[theorem]{Lemma}
\newtheorem{proposition}[theorem]{Proposition}
\newtheorem{definition}[theorem]{Definition}
\newtheorem{example*}[theorem]{Example*}
\newtheorem{examples*}[theorem]{Examples*}
\newtheorem{remark}[theorem]{Remark}
\newtheorem{remark*}[theorem]{Remark*}
\newtheorem*{theorem*}{Theorem}
\newtheorem*{corollary*}{Corollary}
\newtheorem*{lemma*}{Lemma}
\newtheorem*{proposition*}{Proposition}
\def\<{\langle}
\def\>{\rangle}
\tikzstyle{dot}=[inner sep=0.3mm, minimum width=2mm, minimum height=2mm, draw, shape=circle, font={\footnotesize}, tikzit fill=magenta]
\tikzstyle{hadamard}=[fill=zxhad, draw, inner sep=0.6mm, minimum height=1.5mm, minimum width=1.5mm, shape=rectangle, tikzit shape=rectangle, tikzit category=ZH-pf, tikzit fill=yellow]
\tikzstyle{small hadamard}=[hadamard]
\tikzstyle{lambda}=[hadamard, fill={rgb,255: red,180; green,180; blue,180}, tikzit shape=rectangle]
\tikzstyle{halfscalar}=[star, fill=black, draw=black, minimum size=8pt, inner sep=0pt]
\tikzstyle{box}=[shape=rectangle, text height=1.5ex, text depth=0.25ex, yshift=0.2mm, fill=white, draw=black, minimum height=3mm, minimum width=5mm, font={\small}]
\tikzstyle{Z dot}=[inner sep=0mm, minimum size=2mm, shape=circle, draw=black, fill={zx_green}, tikzit fill=green]
\tikzstyle{Z phase dot}=[minimum size=5mm, font={\footnotesize\boldmath}, shape=rectangle, rounded corners=2mm, inner sep=0.2mm, outer sep=-2mm, scale=0.8, tikzit shape=circle, draw=black, fill={zx_green}, tikzit draw=blue, tikzit fill=green]
\tikzstyle{X dot}=[Z dot, shape=circle, draw=black, fill={zx_red}, tikzit fill=red]
\tikzstyle{X phase dot}=[Z phase dot, tikzit shape=circle, tikzit draw=blue, fill={zx_red}, font={\footnotesize\color{black}\boldmath}, tikzit fill=red]
\tikzstyle{H box}=[hadamard]
\tikzstyle{st}=[star, star points=5, fill=white, draw=black, inner sep=1.2pt, line width=1.2pt, tikzit fill=blue, tikzit draw=red, tikzit category=ZH-pf]
\tikzstyle{triangle}=[regular polygon, regular polygon sides=3, fill=white, draw=black, inner sep=0pt, minimum width=1em, tikzit draw=blue, tikzit category=ZH-pf, tikzit fill=cyan]
\tikzstyle{not}=[fill={rgb,255: red,180; green,180; blue,180}, draw=black, shape=circle, font={$\neg$}, dot]
\tikzstyle{vertex}=[inner sep=0mm, minimum size=1mm, shape=circle, draw=black, fill=black]
\tikzstyle{vertex set}=[inner sep=0mm, minimum size=1mm, shape=circle, draw=black, fill=white, font={\footnotesize\boldmath}]
\tikzstyle{wide point}=[fill=white, draw, shape=isosceles triangle, shape border rotate=-90, isosceles triangle stretches=true, inner sep=0pt, minimum width=1.5cm, minimum height=6.12mm, yshift=-0.0mm]
\tikzstyle{medium gray box}=[semilarge box, fill={rgb,255: red,180; green,180; blue,180}]
\tikzstyle{small box}=[rectangle, fill=white, draw, minimum height=5mm, yshift=-0.5mm, minimum width=5mm, font={\small}]
\tikzstyle{small gray box}=[small box, fill={rgb,255: red,180; green,180; blue,180}]
\tikzstyle{medium box}=[rectangle, fill=white, draw, minimum height=10mm, yshift=-0.5mm, minimum width=8mm, font={\small}, tikzit shape=rectangle]
\tikzstyle{ddot}=[line width=1.6pt, inner sep=0mm, minimum width=2.5mm, minimum height=2.5mm, draw, shape=circle]
\tikzstyle{dd white}=[ddot, fill=white, tikzit draw=green]
\tikzstyle{dd white phase}=[white phase dot, line width=1.6pt, tikzit draw=yellow]
\tikzstyle{dd gray}=[ddot, fill={rgb,255: red,180; green,180; blue,180}, tikzit draw=green]
\tikzstyle{dd gray phase}=[gray phase dot, line width=1.6pt, tikzit draw=yellow]
\tikzstyle{empty diagram}=[draw={gray!40!white}, dashed, shape=rectangle, minimum width=1cm, minimum height=1cm]
\tikzstyle{empty diagram small}=[draw={gray!50!white}, dashed, shape=rectangle, minimum width=0.6cm, minimum height=0.5cm]
\tikzstyle{white dot}=[Z dot, tikzit fill=green]
\tikzstyle{white phase dot}=[Z phase dot, tikzit shape=circle, tikzit fill=green, tikzit draw=blue]
\tikzstyle{gray dot}=[X dot, tikzit fill=red]
\tikzstyle{gray phase dot}=[X phase dot, tikzit shape=circle, tikzit draw=blue, tikzit fill=red]
\tikzstyle{simple}=[-]
\tikzstyle{hadamard edge}=[-, dashed, dash pattern=on 2pt off 1pt, thick, draw=blue]
\tikzstyle{gray}=[-, draw={blue!60!white}, tikzit draw=blue]
\tikzstyle{blue}=[-, draw={blue!60!white}, tikzit draw=blue]
\tikzstyle{brace edge}=[-, tikzit draw=blue, decorate, decoration={brace,amplitude=1mm,raise=-1mm}]
\tikzstyle{diredge}=[->]
\tikzstyle{not edge}=[-, dashed, dash pattern=on 2pt off 1.5pt, thick, draw={rgb,255: red,255; green,68; blue,68}]
\tikzstyle{double edge}=[-, double, shorten <=-1mm, shorten >=-1mm, double distance=2pt]
\tikzstyle{boldedge}=[-, line width=1.6pt, shorten <=-0.17mm, shorten >=-0.17mm, tikzit draw=blue]
\newcommand\etc{etc\@ifnextchar.{}{.\@}\xspace}
\newcommand{\norm}[1]{\ensuremath{\lVert#1\rVert}}
\newcommand{\CC}{\mathbb{C}}
\newcommand{\R}{\mathbb{R}}
\newcommand{\Z}{\mathbb{Z}}
\newcommand{\B}{\mathbb{F}_2}
\begin{document}
\title{\texorpdfstring{\LARGE Optimal compilation of parametrised quantum circuits}{Optimal compilation of parametrised quantum circuits}}
\date{July 21st 2025}

\author{John van de Wetering}
\email{john@vdwetering.name}
\homepage{http://vdwetering.name}
\affiliation{University of Amsterdam}

\author{Richie Yeung}
\email{richie.yeung@cs.ox.ac.uk}
\affiliation{University of Oxford}
\affiliation{Quantinuum}

\author{Tuomas Laakkonen}
\email{tuomas.laakkonen@quantinuum.com}
\affiliation{Quantinuum}

\author{Aleks Kissinger}
\email{aleks.kissinger@cs.ox.ac.uk}
\affiliation{University of Oxford}

\begin{abstract}
  Parametrised quantum circuits contain phase gates whose phase is determined by a classical algorithm prior to running the circuit on a quantum device. Such circuits are used in variational algorithms like QAOA and VQE. In order for these algorithms to be as efficient as possible it is important that we use the fewest number of parameters. We show that, while the general problem of minimising the number of parameters is NP-hard, when we restrict to circuits that are Clifford apart from parametrised phase gates and where each parameter is used just once, we \emph{can} efficiently find the optimal parameter count. We show that when parameter transformations are required to be sufficiently well-behaved, the only rewrites that reduce parameters correspond to simple `fusions'. Using this we find that a previous circuit optimisation strategy by some of the authors [Kissinger, van de Wetering. PRA (2019)] finds the optimal number of parameters.
  Our proof uses the ZX-calculus. We also prove that the standard rewrite rules of the ZX-calculus suffice to prove any equality between parametrised Clifford circuits.
\end{abstract} 

\maketitle

\section{Introduction}

A quantum circuit is built out of small unitary gates that together make it possible to perform an arbitrary quantum computation. In a \emph{parametrised} quantum circuit, we allow certain quantum gates to be specified by a classical parameter that is determined before running the circuit on a quantum device. Usually parametrised gates are either phase gates or controlled-phase gates, and the parameter, a real number, specifies the phase to be applied.
Parametrised quantum circuits are an increasingly important construction for quantum algorithms, especially for near-term applications. For instance, variational algorithms such as QAOA \cite{farhi2014quantum} and VQE \cite{peruzzo2014variational} use a feedback loop between a classical side and a quantum side where the parameters are updated by a classical optimisation procedure, based on measurement outcomes of the quantum device.
Each training step of a typical optimisation procedure involves estimating the gradient of the cost function with respect to each parameter (e.g. using the parameter-shift rule \cite{schuld2019evaluating}).
In order to be as efficient as possible with our resources we should hence make sure that we are not using superfluous parameters. We then would like some classical optimisation algorithm for parametrised quantum circuits that reduces the circuit to a form that has the minimal number of parameters, while still being able to express the same set of unitaries.

Parameter optimisation is also relevant in the field of measurement-based quantum computation (MBQC)~\cite{MBQC1,MBQC2}. In most literature on MBQC the measurement patterns have to be deterministic regardless of the chosen measurement angles~\cite{GFlow,DP2,wetering-gflow}. We can hence view this as a computation parametrised by those measurement angles. In this setting minimising the number of parameters corresponds to minimising the number of measurements needed~\cite{wetering-gflow}, while still preserving deterministic realisability of the pattern.

There are many results in general circuit optimisation, that try to simplify the presentation of quantum circuits in terms of total number or gates or circuit depth, but also in trying to reduce the count of a specific gate, like the number of CNOTs or T gates. Several of these techniques only include special-case behaviour for phase gates that are Clifford, like the S or Z gate, but not for other types of phases~\cite{nam2018automated,cliffsimp,kissinger2019tcount,meet-in-the-middle2013,zhang2019optimizing}. This means that these techniques also apply to parametrised phase gates, as we can then just treat these as `black-box' non-Clifford phase gates. 

There are only a small number of rewrites known that we can do with such black-box phase gates. Most of these correspond to a `fusing' of two parameters.  The simplest case is when we have two parametrised phase gates next to each other on the same qubit:
\begin{equation}\label{eq:circuit-phase-fuse}
  \vcenter{\hbox{\includegraphics[height=1.4\baselineskip]{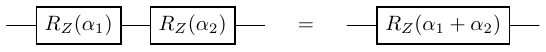}}}
\end{equation}
We see that the two different parameters $\alpha_1$ and $\alpha_2$ are combined into one gate, so that we can build the same set of unitaries using a single parameter $\alpha'$ by making the identification $\alpha' := \alpha_1+\alpha_2$.
More complicated versions of the same basic idea can be made by using the sum-of-paths approach~\cite{AmyVerification}, which allows us to fuse phases that apply to the same parity of qubits~\cite{meet-in-the-middle2013}, or by compiling the circuit into a series of Pauli exponentials and exploiting the Pauli commutation relations~\cite{kissinger2019tcount,zhang2019optimizing}. We will refer to these techniques collectively as \emph{phase folding}.

There are also more complicated rewrites that can in principle be performed on parametrised phase gates, for instance by exploiting the different Euler angle decompositions of a single-qubit unitary:
\begin{equation}\label{eq:generalised-euler}
  \vcenter{\hbox{\includegraphics[height=1.4\baselineskip]{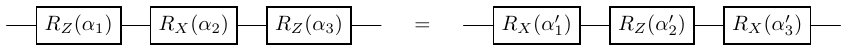}}}
\end{equation}
Here the parameters on the right depend on those on the left by some trigonometric relations,
and are in particular discontinuous as a function of $(\alpha_1, \alpha_2, \alpha_3)$. Hence, when the use-case is for instance QAOA, this might not be desirable to apply, as it transforms the parameter space in pathological ways.

All this then raises a number of questions:
\begin{itemize}
  \item What is the right notion of equivalence of parametrised quantum circuits?
  \item What are the possible rewrites we can do to transform parametrised quantum circuits while preserving equivalence?
  \item Is there an efficient algorithm to find an equivalent parametrised quantum circuit that uses the minimal number of parameters?
\end{itemize}
In this work, we answer each of these questions in the case of parametrised quantum circuits without repeated parameters.
Regarding the first point, we show that a broad set of parameter transformations (analytic functions on the unit circle) actually already forces the relations between parameters in equivalent quantum circuits to be given by simple additive relations that correspond to just `fusions' of parameters. 
We answer the second point by finding that the Clifford rewrite rules of the ZX-calculus~\cite{BackensCompleteness} suffice to prove any equality between parametrised Clifford circuits, under the condition that each parameter occurs uniquely.

Our main result is an answer to the third question: we find that an existing optimisation approach, previous work of some of the authors~\cite{kissinger2019tcount}, finds the optimal number of parameters, under the condition that every parameter in the circuit occurs on a unique gate. More formally we show the following:
\begin{theorem}
  Given a parametrised circuit which consists of Clifford gates and parametrised $Z$ phase gates each of which is parametrised by a unique parameter, we can efficiently find an equivalent parametrised circuit with an optimal number of parameters. Furthermore, these new parameters correspond to sums and differences of the original parameters, and the algorithm for finding the circuit is that described in~\cite{kissinger2019tcount}.
\end{theorem}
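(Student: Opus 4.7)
The plan is to work entirely within the ZX-calculus and reduce the theorem to a matching upper and lower bound. First, I would translate a parametrised Clifford circuit into a ZX-diagram in which every parametrised $Z$-phase gate becomes a Z-spider carrying a unique parameter $\alpha_i$. By the completeness result for parametrised Clifford circuits advertised in the abstract, two such circuits are equivalent if and only if their ZX-diagrams are related by the Clifford rewrite rules together with spider fusion, which acts on parameters by addition and subtraction. This gives a purely diagrammatic handle on parameter identifications.

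For the upper bound, I would invoke the algorithm of \cite{kissinger2019tcount} directly: it applies local complementation and pivoting until every non-Clifford spider has been extracted as a \emph{phase gadget} attached to the Clifford (graph-state) core, indexed by the subset of vertices to which it is connected. Whenever two phase gadgets share the same support, their parameters are fused into a single parameter of the form $\alpha_i \pm \alpha_j$. On termination the diagram carries one parameter per distinct gadget support; call the resulting count $k$. The corresponding circuit, read off in the usual way, achieves this parameter count and each new parameter is a signed sum of the originals, as required by the second sentence of the theorem.

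For the lower bound, suppose some equivalent parametrised Clifford circuit uses only $k' < k$ parameters $\beta_1,\dots,\beta_{k'}$, each an analytic function of $\alpha_1,\dots,\alpha_n$ on the torus. By completeness, the corresponding ZX-diagrams are connected by Clifford rewrites plus spider fusion, so each $\beta_j$ is determined from the $\alpha_i$ by a sequence of additive combinations; this is exactly the ``well-behaved parameter transformation'' situation, which the earlier results in the paper show must reduce to simple fusions. Hence $k'$ lower-bounds the number of distinct fusion equivalence classes of the $\alpha_i$ under Clifford rewrites. But the algorithm's output already realises maximal fusion: each gadget support is canonical in the sense that two parameters land on the same support iff they can be fused at all. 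This forces $k' \geq k$ and closes the argument.

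The main obstacle will be showing that the multiset of gadget supports produced by the algorithm is a genuine invariant of the equivalence class, rather than an artefact of the rewriting choices. Concretely, I need to check that two different termination states of the phase-folding procedure always produce the same collection of supports up to the fusion relation, i.e.\ a confluence-type property for the interaction of local complementation, pivoting and gadget extraction. Once this invariance is established, combining it with the fusion-only characterisation of parameter transformations yields both bounds simultaneously, and polynomial-time efficiency is inherited from the graph-theoretic operations used in \cite{kissinger2019tcount}.
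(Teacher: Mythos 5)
Your upper bound is fine, but the lower bound has two genuine gaps, and the obstacle you identify at the end is not the right one. First, your key step asserts that completeness forces the parameters $\beta_j$ of any smaller equivalent circuit to be ``additive combinations'' of the $\alpha_i$ realised by spider fusion. This misapplies completeness: the completeness results (Proposition~\ref{prop:parameter-equality}, Theorem~\ref{thm:completeness-scalars}) relate two diagrams evaluated at the \emph{same} parameter values; they say nothing about the structure of a reduction $D_1[\vec\alpha] = \lambda(\vec\alpha)\,D_2[f(\vec\alpha)]$ with a nontrivial parameter map $f$. The analyticity result only gets you to $f(\vec\alpha) = M\vec\alpha + \vec c$ with $M$ an integer matrix; to go further one needs (a) the restriction to \emph{parsimonious} reductions (each column of $M$ has at most one non-zero entry) --- an assumption the theorem genuinely requires, and which you never invoke --- (b) non-triviality of circuit parameters (Proposition~\ref{prop:circ-nontrivial}) to force the entries of $M$ into $\{0,\pm 1\}$ (Lemma~\ref{lem:params-non-duplicate}), and (c) Lemma~\ref{lem:P-construction}, which converts the parameter map into an explicit ZX-map $\hat P$ with $D_1' = \lambda' D_2' \circ \hat P$. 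Without this chain you have no diagrammatic object to reason about, so the claim that $k'$ bounds ``fusion equivalence classes'' is unsupported.

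Second, your proposed resolution --- proving that the multiset of gadget supports is a confluence-type invariant of the rewriting --- is both unproven and, as far as the paper indicates, false as stated: different choices of local complementations and pivots lead to \emph{different} final diagrams, so the supports are not canonical. The paper avoids confluence entirely. Instead it proves that \emph{any} terminal diagram $D_1$ of the algorithm is \emph{minimal}, by contradiction: a further fusion would force $\pm Z_iZ_j$ to stabilise the Clifford core $D_1'$, and a stabiliser analysis of graph states with local Cliffords (Lemma~\ref{lem:zz-redex}) shows this happens only if the two parameter vertices are either disconnected from the rest (making the parameters trivial, contradicting Proposition~\ref{prop:circ-nontrivial}) or are phase gadgets with identical neighbourhoods (which the algorithm would already have fused). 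This stabiliser argument is entirely absent from your proposal. Finally, minimality alone does not give optimality: one still needs Corollary~\ref{cor:minimal-optimal}, which composes parsimonious reductions, constructs a right-inverse of the reduction matrix, and uses a rank bound to show a minimal reduction cannot be beaten by some other reduction path. Your invariance claim was meant to deliver both minimality and optimality at once, but since the invariance fails, both halves need the paper's separate arguments.
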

Note that if we drop the requirement here on the non-parametrised gates being Clifford that the problem likely no longer has an efficient solution: we show that when Clifford+T gates are allowed, parameter optimisation is NP-hard. We conjecture that the same is true when parameters are allowed to be used multiple times on different gates, and in fact we show this is the case for optimising \emph{post-selected} quantum circuits with repeated parameters.

\section{Parametrised circuits}

We will consider a parametrised quantum circuit to be a quantum circuit consisting of gates from a discrete set, together with \emph{parametrised phase gates} $Z[\alpha]$. Here $\alpha\in \R$ is a classical parameter that has to be determined prior to running the quantum circuit. We consider such a parameter to be a complete unknown which can be in any value in $\R$. A parametrised quantum circuit $C$ depending on a vector of values $\vec \alpha \in \R^k$ can then be viewed as a map from parameter space $\R^k$ to the space of unitaries $\mathcal{U}$ as $C: \R^k \to \mathcal{U}$. A specific instantiation of $C$, which we will write as $C[\vec \alpha]$, is then just a regular quantum circuit where the phases have been filled in.

In order to consider optimising parametrised quantum circuits, we first need to discuss what it means for parametrised circuits to be equal. 
\begin{definition}\label{def:reduction}
  We say a circuit $C_1$ with parameter space $\R^k$ \emph{reduces} to $C_2$ with parameter space $\R^l$ when there exists a function $f:\R^k\to \R^l$ such that $C_1[\vec \alpha] = \lambda(\vec \alpha) C_2[f(\vec \alpha)]$ for all $\vec \alpha\in \R^k$ where $\lambda: \R^k\to \CC\setminus \{0\}$ is a function representing a global scalar $\lambda(\vec \alpha)$ that may depend on the parameters. 
\end{definition}
The notion of reduction will be sufficient for us, but a natural stronger property to consider would be \emph{equivalence}, which would require a reduction to exist in both directions, essentially stating that both circuits are equally expressive.

We can then define the main problem we will study in this paper:
\begin{definition}
  \textbf{Parameter optimisation}: Given a parametrised quantum circuit $C_1$ with parameter space $\R^k$, find a parametrised quantum circuit $C_2$ with parameter space $\R^l$ that it reduces to, such that $l$ is \emph{minimal}.
\end{definition}

Without any further restriction on $C_1$, $C_2$ and $f$ this problem is likely to be hard, and so we don't expect an efficient solution. In particular, if the discrete gate set we are allowed to use is approximately universal, then parameter optimisation is likely to be hard. Consider for instance the following result regarding parametrised circuits with Clifford+T gates.
\begin{proposition}\label{prop:hardness-clifford-T}
  Parameter optimisation when the circuits are allowed to contain Clifford+T gates is NP-hard.
\end{proposition}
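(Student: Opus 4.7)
The plan is a polynomial-time reduction from a known NP-hard problem to parameter optimisation. A convenient source of hardness is Boolean satisfiability: any classical circuit computing a Boolean function $f:\{0,1\}^n\to\{0,1\}$ compiles into a reversible Toffoli circuit with ancillas, which in turn decomposes into Clifford+T. In particular, deciding whether a Clifford+T unitary $U$ is proportional to the identity is NP-hard, since one can arrange $U\propto I$ to be equivalent to the (un)satisfiability of the encoded formula (or, more bluntly, one can reduce from equivalence of Clifford+T circuits, which is coNP-hard for the same reason).

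From such a $U$, I would build a short parametrised Clifford+T circuit whose minimum parameter count differs between the two cases. A natural candidate is $C[\alpha_1,\alpha_2] = Z_i[\alpha_1] \cdot U \cdot Z_i[\alpha_2] \cdot U^{\dagger}$ for a suitably chosen qubit $i$, together with the unused ancillas used by $U$. When $U\propto I$ this collapses to $Z_i[\alpha_1+\alpha_2]$ up to scalar, and hence reduces to a one-parameter circuit via the phase-folding of equation~\eqref{eq:circuit-phase-fuse}. One then tries to show that for any non-trivial $U$ the circuit genuinely needs two parameters, so that parameter optimisation decides the NP-hard problem. (If this exact construction proves too brittle, a backup is to append a second pair $Z_j[\alpha_2] \cdot U \cdot Z_j[\alpha_2] \cdot U^\dagger$ on a different qubit and count the resulting optimum rather than a $0/1$ gap.)

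The main obstacle, and the delicate part of the proof, is the converse: one must rule out \emph{every} reduction $f:\R^2\to\R$ permitted by Definition~\ref{def:reduction}, including wildly discontinuous ones and those exploiting Euler-angle identities like equation~\eqref{eq:generalised-euler}. Because $f$ and the parameter-dependent scalar $\lambda(\vec\alpha)$ are allowed to be arbitrary functions, a dimension-counting or differential argument alone does not suffice. A natural route is to show that when $U\not\propto I$ the map $(\alpha_1,\alpha_2)\mapsto C[\alpha_1,\alpha_2]$ has a genuinely two-dimensional image in the projective unitary group $\mathcal{U}/\CC^\times$: for instance, by exhibiting pairs $(\alpha_1,\alpha_2)$ and $(\alpha_1',\alpha_2')$ whose outputs are not projectively related and then showing that no one-parameter family of Clifford+T-plus-phase circuits can pass through both in the required way. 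Once this image-size obstruction is in place, the reduction is complete in polynomial time, and the NP-hardness follows.
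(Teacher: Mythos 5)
Your overall strategy (reduce from SAT by compiling a Boolean function into a Clifford+T reversible oracle and sandwiching it between two parametrised phase gates) is the same as the paper's, and your circuit $Z_i[\alpha_1]\cdot U\cdot Z_i[\alpha_2]\cdot U^\dagger$ has essentially the same shape as the paper's circuit. However, the dichotomy you hang the reduction on --- one parameter suffices if and only if $U\propto I$ --- is false, and this is a genuine gap. A non-identity $U$ can conjugate $Z_i$ back into the group generated by $Z_i$ and scalars. Most concretely, take $U=U_f$ for the always-true function $f\equiv 1$: then $U_f$ is a Pauli $X$ on the target qubit, so $U_f\not\propto I$, yet
\[
Z[\alpha_1]\, X\, Z[\alpha_2]\, X \;=\; e^{i\alpha_2}\, Z[\alpha_1-\alpha_2],
\]
which reduces to a single parameter, since the parameter-dependent global scalar $e^{i\alpha_2}$ is explicitly allowed by Definition~\ref{def:reduction}. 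The same collapse occurs whenever $U$ commutes with $Z_i$, for example when $U$ acts nontrivially only on other qubits. So ``minimal parameter count $=1$'' cannot decide ``$U\propto I$'', and the reduction as you state it does not go through; your backup construction with a second conjugated pair has the same problem. The correct criterion, which is what the paper uses, is constancy of $f$ rather than triviality of $U_f$: with $U=U_f$ the circuit implements $\ket{\vec x,y}\mapsto e^{i\alpha_1 y+i\alpha_2(f(\vec x)\oplus y)}\ket{\vec x,y}$, and the two parameters fuse if and only if $f$ is \emph{never or always} satisfiable --- the $f\equiv 1$ case being exactly the fusing case your criterion misses. The reduction to SAT then needs one step of post-processing: if the optimiser reports two parameters, $f$ is satisfiable; if it reports one, evaluate $f(0\cdots 0)$ to distinguish always-satisfiable from unsatisfiable.

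On what you call the main obstacle: you are right that the delicate direction is ruling out \emph{every} reduction function and scalar permitted by Definition~\ref{def:reduction}, and your sketch is the right kind of argument --- when $f$ is non-constant the map $(\alpha_1,\alpha_2)\mapsto C[\alpha_1,\alpha_2]$ is injective into the projective unitaries (its image is a $2$-torus, since $\alpha_1$, $\alpha_2$ and $\alpha_1+\alpha_2$ appear as relative phases between basis states), whereas the entries of any one-parameter circuit are polynomials in $e^{i\gamma}$, so its projective image is at most one-dimensional and cannot contain a $2$-torus. The paper itself dismisses this direction as ``clear'', so this is not where your proposal falls short relative to the paper; the fatal issue is the incorrect $U\propto I$ dichotomy above, which must be replaced by constancy of $f$ together with the post-processing step.
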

\begin{proof}
  We use an argument similar to~\cite{wetering2023optimising}. When we allow T gates in our circuit we can construct classical oracles $U_f\ket{\vec x, y} = \ket{\vec x, f(\vec x)\oplus y}$ for a Boolean function $f:\{0,1\}^n \to \{0,1\}$. Using parametrised phase gates $Z(\alpha)$ and $Z(\beta)$ we can then build a circuit implementing the diagonal unitary $\ket{\vec x, y} \mapsto e^{i\alpha y + i\beta (f(\vec x)\oplus y)}\ket{\vec x,y}$, by first applying $Z(\alpha)$ on $\ket{y}$, then computing with $U_f$, applying $Z(\beta)$, and uncomputing $U_f$. In this case it is clear that the parameters $\alpha$ and $\beta$ can be fused if and only if $f$ is never satisfiable or always satisfiable. Hence, an oracle for finding the minimal number of parameters for Clifford+T circuits with parametrised phase gates would allow us to solve the NP-complete SAT problem with a little bit of post-processing.
\end{proof}

For this reason we will restrict to just Clifford gates as our allowed discrete gates. But even with just Clifford gates and parametrised phase gates we need to be careful to not make the problem too hard. For instance, if we were to allow the mapping between the parameter space and the concrete phase parameters that get used in the phase gates to be non-identity, then we also quickly run into problems. Consider for instance two parametrised phase gates on the same qubit, which depend on $\alpha$ via $Z(f_1(\alpha))$ and $Z(f_2(\alpha))$. Then if we take $f_1(\alpha) = \alpha$ and $f_2(\alpha) = -\alpha + \frac\pi4$, then these two phase gates fuse (Eq.~\eqref{eq:circuit-phase-fuse}) and we get a $Z(\frac\pi4) = T$ gate, so that optimising such a circuit is again NP-hard.

Hence, we work with circuits consisting of Clifford gates and parametrised phase gates $Z(\alpha)$ where $\alpha$ exactly corresponds to the value of our parameter space (we could have also allowed for additions of multiplies of $\frac\pi2$ in the phase gate, but these could just as well be part of the discrete gate set, so we don't allow that without loss of generality). In this setting we are still allowed to reuse parameters into multiple phase gates in the circuit. This allows us to for instance construct parametrised controlled-phase gates. We suspect however that this problem is also already hard.
\begin{quote}
  \textbf{Conjecture}: The parameter optimisation problem for Clifford circuits with parametrised phase gates where parameters are allowed to be used in multiple gates is NP-hard.
\end{quote}
While we have not managed to prove this, we have proven this result in the setting of post-selected quantum circuits:
\begin{proposition}
  Parameter optimisation for post-selected Clifford quantum circuits with repeated parametrised phase gates is NP-hard.
\end{proposition}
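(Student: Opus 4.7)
My approach would be to mimic the reduction from SAT used in the proof of Proposition~\ref{prop:hardness-clifford-T}, replacing the Clifford+T implementation of the Boolean oracle $U_f$ by a gadget built from post-selected Clifford operations together with shared parametrised phase gates. The point is that post-selection on Pauli measurement outcomes, combined with the freedom to reuse a single parameter $\alpha$ across many instances of $Z(\alpha)$, should provide enough power to implement an arbitrary reversible Boolean oracle up to a uniform global scalar, which is precisely what Definition~\ref{def:reduction} allows us to discard.

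The first stage is a gadget lemma: for any polynomial-size classical circuit for $f\colon \{0,1\}^n \to \{0,1\}$, I would construct a polynomial-size post-selected Clifford circuit $G_f$ using only the shared parameter $\alpha$ whose action on the computational basis is $\ket{\vec x, y} \mapsto \lambda(\alpha)\, \ket{\vec x, y \oplus f(\vec x)}$ for some nonzero scalar $\lambda(\alpha)$ independent of $\vec x$ and $y$. The natural construction is to emulate a Clifford+T realisation of $U_f$ by replacing each T gate with a teleportation gadget that (i) prepares the magic state $Z(\alpha)\ket{+}$, (ii) teleports the gate through a post-selected Bell measurement, and (iii) uses further post-selected Pauli measurements to cancel the $\alpha$-dependent byproduct conditional phases on the computational-basis subspace, leaving only an overall $\vec x$-independent factor.

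With $G_\phi$ in hand, the second stage mirrors Proposition~\ref{prop:hardness-clifford-T}: I would introduce two fresh parameters $\beta, \gamma$ and form the circuit
\begin{equation*}
  C_\phi \;=\; G_\phi^\dagger \cdot Z(\gamma)_y \cdot G_\phi \cdot Z(\beta)_y,
\end{equation*}
which realises, up to an $\alpha$-dependent scalar, the diagonal unitary $\ket{\vec x, y} \mapsto e^{i\beta y + i\gamma (\phi(\vec x) \oplus y)} \ket{\vec x, y}$. The fusion analysis from Proposition~\ref{prop:hardness-clifford-T} then applies verbatim: $\beta$ and $\gamma$ combine into a single parameter if and only if $\phi$ is a constant function, while $\alpha$ remains algebraically independent of $\beta$ and $\gamma$ under any reduction since it only affects the global scalar. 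Since deciding whether a Boolean formula is constant is co-NP-hard (and trivially reduces from SAT modulo a tautology check), this yields a polynomial-time reduction from SAT.

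The main obstacle is the gadget lemma. The subtlety is that the shared-parameter constraint forces every magic state in the oracle emulation to carry the same angle $\alpha$, so a direct substitution of $Z(\alpha)$ for T in a Clifford+T circuit for $U_f$ does not by itself give an $\vec x$-independent byproduct. I expect that cancelling the $\alpha$-conditional phases requires additional rounds of post-selection on Pauli outcomes whose net effect on the computational-basis subspace is to fold all residual $\alpha$-dependence into a uniform global scalar while preserving the reversible action on the computational basis. Arranging this cancellation with polynomial overhead is the technical heart of the proof.
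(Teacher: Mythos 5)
Your high-level strategy (reduce from SAT via the diagonal-unitary trick of Proposition~\ref{prop:hardness-clifford-T}, after first building a Boolean oracle out of post-selected Cliffords and a single repeated parameter) matches the paper's. But there is a genuine gap, and part of your plan is not merely unproven but false as stated. Your gadget lemma asks for a post-selected Clifford circuit $G_f$ with one repeated parameter $\alpha$ implementing $\ket{\vec x,y}\mapsto \lambda(\alpha)\ket{\vec x,y\oplus f(\vec x)}$ with $\lambda(\alpha)$ \emph{nonzero} for all $\alpha$. This is impossible: when $\alpha$ is a multiple of $\frac\pi2$, every gate and post-selection in $G_f$ is Clifford, so $G_f[\alpha]$ is a stabiliser map, and for non-affine $f$ no nonzero multiple of $U_f$ is a stabiliser map. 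Hence $\lambda$ must vanish at all Clifford angles. The paper makes exactly this point at the start of its appendix, and it is why the paper first \emph{relaxes} the notion of reduction for post-selected circuits to allow $\lambda(\vec\alpha)=0$; without that relaxation your second stage (the claim that the fusion analysis of Proposition~\ref{prop:hardness-clifford-T} "applies verbatim") also fails, since the reduction one actually obtains, $C_f[\alpha,\beta,\gamma]=\lambda'(\alpha)\,C'[\beta+\gamma]$, has $\lambda'(\alpha)=0$ at Clifford angles.

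Second, even in the repaired form (with $\lambda$ permitted to vanish at Clifford angles), you leave the gadget construction entirely open: you call it "the technical heart" and defer it, so the proposal is not a proof. Moreover, the specific route you sketch --- teleporting $Z(\alpha)\ket{+}$ magic states and cancelling byproducts with further post-selection --- is unlikely to be repairable. Gate teleportation with post-selected $Z(\alpha)$ resource states only ever produces phases lying in $\mathbb{Z}\alpha+\frac\pi2\mathbb{Z}$ on the data qubits, so you stay inside the Clifford-plus-$Z(\alpha)$-phase fragment, and no scheme for cancelling conditional byproducts can manufacture the missing non-linearity; what is required is an algebraic identity, not a correction protocol. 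The paper supplies that identity using the ZH-calculus: it represents the AND gate as an H-box, Fourier-decomposes phased H-boxes into phase gadgets with phases $\pm\alpha$, and uses H-box fusion together with supplementarity to show that a specific post-selected arrangement of repeated-$\alpha$ phase gadgets equals the AND gate up to the scalar $(1-e^{i4\alpha})$, which vanishes precisely at $\alpha=k\frac\pi2$ (Eqs.~\eqref{eq:tuomas-proof} and~\eqref{eq:AND-from-parameters}). That construction is the content your proposal is missing.
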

The proof can be found in Appendix~\ref{app:repeated-params}. While it might seem obvious that a problem involving post-selected circuits would be hard, we point out that our main result about optimally minimising parameter count for Clifford+parameter circuits where the parameters occur uniquely \emph{also} applies to post-selected circuits (see Remark~\ref{rem:post-selection}). Hence, the optimisation problem is efficiently solvable for post-selected circuits when each parameter occurs uniquely, but it is NP-hard when the parameters are allowed to repeat.

While our proof of hardness does not extend to unitary circuits, we still expect the problem to be hard. We note for instance that for unitary circuits with repeated parametrised phase gadgets there are additional rewrites of the circuits we should consider that do not play a role in the case where every parameter is used once, such as: 
\begin{equation}
  \tikzfig{controlled-phase-identity}
\end{equation}
I.e.~this says that a controlled Z-phase gate commutes with an anti-controlled X-phase gate. This rewrite rule was needed to prove completeness for universal ZX-diagrams in~\cite{SimonCompleteness}, and hence seems to be a `harder' rule then the ones we need. This rewrite rule is not covered by our optimality result, since the ZX-diagram of a controlled-phase gate requires the parameter to be used three times~\cite{CD1}.

This discussion on the hardness of the problem then brings us to the types of circuits we will consider in this paper.
\begin{definition}
  A \emph{parametrised Clifford circuit} (PCC) is a parametrised circuit consisting of Clifford gates and $Z[\alpha]$ parametrised phase gates where each parameter is used in at most one gate.
\end{definition}
One way to interpret this condition on each parameter occurring uniquely is that all the parametrised phase gates are truly `black boxes': we know nothing about their values, not even whether two of the parametrised phase gates are implementing the same phase. In any case it is clear that any circuit with repeated parameters reduces to one where the parameters occur uniquely: for every repeat occurrence we instead introduce a fresh parameter. Note that this setting of unique parameters is also a natural setting to consider when talking about determinism in measurement-based quantum computing; see Section~\ref{sec:MBQC}.

We will also restrict the types of functions $f$ that may appear in a reduction of a parametrised Clifford circuit. Instead of treating the parameters as real numbers, it makes sense to treat them as phases modulo $2\pi$, and hence as elements of the unit circle $S^1$. It then additionally makes sense to require the parameter map $f: (S^1)^k\to (S^1)^l$ to be continuous: many use-cases of parametrised quantum circuits require us to take derivatives and slightly change the parameters bit by bit. A reduction then shouldn't break these properties. In fact, we will go beyond that and assume that the parameter map is \emph{smooth}: infinitely differentiable. Since there also exist pathological smooth functions, we will make our final assumption, which is that parameter maps are \emph{analytic}, meaning they can be written as convergent infinite power series. Since the details of this are a bit technical to formalise, we present them in Appendix~\ref{app:affine-functions}. We just present here the main conclusion.
\begin{proposition}
  Any parameter map $\Phi:\R^k\to \R^l$ consisting of analytic functions on the unit circle (as formalised in Appendix \ref{app:affine-functions}) is equal to $\Phi(\vec \alpha) = M\vec \alpha + \vec c \text{ mod } 2\pi$ where $M$ is a matrix of integers and $\vec c \in \R^l$ is some constant.
\end{proposition}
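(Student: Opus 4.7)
The plan is to reduce to the scalar case, extract an integer matrix via a winding-number argument, and then kill the non-linear remainder using the analyticity hypothesis.

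First, treat each output component $\Phi_i:\R^k\to\R$ independently. Since $\Phi$ represents a map between tori, shifting an input by $2\pi$ must leave the output unchanged modulo $2\pi$: $\Phi_i(\vec\alpha + 2\pi e_j) \equiv \Phi_i(\vec\alpha) \bmod 2\pi$. By continuity, the map $\vec\alpha \mapsto \Phi_i(\vec\alpha + 2\pi e_j) - \Phi_i(\vec\alpha)$ takes values in the discrete set $2\pi\Z$ and is therefore a global constant $2\pi m_{ij}$. Collecting these integers into a matrix $M=(m_{ij})$, the remainder $g_i(\vec\alpha):=\Phi_i(\vec\alpha) - \sum_j m_{ij}\alpha_j$ is $2\pi$-periodic in every variable and hence descends to an analytic function $(S^1)^k\to\R$.

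It remains to show each $g_i$ is a constant modulo $2\pi$. The natural move is to exponentiate: $\psi_i(\vec z):=e^{i g_i(\vec\alpha)}$ with $z_j=e^{i\alpha_j}$ is an analytic map $(S^1)^k\to S^1$ that is null-homotopic by construction. Invoking the formalisation of analyticity from Appendix~\ref{app:affine-functions} to extend $\psi_i$ holomorphically to a suitable subset of $(\CC^*)^k$ with $|\psi_i|=1$ on the torus, one argues via Schwarz reflection in each variable that $\psi_i$ must be a Laurent monomial $c_i\prod_j z_j^{n_{ij}}$ with $c_i\in S^1$ and $n_{ij}\in\Z$. The null-homotopy of $\psi_i$ then forces $n_{ij}=0$, so $\psi_i$ is a constant in $S^1$; by the single-valued lift $g_i$ is itself a real constant $c_i$, yielding $\Phi(\vec\alpha)=M\vec\alpha+\vec c\bmod 2\pi$ as claimed.

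The main obstacle is precisely this last step, where the formalisation in Appendix~\ref{app:affine-functions} does the heavy lifting. Under a naive notion of real-analyticity alone there exist analytic null-homotopic maps $(S^1)^k\to S^1$ that are not constant, \eg $\psi(z)=\exp((z-1/z)/(2i))$ which has unit modulus on $S^1$ but is not a monomial. So the result genuinely hinges on the appendix's analyticity hypothesis being strong enough to exclude such essential singularities at $0$ and $\infty$. Modulo that input, the componentwise reduction, the winding-number extraction, and the Schwarz-reflection identification of unit-modulus analytic functions with monomials are all standard.
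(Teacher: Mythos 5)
Your componentwise reduction, the winding-number extraction of the integer matrix $M$, and the passage to a $2\pi$-periodic analytic remainder $g_i$ are all fine, and they mirror the first half of the paper's argument (the appendix encodes the winding number as the integer offset $f(1,x_2,\ldots)-f(0,x_2,\ldots)=k$). The genuine gap is the final step, which you flag yourself and then wave through: you assume the appendix's analyticity hypothesis licenses a holomorphic extension of $\psi_i$ to $(\CC^*)^k$ that is meromorphic at $0$ and $\infty$, so that Schwarz reflection forces a Laurent monomial. The appendix supplies nothing of the sort. Its formalisation only requires that each component, as a function of each angle separately, be given by a real power series $\sum_m a_m(x-b)^m$ absolutely convergent on an open interval containing $[-1,1]$, with $f(1)-f(0)\in\Z$ and all derivatives matching at the endpoints $0$ and $1$. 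Your own counterexample passes every one of these tests: take $f(x)=\sin(2\pi x)/(2\pi)$, i.e.\ the circle map $\alpha\mapsto \sin\alpha \bmod 2\pi$; it is entire and $1$-periodic, so all the endpoint conditions hold with $k=0$, yet it is not affine. (Minor slip: $\exp((z-1/z)/(2i))$ has modulus $e^{\sin\theta}$ on $S^1$; the function you mean is $\exp((z-1/z)/2)=e^{i\sin\theta}$.) So the input you call ``standard modulo that hypothesis'' is precisely the step with no support, and your proof does not establish the proposition.

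Moreover the gap is not fixable, because the proposition as formalised is false and you have in effect exhibited the counterexample: $\Phi(\alpha)=\alpha+\tfrac12\sin\alpha \bmod 2\pi$ is an analytic circle diffeomorphism satisfying every condition of the appendix and is not of the form $M\vec\alpha+\vec c$. It is worth knowing that the paper's own proof founders at exactly the same point by a different route: after correctly deriving the functional equation $g(x)=g(x-1)+k$ from analyticity, it writes the infinite linear system $\sum_{j\geq l+1}\frac{j!}{(j-l)!}\,a_j(-1)^j=-k\,\delta_{l,0}$ for the Taylor coefficients and asserts, by ``standard induction,'' that the unique solution is $a_1=k$ and $a_j=0$ for $j>1$. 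That assertion is wrong: the system is triangular in the direction that has no terminal equation (every equation involves infinitely many unknowns), so uniqueness does not follow, and indeed the Taylor coefficients of $\sin(2\pi x)/(2\pi)$ form a nonzero solution of the homogeneous system. A correct version of the statement needs a genuinely stronger hypothesis, for instance that each $e^{i\Phi_j}$, viewed as a function of $z_j=e^{i\alpha_j}$, is a Laurent polynomial (equivalently, extends meromorphically to $0$ and $\infty$); under that hypothesis your reflection argument does go through, since a Laurent polynomial of unit modulus on $S^1$ is invertible among Laurent polynomials and hence a monomial.
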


Hence, we see that what a priori seems like quite a general class of transformations actually reduces to something quite simple.

\begin{definition}
  We say a PCC $C_1$ \emph{affinely reduces} to $C_2$ when for all $\vec \alpha$ we have $C_1[\vec \alpha] = \lambda(\vec\alpha) C_2[M \vec \alpha + \vec c]$ for some integer matrix $M$ and constant $\vec c$ and `global phase function' $\lambda(\vec\alpha)\in\mathbb{C}\setminus\{0\}$.
\end{definition}

Note that when we have an affine reduction $C_1[\vec \alpha] = C_2[M \vec \alpha + \vec c]$, that a parameter $\alpha_j$ gets copied to all the parametrised gates in $C_2$ where the $j$th column of $M$ is non-zero. Hence, such an affine reduction can break the condition on having the parameter only appear in one place. For this reason we define a stricter version of reduction.

\begin{definition}
  We say an affine reduction is \emph{parsimonious} when the matrix $M$ appearing in the reduction has at most 1 non-zero element in each column.
\end{definition}

\begin{definition}
  The \textbf{affine parsimonious parameter optimisation problem for Clifford circuits} is to find, given a parametrised Clifford circuit $C_1$, a parametrised Clifford circuit $C_2$, such that $C_1$ parsimoniously affinely reduces to $C_2$ and $C_2$ has a minimal number of parameters amongst those circuits that $C_1$ parsimonious affinely reduces to.
\end{definition}

The main result of this paper is then the following theorem.
\begin{theorem}
  There is an efficient algorithm that constructively solves the affine parsimonious parameter optimisation problem for Clifford circuits.
\end{theorem}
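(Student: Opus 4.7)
The plan is to translate each PCC into a ZX-diagram, where every $Z[\alpha_j]$ gate becomes a single phase spider carrying $\alpha_j$ and the Clifford gates become phase-free ZX-fragments, then run the phase-gadget simplification procedure of~\cite{kissinger2019tcount}, and finally extract back to a parametrised Clifford circuit $C_2$. The algorithm proceeds by alternating spider fusion, Hadamard cancellation, local complementation and pivoting (all Clifford rewrites that leave the parameter vector fixed) with \emph{gadget fusions}, in which two phase gadgets with identical connectivity are merged, recording the sum or difference of their phases in the integer matrix $M$. Because the only rewrites that change the parameter vector are these fusions, and each such fusion moves a parameter into exactly one gate, the result is automatically a parsimonious affine reduction $C_1[\vec\alpha] = \lambda(\vec\alpha)\, C_2[M\vec\alpha + \vec c]$, with $\lambda$ accumulating the scalar factors and $\vec c$ accumulating any $\pi$-phases picked up.

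Correctness of each step is immediate from soundness of the ZX-calculus, and efficiency follows by the usual argument: each simplification strictly decreases a lexicographic measure (number of non-Clifford spiders, then number of edges in the graph-like diagram) polynomially bounded in the size of $C_1$, and each rewrite is itself computable in polynomial time. Circuit extraction from the resulting graph-like ZX-diagram with phase gadgets is standard, and produces a PCC of the same parameter count as there are gadgets.

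For optimality, assume towards a contradiction that $C_1$ parsimoniously affinely reduces to some PCC $C'$ with strictly fewer parameters than $C_2$. Composing reductions yields an equation between parametrised Clifford circuits $C_2$ and $C'$ in which each of $C'$'s parameters still appears at most once. The paper's earlier Clifford-completeness result lifts this equation to a sequence of ZX rewrites, and the structural theorem that the only parameter-reducing rewrites are fusions of phase gadgets implies that $C_2$ must be transformable into $C'$ by Clifford-preserving rewrites followed by further gadget fusions. A contradiction then follows once we establish that the output of the algorithm is \emph{fusion-minimal}: no two of its phase gadgets can be identified, even after any Clifford automorphism of the skeleton.

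The main obstacle is exactly this fusion-minimality claim. Concretely, I would identify each phase gadget with its support as a vector in $\mathbb{F}_2^n$ on the internal wires of the graph-like diagram, and then argue that the rows of the resulting matrix are pairwise distinct and in fact linearly independent over $\mathbb{F}_2$, so that no sequence of local complementations or pivots can turn two of them into duplicates. Unwinding~\cite{kissinger2019tcount}, the simplification halts precisely when no further such identification is possible, and the remaining gadgets together with the Clifford skeleton form a canonical graph-like form. Translating this invariant into the statement ``any parsimonious affine reduction of $C_2$ to a PCC with fewer parameters is impossible'' is where the argument is delicate, since one has to rule out indirect reductions that first pass through a non-parsimonious intermediate diagram; tracking the global scalar $\lambda(\vec\alpha)$ and offset $\vec c$ through these manipulations is a routine but necessary bookkeeping step.
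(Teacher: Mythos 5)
Your first half (translate to ZX, run the simplification of~\cite{kissinger2019tcount}, extract, observe that the induced parameter map is parsimonious affine) matches the paper's algorithm. The gap is in the optimality argument. You invoke ``the structural theorem that the only parameter-reducing rewrites are fusions of phase gadgets'' as if it were an available tool, but no such theorem exists independently --- that statement \emph{is} the theorem being proven, so the appeal is circular. Clifford completeness only gives you \emph{some} sequence of Clifford rewrites between two equal Clifford skeletons; it does not decompose an arbitrary parsimonious affine reduction (a purely semantic condition $C_1[\vec\alpha]=\lambda(\vec\alpha)C_2[M\vec\alpha+\vec c]$) into ``Clifford rewrites followed by gadget fusions.'' The paper bridges semantics to diagrams by a different route: first it shows circuit parameters are non-trivial (Proposition~\ref{prop:circ-nontrivial}) and that $\lambda(\vec\alpha)$ is then necessarily constant (Proposition~\ref{prop:scalar-is-constant}); then that non-triviality forces the integer matrix to have entries only in $\{0,\pm 1\}$ (Lemma~\ref{lem:params-non-duplicate} --- you never rule out a coefficient like $2$, which an arbitrary affine reduction may a priori contain); then it realises the reduction \emph{as a diagram}, $D_1'=\lambda' D_2'\circ\hat P$ (Lemma~\ref{lem:P-construction}). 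Any fusion inside $\hat P$ then yields a weight-two Pauli stabiliser $Z_iZ_j$ of the GSLC diagram $D_1'$, and the stabiliser lemma (Lemma~\ref{lem:zz-redex}) forces either two parameter vertices connected only to each other (trivial parameters, contradiction) or two gadgets with identical neighbourhoods (already fused, contradiction). Finally, ``minimal implies optimal'' is itself not free: it needs the right-inverse and composition-of-reductions argument of Corollary~\ref{cor:minimal-optimal}, which your one-line ``composing reductions'' glosses over.

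Moreover, your concrete plan for fusion-minimality --- that the gadget supports are linearly \emph{independent} over $\mathbb{F}_2$ --- is false. Take $Z_1[\alpha_1]\,Z_2[\alpha_2]\,\mathrm{CNOT}_{12}\,Z_2[\alpha_3]\,\mathrm{CNOT}_{12}$: the algorithm terminates with three parametrised phases on the parities $x_1$, $x_2$ and $x_1\oplus x_2$, whose supports $\{1\},\{2\},\{1,2\}$ are pairwise distinct but linearly dependent, and all three parameters are genuinely necessary (the diagonal unitary depends on three independent angles). So the only invariant the algorithm guarantees is pairwise distinctness of gadget neighbourhoods, and the entire difficulty is proving that this weak invariant already rules out \emph{every} parsimonious affine reduction to fewer parameters --- including reductions not induced by any rewrite sequence you can enumerate. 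That is exactly what the stabiliser argument of Theorem~\ref{thm:main-result} accomplishes and what your proposal is missing.
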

This algorithm is the T-count optimisation algorithm of~\cite{kissinger2019tcount}. In that paper it was already noted that their algorithm can be used to reduce the number of parameters in a circuit. Our results show that it is in fact \emph{optimal} in doing so (given suitable restrictions on the type of reductions that are allowed as discussed above).

Note that the assumption of the reduction being parsimonious seems quite strong. We suspect however that this condition is not needed, and that for any parametrised Clifford circuit $D_1$ for which there is an affine reduction to $D_2$, that there is then also a parsimonious affine reduction to $D_2$. We have however only managed to prove this in certain special cases (see Section~\ref{sec:general-affine}).

\section{The ZX-calculus}

The algorithm that implements the optimal parameter reduction strategy uses the ZX-calculus~\cite{CD1,CD2}, and our proof is also most naturally expressed in the ZX-calculus. We will hence give a brief introduction to the ZX-calculus and the rewrite strategy of~\cite{kissinger2019tcount}. For an in-depth review, see~\cite{vandewetering2020zxcalculus}.

\noindent ZX-diagrams are built out of two types of generators, called \emph{spiders}. We have \emph{Z-spiders}:
\begin{equation}
  \tikzfig{Zsp-a} \ = \ \ketbra{0\cdots 0}{0\cdots 0} + e^{i\alpha} \ketbra{1\cdots 1}{1\cdots 1}
\end{equation}
And \emph{X-spiders}:
\begin{equation}
  \tikzfig{Xsp-a} \ = \ \ketbra{+\cdots +}{+\cdots +} + e^{i\alpha} \ketbra{-\cdots -}{-\cdots -}
\end{equation}
The phase $\alpha\in \R$ can be any real number and we take it modulo $2\pi$. When $\alpha=0$ we don't write it on the spider. Spiders can have any number of inputs, including zero, and any number of outputs, also including zero. In particular, the zero-input zero-output spider corresponds to just a number: $\tikzfig{Zsp-nolegs} = 1 + e^{i\alpha}$. Another relevant special case is the 1-input 1-output X-spider with a $\pi$ phase, which is the Pauli $X$ (the NOT gate).

We can compose spiders together either vertically, corresponding to taking the tensor products of their linear maps, and horizontally, corresponding to regular composition of linear maps. Such a composition of spiders is called a \emph{ZX-diagram}. When the phases on the spiders are allowed to be arbitrary, ZX-diagrams are \emph{universal}, meaning that they can represent any linear map from $\CC^{2^n}$ to $\CC^{2^m}$. If instead we restrict the phases to be multiples of $\frac\pi4$, then the diagrams are universal for linear maps we can construct from Clifford+T circuits including ancilla preparations and postselections. If we further restrict to multiples of $\frac\pi2$, we get Clifford circuits, stabiliser state preparations, and stabiliser projections. In particular, we say a spider is \emph{Clifford} when $\alpha$ is a multiple of $\frac\pi2$.
A ZX diagram is scalar when it has no input or output wires.

The usefulness of ZX-diagrams comes from the set of rewrite rules that we can apply to them that preserve their semantics as linear maps. First of all, the interpretation of the diagram only depends on the connectivity, and we can hence deform the diagram in the plane as we wish. This means we can treat the diagrams as undirected graphs. Second, there is a set of local rewrite rules that preserve the semantics; see Figure~\ref{fig:clifford-rules}. These rules suffice to prove any true equality between Clifford diagrams, and hence are \emph{complete} for the Clifford fragment~\cite{BackensCompleteness}. Note however that these rules do not prove all equalities between general diagrams where the phases are not restricted to multiples of $\frac\pi2$.

\begin{figure}[!tb]
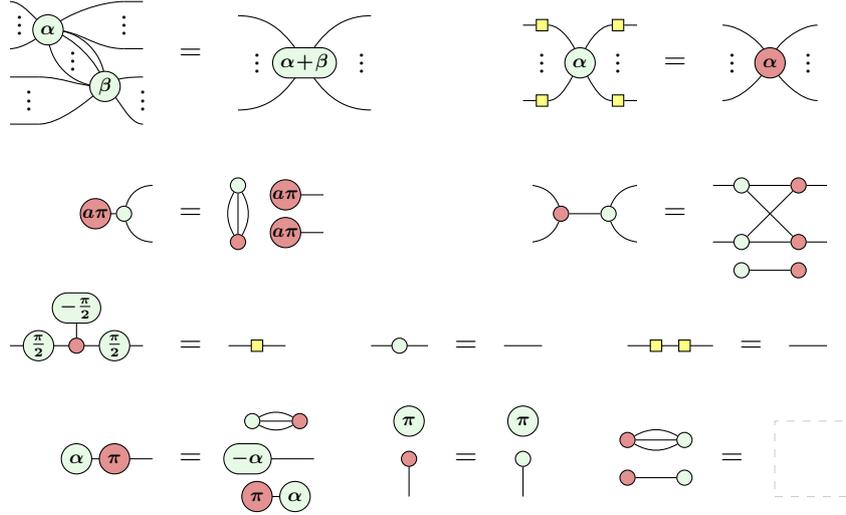

  \ctikzfig{zx-rules-Clifford}
  \caption{A complete set of rules for the Clifford fragment of the ZX-calculus. Here $a\in \{0,1\}$ is a Boolean variable, and the equations involving $\alpha$ or $\beta$ hold for any $\alpha,\beta\in \R$, not just the Clifford angles.}
  \label{fig:clifford-rules}
\end{figure}
 
Here we drew a yellow square for the \emph{Hadamard gate}, for which we can derive the equation on the left in Figure~\ref{fig:clifford-rules} as its definition.

Note that most of these rules only apply when the spiders have specific phases, except for \emph{spider fusion} (top left), \emph{colour change} (top right) and \emph{$\pi$-pushing} (bottom left), which are sound for any value $\alpha,\beta\in \R$. In particular $\pi$-pushing will be relevant to us, so let us note that we can also directly write the scalar diagrams as numbers to present this rule as:
\begin{equation}\label{eq:pi-push}
  \tikzfig{pi-push}
\end{equation}
This just says that $X (\ket{0} + e^{i\alpha} \ket{1}) = e^{i\alpha} (\ket{0} + e^{-i\alpha} \ket{1})$.

Because we can treat ZX-diagrams as undirected graphs and the rewrite rules apply regardless of which wires are inputs or outputs, we will in this paper often work with \emph{states} for convenience: diagrams that have no inputs. This corresponds to using the Choi-Jamio\l{}kowski isomorphism, and in the diagram is represented by bending the input wires to be outputs instead. When we introduce parametrised diagrams in Section~\ref{sec:parameters-ZX} we will usually denote the parameters as being connected to inputs, so that it is a `parametrised state'.

\subsection{Local complementation and pivoting}

For this paper it is important that we can bring Clifford diagrams to particular pseudo-normal forms, so we will review some useful rewriting tools that we need to reduce to these normal forms.

First, instead of working with arbitrary ZX-diagrams, it is often helpful to restrict to \emph{graph-like} ZX-diagrams (which can be done without loss of generality). These are diagrams where all spiders are Z-spiders, and every spider is `maximally fused', meaning that the only connections left are via Hadamard gates~\cite{cliffsimp}. Parallel edges and self-loops can be removed, so that we are left with a simple graph where the vertices are the Z-spiders and the edges correspond to Hadamards, which we will refer to as \emph{Hadamard edges} from now on. Each vertex is additionally labeled by the phase of the spider. For clarity we will write Hadamard edges as blue-dotted wires:
\begin{equation}
  \tikzfig{blue-edge-def}
\end{equation}

The spiders that are connected to an input or output wire of the diagram we will refer to as \emph{boundary} spiders, while the spiders that are only connected to other spiders we will call \emph{internal} spiders. The rewrite strategy of~\cite{cliffsimp} describes how to remove all the internal Clifford spiders.
First, the \emph{local complementation} rewrite rule removes any internal spider with a phase of $\pm\frac\pi2$:
\begin{equation}\label{eq:lc-simp}
  \tikzfig{lc-simp}
\end{equation}
Note that this equation is actually only accurate up to some scalar value, depending on the number of neighbours $n$ and the phase $\pm\frac\pi2$, but as this value is representable by a Clifford diagram it will not be important to us.

The second equation is called the \emph{pivot} rule, and it can remove any pair of connected internal spiders that both have a $0$ or $\pi$ phase:
\begin{equation}\label{eq:pivot-simp}
  \tikzfig{pivot-simp-smaller}
\end{equation}
Hence, if we started with a Clifford diagram, one where all the phases are multiples of $\frac\pi2$, then the only internal spiders left, after removing those with a $\pm\frac\pi2$ phase with Eq.~\eqref{eq:lc-simp} and the remaining ones with a $0$ or $\pi$ phase that are connected, are those that have a $0$ or $\pi$ phase and are not connected to any other internal spider. Assuming for simplicity that the diagram only has outputs and no inputs, we can write such a diagram in two layers, with first the internal spiders, and then the boundary spiders:
\begin{equation}
  \tikzfig{ap-form-ex}
\end{equation}
We say such a diagram is in \emph{affine with phases} form, or AP form. This is because we can see such a diagram as encoding an affine subspace, together with a phase polynomial~\cite{poor2023qupit,mcelvanney2022complete}:
\begin{equation}\label{eq:AP-state-decompose}
   \tikzfig{AP-state-decompose}  
\end{equation}

Using an additional rewrite we can also get rid of the final internal spiders, at the cost of introducing Hadamard edges on the output wires. We call this rewrite rule a \emph{boundary pivot}, as it applies a pivot between an internal spider and a boundary spider:
\begin{equation}\label{eq:pivot-boundary}
    \tikzfig{pivot-boundary-simp}
\end{equation}
This introduces a new element to the diagram that we call a \emph{phase gadget}, that we will have more to say about in the next section. But for now note that if $\gamma=\pm\frac\pi2$, that we can remove it with a local complementation, after which its neighbour will also have a $\pm\frac\pi2$ phase so that it can also be removed. If instead $\gamma$ is $0$ or $\pi$ then we can remove both of these spiders with a regular pivot. We then see that in both cases we can again remove both introduced spiders. Hence, by repeating the boundary pivot we can get rid of the final internal spiders. Hence, when we start with a Clifford diagram, we can efficiently reduce to a pseudo-normal-form where there are no internal spiders, but there may be Hadamard gates on the output wires:
\begin{equation}
  \tikzfig{graph-state-example}
\end{equation}
This is a graph state with local Cliffords (GSLC). But note that these local Cliffords are from a small set. They are either a power of an $S$ gate (a $k\frac\pi2$ phase), or a Hadamard gate, or a $Z$ gate (a $\pi$ phase) followed by a Hadamard (an $S$ gate followed by a Hadamard does not occur, since Hadamards on a boundary only get introduced by boundary pivots).

\subsection{Optimising non-Clifford diagrams}\label{sec:opt-non-clifford}

The above rewrite strategy can also be applied when we have non-Clifford phases in the diagram. In that case local complementation still removes all internal spiders with a $\pm\frac\pi2$ phase, and pivoting still removes all connected internal spiders with a $0$ or $\pi$ phase. If an internal $j\pi$ phased spider is connected to some boundary spider with a Clifford phase then we can also get rid of it with a boundary pivot as described above. However, now we can have internal $0$ or $\pi$ spiders that are connected only to other spiders with a non-Clifford phase. In that case we can't remove the spider, but we can make it part of a \emph{phase gadget}:
\begin{equation}
  \tikzfig{phase-gadget}
\end{equation}
Namely, if an internal $j\pi$-phased spider is connected to a non-Clifford spider, we apply a \emph{gadget pivot}:
\begin{equation}\label{eq:gadget-simp}
  \tikzfig{pivot-gadget-simp}
\end{equation}
If instead it is only connected to boundary non-Clifford spiders, then we apply the boundary pivot of Eq.~\eqref{eq:pivot-boundary}, but now we don't remove the phase gadget. Note that in both cases the phase gadget ends up connected to what the $j\pi$ spider was connected to originally. As we assumed that this spider was only connected to non-Clifford spiders, we see that when we repeat this procedure that phase gadgets never end up connected to each other.

After we have done these operations wherever we can, we see that the only internal spiders have a non-Clifford phase or are part of a phase gadget with a non-Clifford phase. We can then write this new diagram also as a GSLC, but now where some of the outputs are plugged with a non-Clifford phase:
\begin{equation}
  \tikzfig{graph-state-plugged}
\end{equation}
Hence, we see that the phase gadgets correspond to an output where there is a Hadamard and an effect corresponding to the phase, while for internal non-Clifford spiders we can view them as appearing as effects on output wires where there is no Hadamard in between.

Note that none of these steps so far actually remove any of the non-Clifford phases. To do that, we use \emph{gadget fusion}~\cite{kissinger2019tcount}:
\begin{equation}\label{eq:gadget-fusion}
  \tikzfig{phase-gadget-fusion}
\end{equation}
This applies whenever two phase gadgets have exactly the same set of neighbours, and it fuses them into one phase gadget. This rule captures the aformentioned concept of phase folding. The other rule that combines non-Clifford phases applies instead to a phase gadget that has exactly one neighbour:
\begin{equation}\label{eq:gadget-id}
  \tikzfig{phase-gadget-id-fuse}
\end{equation}
Additionally, any internal spider or phase gadget that has no neighbours only contributes a scalar to the overall diagram and hence can also be removed. Note that these fusions of non-Clifford phases can result in a spider that has a Clifford phase. For instance, two $\frac\pi4$ phases combine into a $\frac\pi2$ phase. In that case we can us the previously described rewrites to remove this newly acquired Clifford spider.

The set of rewrites we have described here and the rewrite strategy it implies was proposed in~\cite{kissinger2019tcount}. To use it to optimise diagrams, in~\cite{kissinger2019tcount} they used the technique of \emph{phase teleportation} to let it inform when phases in the original circuit were allowed to be fused. Later, in~\cite{wetering-gflow} it was shown that a quantum circuit can also be directly extracted from the diagram using techniques from measurement-based quantum computing. This procedure preserves the number of non-Clifford phases in the diagram.

As it will be important for us later, let us describe the type of pseudo-normal-form the rewriting procedure from~\cite{kissinger2019tcount} gives: we get a graph state where on each output we have either a $k\frac\pi2$ phase, or a $Z^aH$ unitary where $a\in\{0,1\}$, and some of the outputs can be plugged by a non-Clifford phase. Those plugged spiders with a Hadamard on them are phase gadgets and have at least 2 neighbours. Each phase gadget additionally has a unique set of neighbours (since if two gadgets had the same neighbours, we would have fused them).


\section{Parametrised circuits in the ZX-calculus}\label{sec:parameters-ZX}
In order to work with parametrised circuits in the ZX-calculus we will define the notion of a parametrised diagram.

\begin{definition}
  A \emph{parametrised diagram with $k$ parameters} $D$ is a ZX-diagram that is entirely Clifford, apart from the phases $\vec \alpha = (\alpha_1,\ldots, \alpha_k)$ that are each present \emph{at most once} in the diagram. That is, for each choice of $\alpha_j\in \R$ we have
  \[\tikzfig{D-alpha}\]
  where the equality is on the nose (so with accurate scalars), and $D'$ is a Clifford diagram.
  We wrote the diagram here as a state (no input wires) without loss of generality. 
  Note that if some phase $\alpha_j$ is actually not present in $D[\alpha_1,\ldots \alpha_k]$ that we can still find a Clifford diagram $D'$ such that the above holds, as \tikzfig{scalar-alpha-cancel}.
  We will write $D$ for the ZX-diagram with `abstract' parameters, and $D[\vec \alpha]$ for the diagram instantiated with a particular choice of values $\alpha_1,\ldots \alpha_k$.
\end{definition}

The condition here that each phase only appears at most once is important, as it means that on the input wires of $D'$, each has a phase we can choose independently. If some phase $\alpha_j$ appeared multiple times in $D$, then to `extract' it out of the Clifford portion would mean that we have a number of input wires that are `correlated'. As a result we would only have information about the symmetric subspace of these wires.

\begin{remark}
  Scalar factors will be important in this paper. Every equality written in this section will hence be meant to be \emph{on the nose}, meaning that they represent equal linear diagrams, even with the correct global phase.
\end{remark}

As we will want to consider reductions between parametrised diagrams, it will be helpful to first note a couple of properties regarding equalities between parametrised diagrams and what the ZX-calculus can prove about these.

\subsection{Completeness of the Clifford+parameters fragment}

We can straightforwardly prove that the ZX-calculus is complete for parametrised Clifford diagrams.

\begin{definition}
  Write $\ket{+_\alpha} = \tikzfig{ketplus-alpha} = \ket{0} + e^{i\alpha} \ket{1}$.
\end{definition}
Note that in this definition we use unnormalised states. Hence $\ket{+_0} = \sqrt{2}\ket{+}$.

\begin{lemma}\label{lem:basis-decomp}
  Let $\alpha,\beta\in \R$ be two different phases (modulo $2\pi$). Then $\ket{+_\alpha}$ and $\ket{+_\beta}$ form a basis.
  In particular, writing $\ket{+_\gamma} = a \ket{+_\alpha} + b \ket{+_\beta}$ we have $a+b = 1$ and 
  \begin{equation}
    a \ = \ \frac{e^{i\gamma} - e^{i\beta}}{e^{i\alpha} - e^{i\beta}} \qquad \quad b \ =\ \frac{e^{i\alpha} - e^{i\gamma}}{e^{i\alpha} - e^{i\beta}} 
  \end{equation}
  In the special case of $\alpha=0$ and $\beta=\pi$ we have $a-b = e^{i\gamma}$.
\end{lemma}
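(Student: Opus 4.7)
The plan is to proceed by direct linear algebra in the two-dimensional space spanned by $\ket{0}$ and $\ket{1}$, since the states $\ket{+_\alpha}$ and $\ket{+_\beta}$ live in $\CC^2$ and are given explicitly as $\ket{0} + e^{i\alpha}\ket{1}$ and $\ket{0} + e^{i\beta}\ket{1}$.

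First I would verify that $\{\ket{+_\alpha}, \ket{+_\beta}\}$ is a basis. The change-of-basis matrix from $\{\ket 0, \ket 1\}$ to $\{\ket{+_\alpha}, \ket{+_\beta}\}$ is
\[
\begin{pmatrix} 1 & 1 \\ e^{i\alpha} & e^{i\beta} \end{pmatrix},
\]
whose determinant is $e^{i\beta} - e^{i\alpha}$. Since $\alpha \ne \beta \pmod{2\pi}$ we have $e^{i\alpha} \ne e^{i\beta}$, so the determinant is nonzero and the two vectors are linearly independent, hence a basis of $\CC^2$.

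Next I would derive the coefficients. Expanding $a\ket{+_\alpha} + b\ket{+_\beta} = (a+b)\ket{0} + (a e^{i\alpha} + b e^{i\beta})\ket{1}$ and comparing with $\ket{+_\gamma} = \ket{0} + e^{i\gamma}\ket{1}$ yields the linear system
\[
a + b = 1, \qquad a\, e^{i\alpha} + b\, e^{i\beta} = e^{i\gamma}.
\]
Substituting $b = 1 - a$ into the second equation gives $a(e^{i\alpha} - e^{i\beta}) = e^{i\gamma} - e^{i\beta}$, from which the stated expression for $a$ follows; the expression for $b$ is obtained symmetrically (or from $b = 1-a$ together with a common-denominator simplification).

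Finally, for the special case $\alpha = 0$, $\beta = \pi$, I would just substitute $e^{i\alpha} = 1$ and $e^{i\beta} = -1$ to get $a = \tfrac{1}{2}(e^{i\gamma} + 1)$ and $b = \tfrac{1}{2}(1 - e^{i\gamma})$, so $a - b = e^{i\gamma}$. There is no real obstacle here: the lemma is an elementary fact about two vectors in $\CC^2$ solved by a $2 \times 2$ system. The only thing to mind is keeping track of the unnormalised convention for $\ket{+_\alpha}$, which is exactly what makes the clean identity $a + b = 1$ appear (from the $\ket 0$ component having coefficient $1$ throughout).
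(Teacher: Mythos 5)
Your proof is correct, and the paper in fact states this lemma without any proof at all (it is treated as an elementary fact), so your direct $2\times 2$ linear-algebra verification is exactly the standard argument the authors leave implicit. All three parts check out: the Vandermonde-type determinant $e^{i\beta}-e^{i\alpha}\neq 0$ for basis-hood, the linear system from comparing $\ket{0}$ and $\ket{1}$ coefficients (where the unnormalised convention gives $a+b=1$ for free), and the substitution $e^{i\alpha}=1$, $e^{i\beta}=-1$ for the special case.
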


\begin{proposition}\label{prop:parameter-equality}
  The rules of Figure~\ref{fig:clifford-rules} are complete for parametrised Clifford diagrams:
  Let $D_1$ and $D_2$ be two parametrised diagrams with the same number of parameters such that $D_1[\vec \alpha] = D_2[\vec \alpha]$ for every choice of $\alpha_1,\ldots, \alpha_k$. Then $D_1' = D_2'$ and $D_1[\vec \alpha]$ and $D_2[\vec \alpha]$ can be uniformly rewritten into each other using the ZX-calculus Clifford rewrite rules.
\end{proposition}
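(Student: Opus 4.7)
The plan is to reduce the proposition to the known completeness of the Clifford ZX-calculus \cite{BackensCompleteness} by first showing that the pointwise equality $D_1[\vec\alpha] = D_2[\vec\alpha]$ forces the underlying Clifford diagrams $D_1'$ and $D_2'$ (which have the $k$ parameter inputs exposed) to be equal as linear maps, on the nose. Once that is established, Clifford completeness gives a finite sequence of the rules of Figure \ref{fig:clifford-rules} turning $D_1'$ into $D_2'$. Because in both parametrised diagrams the parameters appear only as $\ket{+_{\alpha_j}}$ states plugged into the $k$ distinguished input wires of the underlying Clifford diagram, that very same rewrite sequence turns $D_1[\vec\alpha]$ into $D_2[\vec\alpha]$ for every choice of $\vec\alpha$ simultaneously, which is exactly the uniform rewriting the statement asks for.

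The core step is therefore showing $D_1' = D_2'$. By the defining property of a parametrised diagram, plugging the product state $\ket{+_{\alpha_1}} \otimes \cdots \otimes \ket{+_{\alpha_k}}$ into the $k$ parameter inputs of $D_i'$ yields exactly $D_i[\vec\alpha]$, and this equality holds with correct scalars. Specialising to $\alpha_j \in \{0,\pi\}$ produces the $2^k$ product states of the form $\ket{+_{a_1\pi}} \otimes \cdots \otimes \ket{+_{a_k\pi}}$ with $a_j \in \{0,1\}$, which by Lemma \ref{lem:basis-decomp} applied coordinatewise are scalar multiples of the Hadamard basis of $(\CC^2)^{\otimes k}$ and hence form a basis themselves. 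The hypothesis therefore already implies that the two linear maps $D_1'$ and $D_2'$ agree on a basis, so $D_1' = D_2'$.

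The \emph{at most once} condition on parameters is crucial here: it is what lets us regard $D_i'$ as a genuine linear map with $k$ independent input wires that can be probed separately, rather than a map on some symmetric subspace forced by repeated parameters. The main obstacle I anticipate is the bookkeeping of scalars. Every equality in this section is stated on the nose, so we cannot afford to drop a global factor anywhere: the plugging operation must produce an exact equality with $D_i[\vec\alpha]$, the basis expansion must use the precise coefficients from Lemma \ref{lem:basis-decomp}, and the cited Clifford completeness must hold with scalars. Modulo this scalar bookkeeping, both halves of the argument --- agreement on a basis implying equality of linear maps, and the uniform lift of a Clifford rewrite to a rewrite between the full parametrised diagrams --- are essentially formal.
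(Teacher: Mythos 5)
Your proof is correct and follows essentially the same route as the paper: establish $D_1' = D_2'$ by exploiting that the phase states at $0$ and $\pi$ span each parameter wire, then invoke Clifford completeness on the underlying diagrams and lift the resulting rewrite sequence uniformly to the plugged diagrams. The only cosmetic difference is that the paper unplugs one parameter wire at a time using an arbitrary state $a\ket{+_0} + b\ket{+_\pi}$, whereas you probe all $k$ wires at once with the tensor-product basis $\{\ket{+_{a_1\pi}}\otimes\cdots\otimes\ket{+_{a_k\pi}}\}$; this changes nothing substantive.
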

\begin{proof}
  Let $\ket{\psi} = a\ket{+_0} + b \ket{+_\pi}$ be an arbitrary (not necessarily normalised) qubit state. Note that we can write $\ket{\psi} = a~\tikzfig{ketplus} + b~\tikzfig{ketminus}$. Hence:
  \[\tikzfig{D-psi-plug-pf}\]
  As these two linear map agree on whatever state we put in, they must be equal:
  \[\tikzfig{D-psi-plug-pf2}\]
  We can repeat this procedure for all the other $\alpha_j$ to conclude that $D_1' = D_2'$. As these are both Clifford diagrams and they represent the same linear map, by Clifford completeness there is a sequence of Clifford rewrites that transforms $D_1'$ into $D_2'$. When we plug in some $\alpha_j$ phases into these diagrams, this set of rewrites will still be applicable.
  Hence, there is a set of Clifford rewrites transforming $D_1[\vec \alpha]$ into $D_2[\vec \alpha]$.
\end{proof}

\begin{remark}\label{rem:finite-values}
  In this result we didn't have to use the fact that these diagrams were equal for any choice of $\alpha$ here. It suffices for the diagrams to agree on just two different values of $\alpha_j$ for each $j$. 
  This is because $\{\ket{+_\alpha},\ket{+_\beta}\}$ forms a basis of the qubit state space if $\alpha\neq \beta$. This is a special case of the possibility to verify parametrised equations using a finite number of cases of~\cite{MillerBakewell2020finite}. In particular, if we had repeated parameters, then if a parameter was repeated $k$ times, we would have to verify equality of the diagrams at $k+1$ different values of the parameter. This is because the symmetric subspace of $k$ qubits is $(k+1)$-dimensional.
\end{remark}

\begin{remark}
  Note that these completeness results require each parameter to appear freely in the diagram (i.e.~with no dependencies between them), and hence that we can't have repeated parameters. There are equations involving repeated parameters that cannot be proven by the Clifford rewrite rules, such as supplementarity~\cite{supplementarity}. In~\cite{JPV-universal}, it is shown that when we allow $\frac\pi4$ phases, so that the diagrams correspond to the Clifford+T fragment, that we can then prove any equation involving repeated parameters (what they call `linear diagrams with constants in $\frac\pi4$'), if we use the extended complete axiomatisation for Clifford+T ZX-diagrams. In this setting we however don't expect that rewriting is efficient in general, as the completeness result of~\cite{JPV-universal} requires rewriting to an exponentially large normal form.
\end{remark}

In Proposition~\ref{prop:parameter-equality} we had to assume that the diagrams were equal on the nose. But when we are dealing with unitary parametrised circuits, we only care about the exact unitary up to global phase. A stronger result would hence be to be able to rewrite diagrams that are only equal up to a global phase, where this phase may depend on the value of the parameters. We hence care about the situation where $D_1[\vec \alpha] = \lambda(\vec \alpha) D_2[\vec \alpha]$ for all $\vec \alpha$ and $\lambda(\vec \alpha)$ is a phase depending on the value of $\alpha$. We will work in a slightly more general setting then that and assume that $\lambda:\R^k\to \CC$ is just any function mapping phases into a scalar value (which is also allowed to be zero a priori).
We can then prove an analogous result to Proposition~\ref{prop:parameter-equality}, but to do that we will need to make an assumption on the parameters.

\begin{definition}
  Let $D$ be a parametrised diagram with parameters $\alpha, \vec \alpha$. We say the parameter $\alpha$ is \emph{trivial}, if there is a choice of Clifford angles $\vec \alpha \in \Z[\frac\pi2]^k$ and two distinct choices of value $\beta$ and $\gamma$ of $\alpha$ such that the diagrams $D[\beta, \vec \alpha]$ and $D[\gamma, \vec \alpha]$ are proportional to each other. Otherwise we call the parameter \emph{non-trivial}.
\end{definition}
This definition is motivated by the following result.
\begin{lemma}
  Suppose $\alpha$ is a trivial parameter in a parametrised diagram $D[\alpha,\vec \alpha]$. Then $D[\beta,\vec \alpha]$ is proportional to $D[\gamma, \vec \alpha]$ for every choice of $\beta,\gamma$ where $\vec \alpha$ is some Clifford choice of the other parameters.
\end{lemma}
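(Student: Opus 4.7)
The strategy is to exploit the single-appearance condition on the parameter $\alpha$, which makes $D[\alpha,\vec\alpha]$ depend linearly on the factor $\ket{+_\alpha}$ that gets plugged into $D'$. Combining this linearity with Lemma~\ref{lem:basis-decomp}, the diagram is determined as a specific affine combination of two ``anchor'' evaluations at Clifford values of $\alpha$.

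Concretely, first fix $\vec\alpha$ to be the Clifford choice witnessing triviality. Using Lemma~\ref{lem:basis-decomp} with $\alpha=0$ and $\beta=\pi$, write
\[
\ket{+_\delta} \ =\ a(\delta)\ket{+_0} + b(\delta)\ket{+_\pi}, \qquad a(\delta)=\tfrac{1+e^{i\delta}}{2},\quad b(\delta)=\tfrac{1-e^{i\delta}}{2},
\]
and use the single-appearance property of $\alpha$ to conclude, by linearity of the ZX-diagram interpretation, that
\[
D[\delta,\vec\alpha] \ =\ a(\delta)\, D[0,\vec\alpha] + b(\delta)\, D[\pi,\vec\alpha]
\]
for every $\delta\in\R$. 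This is the key identity: the entire one-parameter family lies in the (at most) two-dimensional subspace spanned by $D[0,\vec\alpha]$ and $D[\pi,\vec\alpha]$.

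Next, invoke the triviality hypothesis: there exist distinct $\beta,\gamma$ with $D[\beta,\vec\alpha] = \mu\, D[\gamma,\vec\alpha]$ for some scalar $\mu$. Substituting the identity above gives
\[
\bigl(a(\beta)-\mu\, a(\gamma)\bigr)D[0,\vec\alpha] + \bigl(b(\beta)-\mu\, b(\gamma)\bigr)D[\pi,\vec\alpha] \ =\ 0.
\]
I would then argue that $D[0,\vec\alpha]$ and $D[\pi,\vec\alpha]$ must be linearly dependent, because otherwise both coefficients above vanish and hence $(a(\beta),b(\beta))=\mu(a(\gamma),b(\gamma))$, i.e.\ $\ket{+_\beta}\propto\ket{+_\gamma}$, contradicting the basis statement of Lemma~\ref{lem:basis-decomp} for $\beta\neq\gamma\pmod{2\pi}$. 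Consequently $\mathrm{span}\{D[0,\vec\alpha],D[\pi,\vec\alpha]\}$ is (at most) one-dimensional, and by the identity every $D[\delta,\vec\alpha]$ sits in that same line, so any two evaluations are scalar multiples of each other.

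The one subtlety worth flagging, and the main place care is needed, is the interaction with the proportionality convention when some evaluations vanish: if, say, $D[0,\vec\alpha]=0$ but $D[\pi,\vec\alpha]\neq 0$, then $D[\delta,\vec\alpha]=b(\delta)D[\pi,\vec\alpha]$, and the zero-diagram $D[0,\vec\alpha]$ is only a \emph{zero}-scalar multiple of the nonzero one. Whether this counts as ``proportional'' depends on convention; since Definition~\ref{def:reduction} uses $\lambda\in\CC\setminus\{0\}$, I would resolve this by noting that the triviality hypothesis, combined with linear dependence of $D[0,\vec\alpha]$ and $D[\pi,\vec\alpha]$, forces both to be nonzero scalar multiples of a common vector $v$ (else one is zero, in which case $b(\delta)$ or $a(\delta)$ would vanish for generic $\delta$ only when $\delta\in\{0,\pi\}$, and the original $\beta,\gamma$ proportionality with a nonzero scalar already rules this case out). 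Beyond this bookkeeping, the argument is an immediate consequence of linearising through $\ket{+_\alpha}$ and applying Lemma~\ref{lem:basis-decomp}, so I do not expect any substantive obstacle.
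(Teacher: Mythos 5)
Your core argument is correct, but it takes a mildly different route from the paper's. The paper decomposes $\ket{+_\delta}$ directly in the basis $\{\ket{+_\beta},\ket{+_\gamma}\}$ formed by the two triviality witnesses: writing $D[\beta,\vec\alpha]=\lambda D[\gamma,\vec\alpha]$ and $\ket{+_\delta}=a\ket{+_\beta}+b\ket{+_\gamma}$, linearity gives $D[\delta,\vec\alpha]=(a\lambda+b)\,D[\gamma,\vec\alpha]$ in one line, with no case analysis. You instead decompose in the fixed basis $\{\ket{+_0},\ket{+_\pi}\}$ and then derive linear dependence of $D[0,\vec\alpha]$ and $D[\pi,\vec\alpha]$ by contradiction (independence would force $\ket{+_\beta}\propto\ket{+_\gamma}$, impossible for $\beta\neq\gamma$). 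Both arguments rest on the same two ingredients --- linearity in the plugged-in state $\ket{+_\alpha}$, which is exactly what the single-appearance condition buys, and Lemma~\ref{lem:basis-decomp} --- so yours is a valid variant; what the paper's choice of basis buys is that the proportionality constant $(a\lambda+b)$ is produced explicitly rather than inferred from a dimension count.

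There is, however, a genuine error in your closing ``bookkeeping'' paragraph: the claim that triviality forces both $D[0,\vec\alpha]$ and $D[\pi,\vec\alpha]$ to be \emph{nonzero} scalar multiples of a common vector is false, and the case you try to exclude cannot be excluded. If $D[0,\vec\alpha]=0$ and $D[\pi,\vec\alpha]\neq 0$, then $D[\delta,\vec\alpha]=\tfrac12(1-e^{i\delta})\,D[\pi,\vec\alpha]$, so \emph{any} two distinct values $\beta,\gamma$ that are nonzero modulo $2\pi$ witness triviality with a nonzero proportionality scalar; hence assuming the witnesses come with a nonzero scalar does not rule this situation out. Indeed, the paper's Proposition~\ref{prop:trivial-parameter-cases} lists exactly these singular diagrams as legitimate trivial-parameter cases. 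The correct resolution is simply that ``proportional'' in this lemma must tolerate a zero coefficient (proportionality in one direction), exactly as in the paper's own proof, where the scalar $a\lambda+b$ may vanish. Your main dimension-count argument already delivers the conclusion in that loose sense, so the lemma stands; only your attempted strengthening to nonzero scalars fails, and it should be dropped rather than repaired.
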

\begin{proof}
  Let $\beta$ and $\gamma$ be two choices for the trivial parameter $\alpha$ such that the associated diagrams are proportional to each other. Write $D[\beta,\vec \alpha] = \lambda D[\gamma,\vec\alpha]$ where $\lambda$ denotes the proportionality scalar. 
  Let $\delta$ be any other phase and let $a$ and $b$ be the values such that $\ket{+_\delta} = a \ket{+_\beta} + b \ket{+_\gamma}$. These values exist because $\ket{+_\beta}$ and $\ket{+_\gamma}$ form a basis.
  Then $D[\delta,\vec\alpha] = a D[\beta,\vec\alpha] + b D[\gamma,\vec\alpha] = (a\lambda + b) D[\gamma,\vec\alpha]$. Hence, the diagram for $\delta$ is proportional to that of $\gamma$.
\end{proof}

This lemma says that if a diagrammatic parameter does not change the diagram for two choices of the parameter, then it does not change the diagram for \emph{any other} choice of that parameter as well. Note that the converse says that if a parameter is \emph{non-trivial}, then that must mean that for any choice of the other parameters, changing the parameter must result in a different linear map.
We primarily care about diagrams coming from a quantum circuit, in which case every parameter is necessarily non-trivial.

\begin{proposition}\label{prop:circ-nontrivial}
  Let $D$ be a parametrised diagram that comes from a quantum circuit which contains parametrised phase gates. Then every parameter is non-trivial.
\end{proposition}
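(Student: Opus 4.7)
The plan is to leverage the fact that a quantum circuit is a composition of unitary gates, so we can isolate the single $Z[\alpha]$ gate associated with the parameter $\alpha$ and use invertibility of the rest.

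Fix any parameter $\alpha$. Since $D$ comes from a quantum circuit containing parametrised phase gates, and since each parameter appears at most once in the diagram, $\alpha$ is the phase of exactly one $Z[\alpha]$ gate acting on some specific qubit, say qubit $i$, at some specific point in the circuit. Write the circuit as
\[ C[\alpha,\vec\alpha'] \;=\; A(\vec\alpha')\,\cdot\, (I\otimes Z[\alpha]\otimes I)_i\, \cdot\, B(\vec\alpha'), \]
where $A$ and $B$ are the parts of the circuit before and after this gate, and $\vec\alpha'$ collects the remaining parameters. Now choose any Clifford values for $\vec\alpha'$; since $A$ and $B$ then consist entirely of Clifford gates, $A(\vec\alpha')$ and $B(\vec\alpha')$ are unitary, hence invertible.

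Suppose for contradiction that $\alpha$ is trivial, so there exist distinct $\beta,\gamma\in\R$ (mod $2\pi$) and some scalar $\lambda\neq 0$ with $C[\beta,\vec\alpha'] = \lambda\, C[\gamma,\vec\alpha']$. Cancelling $A(\vec\alpha')$ on the left and $B(\vec\alpha')$ on the right yields
\[ (I\otimes Z[\beta]\otimes I)_i \;=\; \lambda\,(I\otimes Z[\gamma]\otimes I)_i. \]
Reading off the $\ket{0}\!\bra{0}$ component on qubit $i$ gives $\lambda=1$, and then the $\ket{1}\!\bra{1}$ component gives $e^{i\beta}=e^{i\gamma}$, i.e.~$\beta=\gamma$ modulo $2\pi$, contradicting our assumption that they are distinct. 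Hence $\alpha$ is non-trivial, and since $\alpha$ was arbitrary, every parameter is non-trivial.

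The argument is essentially a one-liner once the $Z[\alpha]$ gate is isolated, so no real obstacle arises. The only minor thing to be careful about is that a priori the proportionality scalar $\lambda$ could be any nonzero complex number, but the explicit form of $Z[\alpha]$ pins it down to $\lambda=1$ immediately; and the fact that the Clifford portions $A(\vec\alpha')$, $B(\vec\alpha')$ are genuinely unitary (hence invertible) is what makes the cancellation legitimate and requires that we chose Clifford values for $\vec\alpha'$ rather than letting them be arbitrary.
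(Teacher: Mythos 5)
Your proof is correct and takes essentially the same approach as the paper's: both isolate the unique $Z[\alpha]$ gate, exploit invertibility of the surrounding unitaries (which are independent of $\alpha$ because the parameter occurs only once), and conclude that triviality would force $Z[\beta]\propto Z[\gamma]$ for distinct $\beta,\gamma$, which is impossible. One minor quibble: $A(\vec\alpha')$ and $B(\vec\alpha')$ are unitary for \emph{any} real values of $\vec\alpha'$ (parametrised phase gates are always unitary), so choosing Clifford values is not what makes the cancellation legitimate --- it simply matches the quantification over Clifford angles in the paper's definition of a trivial parameter.
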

\begin{proof}
  Pick some parameter $\alpha$. As $D$ comes from a quantum circuit we can then write it as $D = U_2 \circ Z_1[\alpha]\circ U_1$ for some unitaries $U_1$ and $U_2$ (built out of the gates appearing before and behind the $Z[\alpha]$ gate), and where we write $Z_1[\alpha]$ as shorthand for the $Z[\alpha]$ gate appearing on the first qubit. Note that while $U_1$ and $U_2$ can depend on the other parameters, they do not depend on $\alpha$ as the parameter only appears in one gate. Let $D[\alpha]$ and $D[\alpha']$ be the diagram instantiated with the specific choices of $\alpha$ and $\alpha'$. We then claim that $D[\alpha]\circ D[\alpha']^\dagger$ is not equal to the identity. To see this note that $D[\alpha]\circ D[\alpha']^\dagger = U_1\circ Z_1[\alpha - \alpha'] \circ U_1^\dagger$. Hence, if this were equal to the identity, then 
  $$\text{id} \ =\  U_1^\dagger \circ \text{id} \circ U_1 \ =\  U_1^\dagger \circ U_1 \circ Z_1[\alpha - \alpha'] \circ U_1^\dagger\circ U_1 \ =\  Z_1[\alpha - \alpha'].$$
  But we know that this is not the identity when $\alpha \neq \alpha'$.
\end{proof}

In Appendix~\ref{app:completeness} we prove the following result that shows that when two diagrams are equal up to some scalar that may vary depending on the parameters, that actually this scalar does \emph{not} depend on the parameters and is just some constant.

\begin{proposition}\label{prop:scalar-is-constant}
  Let $D_1$ and $D_2$ be (not necessarily Clifford) parametrised diagrams with the same number of parameters and suppose $D_1[\vec \alpha] = \lambda(\vec \alpha) D_2[\vec \alpha]$ for all $\vec \alpha$ for some non-zero scalar function $\lambda(\vec \alpha)\in \CC$, and suppose that all the parameters of $D_1$ are non-trivial. Then $D_1' = \lambda' D_2'$ for some constant scalar $\lambda'$.
\end{proposition}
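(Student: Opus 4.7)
The plan is to show that the function $\lambda(\vec\alpha)$ takes a single constant value at all Clifford tuples $\vec\beta \in \{0,\pi\}^k$, and then to deduce $D_1' = \lambda' D_2'$ from the fact that $\{\ket{+_{\vec\beta}} : \vec\beta \in \{0,\pi\}^k\}$ is a basis of the parameter input space $(\CC^2)^{\otimes k}$. The key tool is Lemma~\ref{lem:basis-decomp}, applied one parameter at a time, together with non-triviality.

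First, I would fix $\vec\alpha' = (\alpha_2,\ldots,\alpha_k)$ at a Clifford tuple in $(\Z[\frac\pi2])^{k-1}$ and vary only $\alpha_1$. By Lemma~\ref{lem:basis-decomp} applied to $\alpha=0,\beta=\pi$, for any $\gamma \in \R$ we can write $\ket{+_\gamma} = a\ket{+_0} + b\ket{+_\pi}$ for some scalars $a,b$, hence $D_i[\gamma,\vec\alpha'] = a\,D_i[0,\vec\alpha'] + b\,D_i[\pi,\vec\alpha']$ for $i=1,2$. Substituting the hypothesis $D_1[\vec\alpha] = \lambda(\vec\alpha) D_2[\vec\alpha]$ at the three points $(\gamma,\vec\alpha')$, $(0,\vec\alpha')$, $(\pi,\vec\alpha')$ and cancelling yields
\[
a\bigl(\lambda(0,\vec\alpha') - \lambda(\gamma,\vec\alpha')\bigr) D_2[0,\vec\alpha'] + b\bigl(\lambda(\pi,\vec\alpha') - \lambda(\gamma,\vec\alpha')\bigr) D_2[\pi,\vec\alpha'] = 0.
\]
Non-triviality of $\alpha_1$ in $D_1$ transfers to $D_2$ (since $\lambda$ is pointwise non-zero, $D_1[\beta,\vec\alpha']$ and $D_1[\gamma,\vec\alpha']$ are linearly dependent if and only if $D_2[\beta,\vec\alpha']$ and $D_2[\gamma,\vec\alpha']$ are), and so it forces $D_2[0,\vec\alpha']$ and $D_2[\pi,\vec\alpha']$ to be linearly independent. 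For $\gamma \notin \{0,\pi\}$ both $a$ and $b$ are non-zero, so we conclude $\lambda(\gamma,\vec\alpha') = \lambda(0,\vec\alpha') = \lambda(\pi,\vec\alpha')$. Thus $\lambda(\cdot,\vec\alpha')$ is constant in $\alpha_1$ whenever $\vec\alpha'$ is Clifford.

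Next I would run the symmetric argument on each coordinate $\alpha_j$, obtaining that $\lambda(\vec\alpha_{-j},\alpha_j)$ is constant in $\alpha_j$ whenever the remaining entries $\vec\alpha_{-j}$ are Clifford. Given any two Clifford tuples in $(\Z[\frac\pi2])^k$, we can move from one to the other by changing one coordinate at a time while keeping the rest Clifford; each step preserves $\lambda$. Consequently $\lambda(\vec\beta) = \lambda(\vec 0) =: \lambda'$ for every Clifford $\vec\beta$.

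To finish, specialise to $\vec\beta \in \{0,\pi\}^k$ to obtain $D_1' \circ \ket{+_{\vec\beta}} = \lambda'\, D_2' \circ \ket{+_{\vec\beta}}$. Since $\{\ket{+_{\vec\beta}} : \vec\beta \in \{0,\pi\}^k\}$ spans $(\CC^2)^{\otimes k}$, this equality extends by linearity to the whole parameter input space, giving $D_1' = \lambda' D_2'$ as required. The main delicate point is ensuring that non-triviality transfers from $D_1$ to $D_2$, which relies crucially on $\lambda$ being pointwise non-zero; without that hypothesis, the linear-independence step that drives the rest of the argument would break down.
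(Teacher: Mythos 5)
Your proposal is correct, and it takes a genuinely different route from the paper's own proof. The paper proceeds via two standalone one-parameter results --- Lemma~\ref{lem:remove-phase-function} (for a closed diagram with a single non-trivial parameter, the scalar function must be constant) and Corollary~\ref{cor:phase-finite-values} (the same conclusion when agreement is only known at the values $k\frac{\pi}{2}$) --- and then peels off the parameters one at a time, leaving the corresponding wires open, so that at each stage it must work with parametrised diagrams that have open wires and argue that non-triviality transfers to those restricted diagrams; this is precisely the delicate point the paper itself flags in the remark at the end of Appendix~\ref{app:completeness}. Your argument keeps every diagram closed until the final step: you pin down $\lambda$ on the whole Clifford sublattice by the three-point interpolation argument applied coordinate-wise (invoking non-triviality only at Clifford values of the remaining parameters, which is exactly what the paper's definition of non-trivial supplies), connect any two Clifford tuples by changing one coordinate at a time, and then finish with a single spanning argument over the product basis $\{\ket{+_{\vec\beta}} : \vec\beta\in\{0,\pi\}^k\}$ --- the same device the paper uses to prove Proposition~\ref{prop:parameter-equality}. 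What your route buys is that no notion of (non-)triviality for open-wire diagrams is ever needed, sidestepping the ``strong form of non-triviality'' concern; what the paper's route buys is the reusable one-parameter lemmas, which it states separately and feeds into its inductive bookkeeping. One small point you should make explicit: when you pass from non-proportionality of $D_1[0,\vec\alpha']$ and $D_1[\pi,\vec\alpha']$ to their linear independence, you are using that non-proportional entails both vectors are non-zero (in the paper's usage the zero vector is proportional to everything, cf.\ the treatment of singular diagrams in Appendix~\ref{app:completeness}), and only then does pointwise non-vanishing of $\lambda$ carry linear independence over to $D_2$.
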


Using this, it is then straightforward to get the completeness result for parametrised Clifford diagrams, allowing equality to vary per parameter.

\begin{theorem}\label{thm:completeness-scalars}
  Let $D_1$ and $D_2$ be two parametrised diagrams with the same number of parameters and where all the parameters of $D_1$ are non-trivial such that $D_1[\vec \alpha] = \lambda(\vec \alpha) D_2[\vec \alpha]$ for all $\vec \alpha$ for some scalar function $\lambda(\vec \alpha)\in \CC$. Then $D_1' = C\otimes D_2'$ for some Clifford scalar $C$ and we can uniformly rewrite $D_1[\vec \alpha]$ into $C\otimes D_2[\vec \alpha]$ using Clifford rewrites. 
\end{theorem}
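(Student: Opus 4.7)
The plan is to combine Proposition~\ref{prop:scalar-is-constant} with the Clifford-completeness lifting argument used in Proposition~\ref{prop:parameter-equality}, the new technical ingredient being to realise the proportionality constant as a concrete Clifford scalar diagram.

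First, since every parameter of $D_1$ is non-trivial and the hypothesis provides a scalar relation $D_1[\vec\alpha] = \lambda(\vec\alpha) D_2[\vec\alpha]$ for all $\vec\alpha$, I invoke Proposition~\ref{prop:scalar-is-constant} (after dispatching the degenerate case $\lambda \equiv 0$ by hand, where both diagrams vanish and $C$ can be taken to be a $\pi$-phased red spider with no legs) to conclude that $\lambda(\vec\alpha)$ is in fact a constant $\lambda' \in \CC$, so that $D_1' = \lambda' D_2'$ as linear maps on the nose.

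Next, I upgrade the abstract complex number $\lambda'$ to a Clifford scalar diagram $C$. Instantiating at a Clifford choice of parameters such as $\vec\alpha_0 = \vec 0$ yields two Clifford ZX-diagrams $D_1[\vec 0]$ and $D_2[\vec 0]$ whose underlying linear maps are proportional by $\lambda'$. The matrix entries of any Clifford diagram lie in the Clifford scalar ring $\Z[i,1/\sqrt 2]$, and this ring is exactly the set of scalars expressible by zero-input zero-output Clifford ZX-diagrams. Combined with the fact that $D_2[\vec 0]$ has at least one non-zero entry in this ring (provided $D_2' \neq 0$; if $D_2' = 0$ then $D_1' = 0$ and we are done with $C = 0$), the ratio $\lambda'$ is itself in $\Z[i,1/\sqrt 2]$ and can thus be realised by some Clifford scalar diagram $C$.

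Finally, I invoke completeness of the Clifford fragment~\cite{BackensCompleteness}: the diagrams $D_1'$ and $C \otimes D_2'$ are Clifford ZX-diagrams representing the same linear map on the nose, so there is a finite sequence of Clifford rewrites transforming the first into the second. Exactly as in the proof of Proposition~\ref{prop:parameter-equality}, these rewrites touch only the Clifford portion and leave the designated parameter inputs untouched, so plugging $\ket{+_{\alpha_j}}$ on each such input lifts the rewrite sequence to a uniform rewrite from $D_1[\vec\alpha]$ into $C \otimes D_2[\vec\alpha]$ for every $\vec\alpha$.

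The main obstacle is the middle step: confirming that the proportionality scalar $\lambda'$ between two Clifford ZX-diagrams really is representable by a Clifford scalar diagram. This relies on the standard characterisation of Clifford scalars as $\Z[i,1/\sqrt 2]$ together with a careful check that for two proportional non-zero Clifford matrices with entries in this ring the ratio itself remains in the ring; the other steps are then straightforward applications of results already established earlier in the paper.
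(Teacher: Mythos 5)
Your proof follows exactly the same route as the paper's: Proposition~\ref{prop:scalar-is-constant} makes $\lambda$ constant, instantiating at the Clifford point $\vec\alpha=\vec 0$ identifies $\lambda'$ with a Clifford scalar $C$, and Clifford completeness together with the plugging argument of Proposition~\ref{prop:parameter-equality} lifts the rewrite sequence uniformly. The first and last steps are fine. But the middle step---the one you yourself flag as the main obstacle---is carried out incorrectly. The set of scalars expressible by zero-legged Clifford ZX-diagrams is \emph{not} the ring $\Z[i,1/\sqrt{2}]$: by the first lemma of Appendix~\ref{app:completeness} it is $\{0\}\cup\{\sqrt{2}^{\,n}e^{ik\pi/4}: n,k\in\Z\}$, which is not closed under addition (for instance $3$ lies in $\Z[i,1/\sqrt{2}]$, but its modulus is not a power of $\sqrt 2$, so no scalar Clifford diagram evaluates to $3$). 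Worse, the closure property your argument hinges on is false for the ring: $\Z[i,1/\sqrt{2}]$ is not division-closed ($1$ and $3$ lie in it, but $1/3$ does not, since every element of the ring can be written with denominator a power of two). So ``both entries lie in the ring and the denominator is non-zero'' does not put $\lambda'$ in the ring, and the ``careful check'' you defer to cannot succeed at the level of the ring.

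The fix---and what the paper's terse assertion implicitly relies on---is to use the correct, smaller set. Every matrix entry $\bra{i}D[\vec 0]\ket{j}$ of a Clifford diagram is itself the value of a \emph{scalar} Clifford diagram (compose with computational-basis bras and kets, which are Clifford diagrams), hence lies in $\{0\}\cup\{\sqrt{2}^{\,n}e^{ik\pi/4}\}$; and the non-zero elements of this set form a \emph{group} under multiplication, so the ratio of corresponding non-zero entries of $D_1[\vec 0]$ and $D_2[\vec 0]$ is again of this form and is therefore representable by a Clifford scalar diagram $C$. With this replacement your argument goes through and coincides with the paper's proof. Two smaller slips are worth noting. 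First, in the case $\lambda\equiv 0$ only $D_1$ need vanish, not ``both diagrams'' (though your choice $C=0$, the $\pi$-phased zero-legged spider, still works). Second, $D_2'\neq 0$ does \emph{not} by itself guarantee $D_2[\vec 0]\neq 0$, since the Clifford part may annihilate $\ket{+_0}^{\otimes k}$; what you actually need is non-triviality of the parameters of $D_1$, which by the singularity lemma of Appendix~\ref{app:completeness} forces $D_1[\vec 0]\neq 0$ and hence $D_2[\vec 0]\neq 0$, or else to instantiate at some Clifford point where $D_2$ is non-zero, which exists because the Clifford product states $\ket{+_{k_1\pi/2}}\otimes\cdots\otimes\ket{+_{k_n\pi/2}}$ span the whole space. (The paper's own proof is equally terse on this second point, so this is a refinement rather than a divergence.)
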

Here by `uniformly rewriting' we mean that the path of rewrites does not depend on the specific parameter inputted into the parametrised diagram, but is `parameter agnostic'.

\subsection{Optimising parametrised Clifford circuits}\label{sec:opt-parameters}

The optimisation strategy for diagrams containing non-Clifford phases of Section~\ref{sec:opt-non-clifford} applies equally well to parametrised Clifford diagrams. We just treat spiders that contain a phase depending on a parameter as a non-Clifford. The local complementation rewrite Eq.~\eqref{eq:lc-simp} can add a $\pm\frac\pi2$ phase to a spider containing a parameter, while the pivot rewrite rule Eq.~\eqref{eq:pivot-simp} can add a $\pi$ phase.
The boundary pivot~\eqref{eq:pivot-boundary} and gadget pivot~\eqref{eq:gadget-simp} can flip the parameter $\alpha$ to $-\alpha$, and finally the gadget fusion~\eqref{eq:gadget-fusion} and neighbour fusion~\eqref{eq:gadget-id} can add together the phases in two spiders so that we can end up with expressions like $\alpha+\beta$ on a single spider. 

We see then that if we started with parameters $\alpha_1,\ldots,\alpha_k$ that we then end up with spiders that can have an expression like $b\frac\pi2 + (-1)^{a_1}\alpha_{j_1}+\cdots+(-1)^{a_l}\alpha_{j_l}$ for some different indices $j_1,\ldots,j_l$. We will then label this expression by a new parameter name $\beta$. Note that since none of the rewrite rules copy the phase of a non-Clifford spider that each parameter appears uniquely on some spider, and hence the new parameters $\beta_1,\ldots, \beta_{k'}$ we get are all independent. 
In particular, the new parameters $\vec \beta$ are given by an affine transformation $\vec \beta = P \vec \alpha + \vec c$, where $\vec c = (b_1\frac\pi2,\ldots,b_{k'}\frac\pi2)$ is a vector of Clifford phases, $P$ is a matrix only containing entries in $\{0,1,-1\}$, and each column of $P$ contains a single non-zero element (corresponding to each $\alpha_j$ appearing in a unique location).

There are now two ways to proceed with the reduced diagram. As described at the end of Section~\ref{sec:opt-non-clifford}, we can extract a circuit directly from the optimised diagram, resulting in a circuit with an equal number of non-Clifford gates as in the optimised diagrams, which in our case means the circuit contains as many parameters as the optimised diagram. 
Alternatively, we can use the \emph{phase teleportation}, which uses the optimisation strategy to inform which phases could be fused, and then use that to optimise the original circuit `in place'. In terms of reductions, we can phrase this as follows. 
If we started with the parametrised circuit $D_1[\vec\alpha]$ with $k$ parameters, and we optimised it to a parametrised diagram $D_2$ with $l$ parameters, giving the parsimonious reduction $D_1[\vec\alpha] = \lambda(\vec\alpha)D_2[P\vec\alpha +\vec c]$, then we can get from this an `in place' parsimonious reduction $D_1[\vec\alpha] = \lambda'(\vec\alpha)D_1[P'\vec\alpha,0,\ldots,0]$, where we can leave $k-l$ parameters to be set to zero. We will see later in Proposition~\ref{prop:phase-teleportation} that we can in fact derive an `in place' parsimonious reduction from any parsimonious reduction, and that hence the phase teleportation procedure works for any such reduction.

Note that for the optimised diagram, the properties of the graph described in Section~\ref{sec:opt-non-clifford} carry over, and that in particular every parameter that is connected via a Hadamard to a spider has at least two neighbours, and that its neighbourhood does not match that of any other parameter connected via a Hadamard. This is important for the proof of optimality in the next section.

The sequence of rewrites that constitutes this algorithm is not unique. There are usually many choices of where to apply the local complementations and pivots that remove spiders, and these different choices lead to different final diagrams. However, as the results of the next section will show, regardless of the choice of rewrites, the algorithm finds the minimal number of parameters.

\section{Proving optimality}

We will now show that the algorithm for minimising parameter counts is optimal, in the sense that it is the best possible among the parsimonious affine reductions.




First, we show that if the parameters are non-trivial, then any parsimonious affine reduction can't multiply a phase $\alpha$ to something like $2\alpha$: only $\alpha$ or $-\alpha$ is allowed.

\begin{lemma}\label{lem:params-non-duplicate}
Let $D_1$ and $D_2$ be parametrised diagrams with respectively $k$ and $l$ parameters, and let $D_1[\vec\alpha] = \lambda(\vec \alpha) D_2[P\vec\alpha + \vec c]$ be a parsimonious affine reduction where $P \in \mathbb{Z}^{l \times k}, \vec c \in \mathbb{R}^{l}$, and the parameters of $D_2$ are non-trivial. Then the coefficients of $P$ are either 1, $-1$, or 0.
\end{lemma}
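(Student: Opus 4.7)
The plan is to argue by contradiction, supposing some entry $P_{ij} = n$ has $|n|\geq 2$. Parsimony forces the $j$th column of $P$ to contain its only nonzero entry in row $i$, so among the substituted parameters $\vec\beta = P\vec\alpha+\vec c$ of $D_2$, only $\beta_i$ depends on $\alpha_j$, and does so as $\beta_i = n\alpha_j + c'_i$. I would fix the remaining $\alpha_k$ ($k\neq j$) at values chosen so that each $\beta_{i'}$ for $i'\neq i$ lands on a Clifford angle modulo $2\pi$ --- this is achievable because each $\beta_{i'}$ is an affine function of a disjoint subset of the $\alpha_k$'s by parsimony, and each such subset can be tuned independently. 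The shift $\alpha_j\mapsto\alpha_j+2\pi/n$ then moves $\beta_i$ by exactly $2\pi$ and leaves every other $\beta_{i'}$ unchanged, so $D_2[P\vec\alpha+\vec c]$ is literally invariant under this shift. The reduction equation therefore gives $D_1[\vec\alpha] = \mu\, D_1[\vec\alpha + \tfrac{2\pi}{n}\vec e_j]$ for some nonzero scalar $\mu = \lambda(\vec\alpha)/\lambda(\vec\alpha + \tfrac{2\pi}{n}\vec e_j)$.

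Next, because $\alpha_j$ appears in exactly one spider of $D_1$, I would write
\[
D_1[\vec\alpha] \;=\; A_0 \;+\; e^{i\alpha_j}\,A_1,
\]
where $A_0, A_1$ are tensors depending on the remaining $\alpha_k$ but not on $\alpha_j$, obtained by contracting the rest of $D_1$ against the $\ket{0\cdots 0}$ and $\ket{1\cdots 1}$ components of that spider (after a colour change if it is an X-spider). Substituting into the proportionality relation yields
\[
(1-\mu)\,A_0 \;+\; e^{i\alpha_j}\bigl(1-\mu\,e^{2\pi i/n}\bigr)\,A_1 \;=\; 0.
\]
If $A_0$ and $A_1$ are linearly independent, both coefficients must vanish, forcing $\mu=1$ and $e^{2\pi i/n}=1$, which is impossible for $|n|\geq 2$.

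The remaining case is that $A_0$ and $A_1$ are linearly dependent, so $D_1[\vec\alpha]$ is a scalar multiple of a single fixed tensor $T$ for every value of $\alpha_j$. By the reduction equation, $D_2[P\vec\alpha+\vec c]$ must then also be a scalar multiple of $T$ for every $\alpha_j$. But as $\alpha_j$ varies over $\R$, $\beta_i = n\alpha_j + c'_i$ sweeps through every residue modulo $2\pi$ while the other $\beta_{i'}$ stay pinned at their chosen Clifford values. Hence at this specific Clifford choice of $\vec\beta_{-i}$, the diagrams $D_2[\vec\beta]$ for distinct $\beta_i$ are all mutually proportional, directly contradicting the non-triviality of $\beta_i$ in $D_2$.

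The main obstacle I anticipate is verifying the initial step of arranging $\vec\beta_{-i}$ at Clifford values; this requires the mild observation that the relevant rows of $P$ are not entirely zero, since a zero row would correspond to a $\beta_{i'}$ pinned at the constant $c_{i'}$, which is a redundant parameter of $D_2$ that could be eliminated without loss of generality. A second, smaller subtlety is the degenerate sub-case $T=0$ in the linearly dependent branch, where $D_1$ and hence $D_2$ vanish identically across the varying $\beta_i$; this still contradicts non-triviality so long as `proportional' is interpreted to include two identically zero diagrams, matching the convention used elsewhere in the paper.
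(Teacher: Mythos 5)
Your proposal is correct in its core mechanism, but it is organised genuinely differently from the paper's proof, so a comparison is worthwhile. Both arguments turn on the same observation: if $P_{ij}=n$ with $|n|\geq 2$, then shifting $\alpha_j\mapsto\alpha_j+2\pi/n$ moves $\beta_i$ by exactly $2\pi$ and fixes every other $\beta_{i'}$ (parsimony puts the only non-zero entry of column $j$ in row $i$), so the $D_2$ side is unchanged and $D_1$ must be proportional to itself under the shift. The paper exploits this via a case split on whether $\alpha_j$ is trivial in $D_1$: in the non-trivial case it sets all other parameters of $D_1$ to zero, evaluates at the two values $0$ and $2\pi/r$, obtains $E_1[2\pi/r]\propto E_1[0]$, and contradicts non-triviality of $\alpha_j$ in $D_1$; the trivial case is handled separately by arguing that the whole column vanishes, using non-triviality of the parameters of $D_2$. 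You instead keep $\alpha_j$ symbolic, expand $D_1[\vec\alpha]=A_0+e^{i\alpha_j}A_1$ (exactly the $\ket{+_{\alpha_j}}$-plugging form in the paper's definition of a parametrised diagram), and run a dichotomy: linear independence gives the purely algebraic contradiction $e^{2\pi i/n}=1$, an ingredient absent from the paper, while linear dependence contradicts non-triviality of $\beta_i$ in $D_2$, which is the hypothesis the lemma actually states. What your route buys: no case split on triviality of $\alpha_j$, and the contradiction lands on the stated hypothesis. What it costs: your Case B needs the spectator parameters $\beta_{i'}$ of $D_2$ to sit at \emph{Clifford} values, since triviality is defined only through Clifford assignments of the other parameters, whereas the paper's central case needs nothing at all about $D_2$'s other parameters, only $2\pi$-periodicity.

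That Clifford-pinning step is where your one caveat lives, and you correctly flagged it. For non-zero rows your tuning works: parsimony makes distinct rows depend on disjoint sets of the $\alpha_k$, so each involved $\beta_{i'}$ can independently absorb its (possibly non-Clifford) constant $c_{i'}$. For a zero row, however, $\beta_{i'}$ is pinned at $c_{i'}$, and your proposed fix of deleting the redundant parameter WLOG is not as innocent as it sounds: fixing $\beta_{i'}:=c_{i'}$ inside $D_2$ yields a diagram for which non-triviality of the remaining parameters no longer follows from the hypothesis, precisely because non-triviality only constrains behaviour at Clifford values of $\beta_{i'}$, which $c_{i'}$ need not be. So, strictly, your argument proves the lemma whenever every parameter of $D_2$ is involved in the reduction (no zero rows of $P$) --- which is in fact the only regime in which the paper invokes the lemma, and the paper's own trivial-parameter case glosses over the very same interaction between the constants $\vec c$ and the Clifford quantification in the definition of triviality. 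If you want the corner case as well, the cheapest repair is to borrow the paper's move: evaluating your shift identity with all other $\alpha_k$ set to zero at $\alpha_j\in\{0,2\pi/n\}$ shows $\alpha_j$ is trivial in $D_1$, which reduces the zero-row situation to the paper's separate claim that columns of trivial parameters vanish.
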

\begin{proof}
First suppose $\alpha_j$ in $D_1$ is trivial. In that case the non-trivial parameters of $D_2$ cannot depend on it (as the parsimonious reduction would make the parameters trivial themselves), so the coefficients of the $j$-th column of $P$ must be $0$.
Now suppose $\alpha_j$ is non-trivial. Then note that it must be mapped by the reduction to something non-zero (otherwise varying $\alpha_j$ in $D_1$ would result in the same diagram at $D_2$, meaning it is trivial). By the parsimonious property of $P$, it is mapped to a unique parameter $\beta_i$ in $D_2$. Write $\vec \alpha_j$ for the indicator vector $(\vec \alpha_j)_m = \delta_{jm} \alpha_j$ where we set every other parameter to zero, and define $\vec \beta_i$ similarly. Define the one-parameter diagrams $E_1[\alpha_j] := D_1[\vec \alpha_j]$ and $E_2[\beta_i] := D_2[\vec \beta_i + \vec c]$. Note that $\alpha_j$ and $\beta_i$ are non-trivial parameters in $E_1$ and $E_2$. 
Now let $r = P_{ij}$ be the coefficient corresponding to the mapping of $\alpha_j$ to $\beta_i$. Then
$$\tikzfig{linear-argument-2}$$
If $r \notin \{ 1, -1, 0 \}$ then note that 
$$E_1[2\pi/r] = \lambda' E_2[k 2\pi/r] = \lambda' E_2[2\pi] = \lambda' E_2[0] = \lambda' E_1[0]$$
for some constant $\lambda'$.
Hence, $E_1[2\pi/k] \propto E_1[0]$, so that $\alpha_j$ must be trivial in $E_1$ and hence in $D_1$, which is a contradiction. 
We conclude that the coefficients of the $j$-th column are either 1, $-1$, or 0.
\end{proof}

We can use this lemma to prove three useful consequences. First, any parsimonious reduction leads to a `phase teleportation' reduction, where the parameters are concentrated onto a smaller set of parameters in the \emph{original} diagram (Proposition~\ref{prop:phase-teleportation}). Second, any minimal reduction is also optimal (Corollary~\ref{cor:minimal-optimal}), and third, any parsimonious reduction can be written down by a simple ZX-diagram (Lemma~\ref{lem:P-construction}).
\begin{proposition}\label{prop:phase-teleportation}
    Let $D_1$ and $D_2$ be parametrised diagrams with respectively $k$ and $l$ parameters, $l<k$ and the parameters of $D_2$ are non-trivial, and let $D_1[\vec\alpha] = \lambda(\vec \alpha) D_2[P\vec\alpha + \vec c]$ be a parsimonious affine reduction. 
    Then there is a parametrised diagram $\widetilde{D}_1$ with $l$ parameters, which is equal as a diagram to $D_1$, but where a set of $k-l$ parameters have been set to zero, such that $D_1$ parsimoniously affinely reduces to $\widetilde{D}_1$. Specifically, we get a parsimonious reduction $D_1[\vec\alpha] = \lambda'(\vec\alpha)D_1[P'\vec\alpha]$ where $P'$ has $l$ non-zero rows.
\end{proposition}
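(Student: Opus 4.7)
The plan is to construct the matrix $P'$ explicitly from $P$ by picking one ``representative'' parameter index among each group of $\alpha_j$'s that $P$ fuses together into a single $\beta_i$, and then to verify that substituting $P'\vec\alpha$ for the parameters of $D_1$ leaves the value of $P\vec\alpha + \vec c$ inside $D_2$ unchanged, so that comparing the two instantiations of the given reduction produces the desired ``in-place'' identity.

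The first step is to invoke Lemma~\ref{lem:params-non-duplicate}, which forces every entry of $P$ to lie in $\{-1, 0, 1\}$. The parsimonious property then says each column of $P$ has at most one non-zero entry, and non-triviality of the parameters of $D_2$ forces every row of $P$ to contain some non-zero entry (otherwise the corresponding $\beta_i$ would be the constant $c_i$, hence trivial). Consequently, the row supports $G_i := \{j : P_{ij} \neq 0\}$ form a collection of $l$ non-empty, pairwise disjoint subsets of $\{1, \ldots, k\}$.

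Now I pick a representative $j_i^* \in G_i$ for each $i$, and define the $k \times k$ integer matrix $P'$ by setting $P'_{j_i^*, m} := P_{i, j_i^*}\, P_{i, m}$ for each $m \in G_i$, and zero in every other position. In particular $P'_{j_i^*, j_i^*} = 1$, and the only non-zero rows of $P'$ are the $l$ indexed by the representatives. A short calculation, using the disjointness of the $G_i$'s so that only the index $k = j_i^*$ contributes to $(PP')_{in}$, shows that $PP' = P$ as integer matrices. Parsimoniousness of $P'$ follows immediately from the disjointness: each column has at most one non-zero entry (at row $j_i^*$, if the column index lies in $G_i$).

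Finally, applying the given reduction once at $\vec\alpha$ and once at $P'\vec\alpha$, and using $PP' = P$, I obtain $D_1[P'\vec\alpha] = \lambda(P'\vec\alpha)\, D_2[P\vec\alpha + \vec c]$, which combined with $D_1[\vec\alpha] = \lambda(\vec\alpha)\, D_2[P\vec\alpha + \vec c]$ yields $D_1[\vec\alpha] = \lambda'(\vec\alpha)\, D_1[P'\vec\alpha]$ with $\lambda'(\vec\alpha) := \lambda(\vec\alpha)/\lambda(P'\vec\alpha)$. The diagram $\widetilde{D}_1$ is then just $D_1$ with the $k-l$ non-representative parameters set to zero, the remaining parameters being the $l$ linear combinations $\alpha_{j_i^*} \mapsto \sum_{m \in G_i} P_{i, j_i^*} P_{i, m}\, \alpha_m$. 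I do not foresee a real obstacle; the slight subtlety is that $PP' = P$ must hold on the nose (not merely modulo $2\pi$) so that the equality $D_2[PP'\vec\alpha + \vec c] = D_2[P\vec\alpha + \vec c]$ is genuine, and this is exactly guaranteed by the $\{-1,0,1\}$-valuedness of $P$ supplied by Lemma~\ref{lem:params-non-duplicate}. The construction is essentially formalising the phase-teleportation heuristic described at the end of Section~\ref{sec:opt-parameters}.
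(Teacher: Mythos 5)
Your construction is, modulo notation, the paper's own proof: choosing a representative $j_i^*$ in each row support $G_i$ plays exactly the role of the paper's reordering of parameters so that $P = (E \mid P'')$ with $E = \mathrm{diag}(\pm 1,\ldots,\pm 1)$; your matrix $P'$ is the paper's substitution $\vec\alpha \mapsto (EP\vec\alpha;\vec 0)$; and your identity $PP' = P$, which hinges on $P_{i,j_i^*}^2 = 1$ supplied by Lemma~\ref{lem:params-non-duplicate}, is the paper's computation using $E^2 = I$. The final step---instantiating the given reduction at both $\vec\alpha$ and $P'\vec\alpha$ and dividing, giving $\lambda'(\vec\alpha) = \lambda(\vec\alpha)/\lambda(P'\vec\alpha)$---also coincides with the paper's.

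The one genuine flaw is your justification that every row of $P$ is non-zero. Triviality of a parameter $\beta_i$ of $D_2$ is an \emph{intrinsic} property of the diagram $D_2$: it asks whether two distinct values of $\beta_i$, at some Clifford setting of the other parameters, yield proportional diagrams. It says nothing about whether the given reduction actually uses $\beta_i$. A reduction can perfectly well pin a non-trivial parameter to the constant value $c_i$, so a zero row of $P$ does not make $\beta_i$ trivial, and your parenthetical ``otherwise the corresponding $\beta_i$ would be the constant $c_i$, hence trivial'' is a non sequitur. The paper avoids this issue by simply assuming, as an explicit without-loss-of-generality step, that every parameter of $D_2$ is involved in the reduction. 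Note that your construction actually survives without that assumption: rows with empty support receive no representative, and $PP' = P$ still holds on those rows because both sides vanish there; but then $P'$ has fewer than $l$ non-zero rows, so you must either insert the paper's WLOG or weaken the stated count to ``at most $l$'' non-zero rows (which is harmless, indeed stronger, for the way the proposition is used in Corollary~\ref{cor:minimal-optimal}).
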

\begin{proof}
  By Lemma~\ref{lem:params-non-duplicate} we may assume $P$ only contains elements $1$, $-1$ and $0$. As the reduction is parsimonious, each column only contains a single $\pm 1$. Without loss of generality we will assume that each parameter of $D_2$ is involved in the reduction, so that $P$ does not contain a zero row. Furthermore, by reordering the parameters in $\vec \alpha$, we may without loss of generality assume that $P$ looks like $P = (E | P')$ where $E=\text{diag}(\pm 1,\ldots, \pm 1)$, i.e.~the $i$th parameter of $D_1$ is sent to the $i$th parameter of $D_2$, and $P'$ is some other matrix that determines where the last $k-l$ parameters of $D_1$ are sent to. 
  Note that $P\vec \alpha$ is a vector of length $l$. We can pad it with zeroes to treat it as a vector of length $k$, which we denote as $P\vec \alpha;\vec 0$. We claim that $D_1[\vec \alpha] \propto D_1[(EP)\vec \alpha;\vec 0]$. To see this, note that
  \begin{align*}
    D_1[(EP)\vec\alpha;\vec 0] \ &=\ \lambda(EP\vec\alpha;\vec 0) D_2[P(EP\vec\alpha;\vec 0) + \vec c] \ = \ \lambda(EP\vec\alpha;\vec 0)  D_2[EEP\vec\alpha + P'\vec 0 + \vec c] \\
    &= \ \lambda(EP\vec\alpha;\vec 0) D_2[P\vec\alpha + \vec c] \ = \ \lambda(EP\vec\alpha;\vec 0)\lambda(\vec\alpha)^{-1} D_1[\vec\alpha].
  \end{align*}
  Now, let $\widetilde{D}_1$ be the diagram equal to $D_1$, but where the last $k-l$ parameters are set to zero, so that $\widetilde{D}_1[\vec \gamma] = D_1[\vec \gamma; \vec 0]$, then it is clear that there is a parsimonious reduction from $D_1$ to $\widetilde{D}_1$ given by $D_1[\vec \alpha] = \lambda'(\vec \alpha)\widetilde{D}_1[EP\vec\alpha]$ where $\lambda'(\vec\alpha)=\lambda(EP\vec\alpha;\vec 0)\lambda(\vec\alpha)^{-1}$.
\end{proof}

\begin{corollary}\label{cor:minimal-optimal}
  A parsimonious reduction that is minimal must also be optimal. Formally, let $D_1$, $D_2$ and $D_3$ be parametrised diagrams with only non-trivial parameters. Suppose that $D_1$ parsimoniously affinely reduces to $D_2$ and to $D_3$, and furthermore suppose that $D_2$ does not parsimoniously reduce to any smaller set of parameters (so $D_2$ is \emph{minimal}). Then $D_3$ has at least as many parameters as $D_2$ (meaning $D_2$ is \emph{optimal} among all parsimonious affine reductions of $D_1$).
\end{corollary}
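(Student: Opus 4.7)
The plan is a proof by contradiction: suppose $D_3$ has strictly fewer parameters than $D_2$, say $l_3 < l_2$. I will construct an explicit parsimonious affine reduction from $D_2$ to $D_3$, directly contradicting the minimality of $D_2$.

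Denote the two given reductions by $D_1[\vec\alpha] = \lambda_2(\vec\alpha)\, D_2[P_2 \vec\alpha + \vec c_2]$ and $D_1[\vec\alpha] = \lambda_3(\vec\alpha)\, D_3[P_3 \vec\alpha + \vec c_3]$. Because the parameters of $D_2$ and $D_3$ are non-trivial, Lemma~\ref{lem:params-non-duplicate} forces both $P_2$ and $P_3$ to have all entries in $\{-1,0,1\}$. Without loss of generality every parameter of $D_2$ is actually used, so $P_2$ has no zero row; combined with parsimony (at most one non-zero entry per column), this lets me reorder the columns of $P_2$ (equivalently, relabel the parameters of $D_1$) to put it in the form $P_2 = (E \mid P'_2)$ with $E$ a diagonal $\pm 1$ matrix of size $l_2 \times l_2$, exactly as in the proof of Proposition~\ref{prop:phase-teleportation}.

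The key step is then to build an affine section of the map $\vec\alpha \mapsto P_2 \vec\alpha + \vec c_2$: for any $\vec\beta \in \mathbb{R}^{l_2}$ set $\vec\alpha(\vec\beta) := (E^{-1}(\vec\beta - \vec c_2);\, \vec 0) \in \mathbb{R}^k$, so that $P_2 \vec\alpha(\vec\beta) + \vec c_2 = \vec\beta$ on the nose. Substituting $\vec\alpha(\vec\beta)$ into both given reductions and equating yields
\[
  D_2[\vec\beta] \;=\; \frac{\lambda_3(\vec\alpha(\vec\beta))}{\lambda_2(\vec\alpha(\vec\beta))}\, D_3[Q\vec\beta + \vec c'],
\]
where $Q = P_{3,a}\, E^{-1}$, the matrix $P_{3,a}$ consists of the first $l_2$ columns of $P_3$ after the reordering, and $\vec c' = \vec c_3 - P_{3,a} E^{-1} \vec c_2$. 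This exhibits an affine reduction from $D_2$ to a diagram ($D_3$) with only $l_3 < l_2$ parameters, and the scalar function $\mu(\vec\beta) := \lambda_3(\vec\alpha(\vec\beta))/\lambda_2(\vec\alpha(\vec\beta))$ is manifestly nowhere zero.

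It remains to check that $Q$ is parsimonious. But each column of $P_3$ has at most one non-zero entry, so the same holds for each column of $P_{3,a}$, and right-multiplication by the diagonal $\pm 1$ matrix $E^{-1}$ merely rescales columns by $\pm 1$, preserving this property. Thus $Q$ is parsimonious with entries in $\{-1,0,1\}$, and we obtain a parsimonious affine reduction of $D_2$ to a diagram with fewer parameters, contradicting the minimality of $D_2$. The main subtlety is precisely that this composition would break if $P_2$ contained integer entries larger than $1$ in absolute value (the section $\vec\alpha(\vec\beta)$ would no longer be integral, and the composed matrix could fail to be parsimonious); so the quantisation of coefficients to $\{-1,0,1\}$ afforded by Lemma~\ref{lem:params-non-duplicate} is doing the essential work.
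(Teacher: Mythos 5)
Your proof is correct and reaches the conclusion by a genuinely shorter route than the paper's. Both arguments rest on the same two ingredients: Lemma~\ref{lem:params-non-duplicate} to force the entries of $P_2$ (and $P_3$) into $\{-1,0,1\}$, and the column-reordered decomposition $P_2 = (E \mid P_2')$ with $E = \mathrm{diag}(\pm1)$, which yields an affine right-inverse of the reduction from $D_1$ to $D_2$; your section $\vec\alpha(\vec\beta) = (E^{-1}(\vec\beta - \vec c_2);\vec 0)$ is exactly the paper's $Q\vec\beta - Q\vec c$. The difference is what gets composed. The paper forms the three-step composite $D_2 \to D_1 \to D_1 \to D_2$, where the middle self-reduction of $D_1$ is obtained from the $D_1 \to D_3$ reduction via Proposition~\ref{prop:phase-teleportation}; it then argues that the resulting parsimonious self-reduction matrix $P' = P_2 P_1 Q$ satisfies $\mathrm{rank}(P') \le \min(k_2,k_3)$ while minimality forces $\mathrm{rank}(P') = k_2$, using that rank equals the number of non-zero rows for parsimonious matrices. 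You instead compose $D_2 \to D_1 \to D_3$ directly, exhibiting a parsimonious affine reduction from $D_2$ onto $D_3$ itself, which contradicts minimality immediately when $D_3$ has fewer parameters. This avoids Proposition~\ref{prop:phase-teleportation} and all rank bookkeeping; your verification that $Q = P_{3,a}E^{-1}$ is integral and parsimonious is exactly right, as is your closing remark that the construction would collapse without the quantisation of coefficients (a non-unit entry in $E$ would make the section non-integral). The paper's detour has the side benefit of reusing machinery it develops anyway, but as a proof of this corollary your composition is leaner.

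One caveat, which you share with the paper rather than introduce: the step ``without loss of generality every parameter of $D_2$ is actually used, so $P_2$ has no zero row'' is a genuine extra hypothesis, not a harmless normalisation. If $P_2$ has a zero row there is no section of $\vec\alpha \mapsto P_2\vec\alpha + \vec c_2$, and the statement as literally written can fail: take $D_1[\alpha] = \ket{+_\alpha}\otimes\ket{+_0}$, $D_2[\beta_1,\beta_2] = \ket{+_{\beta_1}}\otimes\ket{+_{\beta_2}}$, and $D_3 = D_1$. Then $D_1$ reduces parsimoniously to $D_2$ via $\alpha \mapsto (\alpha,0)$ and to $D_3$ via the identity, all parameters are non-trivial, and $D_2$ is minimal (its two-parameter family of states cannot be reproduced by any single affine combination $m_1\beta_1 + m_2\beta_2 + c$ fed into a one-parameter diagram), yet $D_3$ has fewer parameters than $D_2$. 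The paper hides the same assumption behind the phrase ``by minimality'', which is equally unjustified as stated. In the intended application (Theorem~\ref{thm:main-result}) the assumption does hold, since every parameter of the algorithm's output involves at least one original parameter; but in both your write-up and the paper's it should really be recorded as a hypothesis of the corollary rather than dispatched as a WLOG.
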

\begin{proof}
  Let $k_1$, $k_2$ and $k_3$ be the number of parameters of $D_1$, $D_2$ and $D_3$. We can without loss of generality take $k_2\leq k_1$ and $k_3\leq k_1$, and we may assume that each parameter of $D_2$ is involved in the reduction from $D_1$ to $D_2$ by minimality. We need to show that $k_2\leq k_3$. 
  We will do this by constructing a new parsimonious reduction from $D_2$ to itself which has no more parameters than $\min(k_2, k_3)$. As $D_2$ is minimal by assumption, this implies that $k_2\leq k_3$ as required.

  Let $D_1[\vec\alpha] = \lambda(\vec\alpha)D_2[P_2\vec\alpha + \vec c]$ be the minimal parsimonious reduction.
  We can pick a right-inverse $Q$ of $P_2$ so that $P_2Q = \mathbb{I}$, because $P_2$ is parsimonious, and every parameter in $D_2$ has at least 1 preimage in $D_1$ by minimality (essentially, for each parameter $j$ in $D_2$, find a parameter $i$ in $D_1$ that maps to it, and then construct $Q$ that does this mapping $j\mapsto i$ including the right minus signs). Note that $Q$ is itself also parsimonious, with each column corresponding to a parameter of $D_2$ being sent to some parameter in $D_1$. Then note that 
  $$D_1[Q\vec\beta - Q\vec c] \ =\  \lambda(Q\vec\beta) D_2[P_2Q\vec\beta - P_2Q\vec c + \vec c] \ = \ \lambda(Q\vec\beta) D_2[\vec\beta].$$
  Hence, we have a parsimonious reduction from $D_2$ to $D_1$ given by $D_2[\vec\beta] = \lambda(Q\vec\beta)^{-1} D_1[Q\vec\beta - Q\vec c]$.
  By Proposition~\ref{prop:phase-teleportation}, the reduction from $D_1$ to $D_3$ implies a reduction from $D_1$ to itself: $D_1[\vec\alpha]=\lambda_1(\vec\alpha) D_1[P_1\vec\alpha]$ where $P_1$ has $k_3$ non-zero rows, so $\mathrm{rank}(P_1) = k_3$. Note that the converse is true for parsimonious reductions: if $\mathrm{rank}(P) = k$ then $P$ has exactly $k$ non-zero rows, as the columns of $P$ have at most 1 non-zero entry.

  Now we can compose these parsimonious reductions as follows:
  \begin{align*}
    D_2[\vec\beta] \ &=\  \lambda(Q\vec\beta)^{-1} D_1[Q\vec\beta - Q\vec c] \ \\
    &= \ \lambda_1(Q\vec\beta-Q\vec c)\lambda(Q\vec\beta)^{-1} D_1[P_1Q\vec\beta - P_1Q\vec c] \ \\
    &= \ \lambda(P_1Q\vec\beta - P_1Q\vec c)\lambda_1(Q\vec\beta-Q\vec c)\lambda(Q\vec\beta)^{-1} D_2[P_2P_1Q\vec\beta - P_2P_1Q\vec c + \vec c] \\
    &= \ \lambda'(\vec \beta) D_2[P'\vec\beta + \vec d]
  \end{align*}
  where here in the last step we defined $\lambda'(\vec \beta) = \lambda(P_1Q\vec\beta - P_1Q\vec c)\lambda_1(Q\vec\beta-Q\vec c)\lambda(Q\vec\beta)^{-1}$, $P' = P_2P_1Q$ and $\vec d = \vec c - P_2P_1Q\vec c$.
  $P'$ corresponds to a parsimonious reduction, as parsimonious reductions are closed under composition. This can be seen by considering the action of the overall reduction on individual parameters and checking that the parsimonious property is preserved.
  By minimality of $D_2$ we know that $\mathrm{rank}(P') = k_2$.
  But we can also calculate: $\mathrm{rank}(P') = \mathrm{rank}(P_2P_1Q) \leq \min(\mathrm{rank}(P_2), \mathrm{rank}(P_1), \mathrm{rank}(Q)) = \min(k_2, k_3)$.  Hence, $k_2 \leq k_3$ as required.
\end{proof}

We can also use Lemma~\ref{lem:params-non-duplicate} to show that any parsimonious reduction is particularly simple, meaning that we can implement the mapping of the parameters in the reduction using a small ZX-diagram.

\begin{lemma}\label{lem:P-construction}
  Let $D_1$ and $D_2$ be parametrised diagrams with respectively $k$ and $l$ parameters and where all the parameters of $D_1$ and $D_2$ are non-trivial. Let $D_1[\vec\alpha] = \lambda(\vec \alpha) D_2[P\vec\alpha + \vec c]$ be a parsimonious affine reduction and suppose each parameter of $D_2$ is involved in the reduction. Then there is a linear map $\hat P \in \mathbb{C}^{2^l \times 2^k}$ constructed using \tikzfig{p-generators} where $s_i\in\{0,1\}$ and $c_i\in\R$ such that
  $\hat P\bigotimes^{k}_{i=1} \ket{+_{\alpha_i}} \propto \bigotimes^{l}_{j=1} \ket{+_{\beta_j}}$ where $\vec \beta = P\vec\alpha + \vec c$. In addition $D'_1 = \lambda' D'_2 \circ \hat P$ for some constant scalar $\lambda'$.
\end{lemma}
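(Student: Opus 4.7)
The plan is to construct $\hat P$ explicitly from the promised generators by exploiting two structural facts about the parsimonious affine reduction: by Lemma~\ref{lem:params-non-duplicate} all entries of $P$ lie in $\{-1,0,1\}$, and by parsimoniousness each \emph{column} of $P$ has at most one non-zero entry, so each input parameter is routed to at most one output parameter. The generators indicated by \tikzfig{p-generators} are the pieces needed to implement, on the state $\ket{+_\alpha}$, the three elementary operations that make up such an affine map: adding a constant $c_i$ (a Z-phase gate), flipping the sign via $s_i\in\{0,1\}$ (an X $\pi$-gate, using $X\ket{+_\alpha}\propto\ket{+_{-\alpha}}$), and summing several parameters into one output (a Z-spider multiplication, which on $\ket{+_\alpha}\otimes\ket{+_\beta}$ yields $\ket{+_{\alpha+\beta}}$). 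Discarding a parameter not used by $P$ is done with a Z-counit.

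First I would build $\hat P$ row by row. For each output index $j\in\{1,\dots,l\}$, let $S_j = \{i : P_{ji}\ne 0\}$ and $\sigma_{ji} = (1-P_{ji})/2 \in \{0,1\}$ record the sign bit. Place an X-spider carrying phase $\sigma_{ji}\pi$ on each input wire $i\in S_j$, feed these wires into a Z-spider merging them into a single wire, and finally apply a Z-phase gate with angle $c_j$ to produce output $j$. For each $i$ not appearing in any $S_j$, cap the corresponding input with a Z-counit. Spider fusion and the identity $X\ket{+_\alpha} \propto \ket{+_{-\alpha}}$ then give, up to an explicit $\vec\alpha$-dependent scalar $\mu(\vec\alpha)$ coming only from these proportionality constants,
\begin{equation*}
  \hat P\,\bigotimes_{i=1}^{k}\ket{+_{\alpha_i}} \;=\; \mu(\vec\alpha)\,\bigotimes_{j=1}^{l} \ket{+_{\beta_j}}, \qquad \vec\beta = P\vec\alpha+\vec c,
\end{equation*}
which establishes the first claim. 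A short calculation shows $\mu(\vec\alpha)$ has the form $\prod_{j,i} e^{\sigma_{ji} i\alpha_i}$ times a constant, i.e.\ it is a product of monomials $e^{\pm i\alpha_i}$, which will matter in the final step.

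For the second claim, write $D_1[\vec\alpha] = D_1'\circ\bigotimes_i \ket{+_{\alpha_i}}$ and $D_2[\vec\beta]=D_2'\circ\bigotimes_j \ket{+_{\beta_j}}$ by the definition of a parametrised diagram. Combining the reduction $D_1[\vec\alpha]=\lambda(\vec\alpha)D_2[P\vec\alpha+\vec c]$ with the action of $\hat P$ just established gives
\begin{equation*}
  D_1'\circ \bigotimes_{i}\ket{+_{\alpha_i}} \;=\; \frac{\lambda(\vec\alpha)}{\mu(\vec\alpha)}\; (D_2'\circ\hat P)\circ \bigotimes_{i}\ket{+_{\alpha_i}} \qquad \text{for all }\vec\alpha.
\end{equation*}
The right-hand side is itself a parametrised diagram with the parameters $\vec\alpha$ extracted via $\ket{+_{\alpha_i}}$ states, and it is equal to the parametrised diagram $D_1$ up to the scalar function $\lambda(\vec\alpha)/\mu(\vec\alpha)$. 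Since the parameters of $D_1$ are non-trivial by hypothesis, Proposition~\ref{prop:scalar-is-constant} applies and forces this scalar to be a constant $\lambda'$, yielding $D_1' = \lambda' D_2'\circ\hat P$ as required.

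The main obstacle I expect is bookkeeping the scalar $\mu(\vec\alpha)$ produced by the X-pushes and Z-spider mergers so that one can legitimately invoke Proposition~\ref{prop:scalar-is-constant}: this requires knowing $\mu$ is a nowhere-zero function on $(\mathbb{R}/2\pi\mathbb{Z})^k$, which follows from the monomial form above. The construction of $\hat P$ itself and the verification of its action are essentially a bookkeeping exercise using spider fusion and $\pi$-pushing, but they must be done carefully enough to get the right proportionality and to ensure the resulting diagram genuinely uses only the advertised generators.
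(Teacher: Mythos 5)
Your construction and overall logic are essentially the paper's proof: build $\hat P$ from $X(\pi)$ sign flips, Z-spider fusions, and $Z(c_j)$ phase gates; track the $e^{i\alpha_i}$ phases produced by $\pi$-pushing into a nowhere-zero scalar; and then invoke Proposition~\ref{prop:scalar-is-constant} (via non-triviality of $D_1$'s parameters) to turn the parameter-dependent scalar into a constant. The key lemmas used (Lemma~\ref{lem:params-non-duplicate} and Proposition~\ref{prop:scalar-is-constant}) are the same in both arguments.

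The one loose end is your treatment of input parameters not appearing in any $S_j$. The paper closes this case before it opens: since every parameter of $D_1$ is non-trivial, no column of $P$ can be all zero (an all-zero column would make the corresponding $\alpha_i$ trivial in $D_1$, as varying it would change $D_1[\vec\alpha]$ only through the non-zero scalar $\lambda$), so each column has exactly one entry $\pm 1$ and nothing is ever discarded. You instead cap such inputs with a Z-counit, and here your argument becomes internally inconsistent: a Z-counit applied to $\ket{+_{\alpha_i}}$ produces the scalar $1 + e^{i\alpha_i}$, which is not a monomial and vanishes at $\alpha_i = \pi$, so in that (hypothetical) case $\mu$ would fail to be nowhere-zero --- precisely the property you correctly identify as essential for applying Proposition~\ref{prop:scalar-is-constant} --- and the claimed proportionality $\hat P \bigotimes_{i=1}^{k}\ket{+_{\alpha_i}} \propto \bigotimes_{j=1}^{l}\ket{+_{\beta_j}}$ would itself fail at those points. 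Since the discard case is vacuous under the lemma's hypotheses, your proof survives, but you should state the no-zero-column observation explicitly; it is also the reason counits need not appear among the generators of $\hat P$ at all.
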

\begin{proof}
  Note first that the reduction being parsimonious, combined with all parameters of $D_2$ being involved in the reduction implies that $l\leq k$.
  Write $\vec \beta = P\vec\alpha + \vec c$.
  By Lemma~\ref{lem:params-non-duplicate}, $P$ only contains elements from $\{1,0,-1\}$, and there is at most one non-zero element in every column. Because all the parameters of $D_1$ are non-trivial, a column cannot be all zero, so that each column contains exactly 1 non-zero element, either $1$ or $-1$. Let $s_i$ be 1 if the non-zero element in the $i$th row is $-1$, and otherwise let $s_1$ be zero. Then for each $j$ we have $\beta_j = (-1)^{s_{k_{j1}}}\alpha_{k_{j1}} + \cdots + (-1)^{s_{k_{jn_j}}}\alpha_{k_{jn_j}} + c_j$ where $n_j$ is the number of non-zero elements in the $j$th row of $P$ and the $k_{j1}, \ldots, k_{jn_j}$ enumerate those positions. Hence:
  \begin{equation*}
    \tikzfig{P-map-construction}
  \end{equation*}
  Here $\lambda'(\vec \alpha) = \prod_j e^{is_j\alpha_{j}}$ is the phase that is needed for the rewrite $X\ket{+_\alpha} = e^{i\alpha} \ket{+_{-\alpha}}$. Note that for each $i$ there is a unique $k_{jm}$ that is equal to $i$, so that $\alpha_i$ occurs uniquely in the equation above. Hence, by using some swaps we can write:
  \begin{equation*}
    \tikzfig{P-map-construction2}
  \end{equation*}
  Here $\hat P$ is defined as the combination of elements from \tikzfig{p-generators} that makes this equation work.
  We now have 
  $$D_1[\vec \alpha] \ =\  \lambda(\vec \alpha) D_2[\beta] \ =\  \lambda(\vec \alpha)\lambda'(\vec \alpha) (D_2\circ \hat P)[\vec \alpha].$$
  We see then that Proposition~\ref{prop:scalar-is-constant} applies, and that hence $\lambda(\alpha)\lambda'(\alpha)$ must be some constant scalar $\lambda'$, and that $D'_1 = \lambda' D'_2\circ \hat P$.
\end{proof}

We are now almost ready to prove the main result. We just need the following lemma about stabilisers of weight 2 of graph states with local Cliffords.

\begin{lemma}\label{lem:zz-redex}
  Let $|\psi\> = (C_1 \otimes \ldots \otimes C_n)|G\>$ be a state expressed as a graph state $|G\>$ for graph $G$ followed by local Cliffords $C_i$. Assume that for a pair of distinct qubits $u, v \in \{ 1, \ldots, n \}$, $Z_u Z_v |\psi\> = |\psi\>$ and $C_u, C_v \in \{ I, H \}$. It must be the case that either:
  \begin{enumerate}
    \item[(i)] $u$ and $v$ are connected in $G$, have no other neighbours, and $C_u \otimes C_v$ is $I \otimes H$ or $H \otimes I$.
    \item[(ii)] $u$ and $v$ are not connected in $G$, have identical neighbours, and $C_u \otimes C_v = H\otimes H$.
  \end{enumerate}
\end{lemma}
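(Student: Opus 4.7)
The plan is to translate the stabiliser condition $Z_u Z_v |\psi\rangle = |\psi\rangle$ into a statement about the stabiliser group of the graph state $|G\rangle$, and then carry out a case analysis over the relevant two-qubit products of graph-state generators.

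First I would conjugate through the local Cliffords: $Z_u Z_v |\psi\rangle = |\psi\rangle$ holds if and only if $P_u \otimes P_v \otimes I_{\mathrm{rest}}$ lies in the stabiliser group of $|G\rangle$, where $P_w = C_w^\dagger Z_w C_w$. Since $C_u, C_v \in \{I, H\}$ and $HZH = X$, each $P_w$ is either $X$ (if $C_w = H$) or $Z$ (if $C_w = I$). Next, I would use the graph-state generators $K_w = X_w \prod_{y \in N(w)} Z_y$ and the fact that every stabiliser has the form $\prod_w K_w^{a_w}$. Requiring support on $\{u,v\}$ forces $a_w = 0$ for every $w \notin \{u,v\}$ (the $X$'s on other qubits cannot cancel), leaving only the four candidates $K_u^\beta K_v^\gamma$ with $(\beta,\gamma) \in \{0,1\}^2$, subject to the additional $Z$-part condition $\beta \cdot [w \in N(u)] + \gamma \cdot [w \in N(v)] \equiv 0 \pmod 2$ for every $w \notin \{u,v\}$.

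Then I would enumerate the four cases and compute the induced Pauli on $\{u,v\}$. The case $(0,0)$ gives the identity and cannot match $P_u P_v$. The case $(1,0)$ gives $K_u = X_u \prod_{y \in N(u)} Z_y$; the support conditions force $N(u) \subseteq \{v\}$, and non-triviality on $v$ then requires $u \sim v$ and $N(u) = \{v\}$, yielding the stabiliser $X_u Z_v$ and matching $C_u \otimes C_v = H \otimes I$. The case $(0,1)$ is symmetric, giving $u \sim v$, $N(v) = \{u\}$, stabiliser $Z_u X_v$, and $C_u \otimes C_v = I \otimes H$. For $(1,1)$ a direct calculation gives $K_u K_v = (-1)^{[u \sim v]} X_u X_v \prod_{y \in N(u) \triangle N(v)} Z_y$; when $u \sim v$ both $u$ and $v$ lie in $N(u) \triangle N(v)$, and combining the leading sign with $XZ = -iY$ on each of $u$ and $v$ produces $Y_u Y_v \otimes \prod_{y \in (N(u) \triangle N(v)) \setminus \{u,v\}} Z_y$, which is incompatible with $P_u, P_v \in \{X,Z\}$; when $u \not\sim v$, neither $u$ nor $v$ is in $N(u) \triangle N(v)$, supportedness forces $N(u) = N(v)$, and the stabiliser is $X_u X_v$, matching $C_u \otimes C_v = H \otimes H$.

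Collecting the surviving cases yields (i) from $(1,0)$ and $(0,1)$ (where the vertex carrying the Hadamard has the other as its sole neighbour and they are adjacent) and (ii) from $(1,1)$ with $u \not\sim v$; the case $C_u \otimes C_v = I \otimes I$ is automatically excluded. The main obstacle I expect is the sign bookkeeping in the connected $(1,1)$ case: one must carefully track the $-1$ picked up by commuting $X_v$ past the $Z_v$ sitting in $\prod_{y \in N(u)} Z_y$, together with the two $XZ = -iY$ collapses, to confirm that the product is genuinely $Y_u Y_v$ on the pair (up to $Z$'s outside) and hence rules out any spurious $X_u X_v$ stabiliser in the connected regime.
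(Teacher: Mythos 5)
Your proposal follows essentially the same route as the paper's proof: conjugate $Z_uZ_v$ through the local Cliffords to get a weight-two stabiliser of $|G\rangle$ with tensor factors in $\{X,Z\}$, observe that a stabiliser supported on $\{u,v\}$ can involve at most the two generators $K_u,K_v$ (your ``the $X$'s on other qubits cannot cancel'' is the paper's ``a product of $k$ distinct generators has weight $\geq k$''), and enumerate the four products. Your case analysis, including the sign bookkeeping that gives $K_uK_v = Y_uY_v$ (up to $Z$'s outside $\{u,v\}$) in the connected case, is correct.

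There is, however, a discrepancy worth flagging, and it lies in the lemma statement rather than in your argument. In your cases $(1,0)$ and $(0,1)$ you conclude, correctly, only that the \emph{Hadamard-carrying} vertex has the other as its sole neighbour; condition (i) of the lemma claims that \emph{both} $u$ and $v$ have no other neighbours. That stronger claim does not follow from the derivation, and is in fact false as stated: take $G$ to be the path on $\{u,v,w\}$ with edges $uv$ and $vw$, and $C_u = H$, $C_v = C_w = I$. Then $K_u = X_uZ_v$ stabilises $|G\rangle$, hence $Z_uZ_v|\psi\rangle = |\psi\rangle$ with $C_u,C_v\in\{I,H\}$, yet $v$ has the extra neighbour $w$, so neither (i) nor (ii) holds. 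The paper's own proof derives exactly what you derive ($S = X_uZ_v$ or $Z_uX_v$, i.e.\ $N(u)=\{v\}$ resp.\ $N(v)=\{u\}$) and then asserts ``hence $u$ and $v$ satisfy condition (i)'' without justification, so your honest statement of the weaker conclusion is the correct one. For what it is worth, the weaker form still suffices for the lemma's use in Theorem~\ref{thm:main-result}: in case (i) the Hadamard-carrying parameter vertex is a phase gadget with a single neighbour, and the simplification algorithm guarantees that phase gadgets in its output have at least two neighbours, so that case still yields a contradiction.
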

\begin{proof}
  The layer of local Cliffords $C_1 \otimes \ldots \otimes C_n$ preserves the weight of stabilisers, so any stabiliser of the form $Z_uZ_v$ of $|\psi\>$ must come from a weight-2 stabiliser $P_u P_v$ of $|G\>$.
  The latter are generated by stabilisers of the form $S(w) := X_w \prod_{w'\in \textrm{nhd}(w)} Z_{w'}$ for all vertices $w$ in $G$. Any product of $k$ distinct generators has weight $\geq k$, hence a weight-2 stabiliser $S$ for $|G\>$ must be the product of 1 or 2 generators. If $S$ is a generator itself, it must be of the form $X_u Z_v$ or $Z_u X_v$, either of which will be transformed into $Z_u Z_v$ by local Cliffords when $C_u \otimes C_v = H \otimes I$ or $I \otimes H$, respectively. Hence $u$ and $v$ satisfy condition (i).

  If $S$ is the product of 2 distinct generators, then $S = S(u) S(v)$ and $\textrm{nhd}(u)\setminus\{v\} = \textrm{nhd}(v)\setminus\{u\}$, otherwise $S$ would have support at some qubit $\notin \{ u, v \}$. If $u$ and $v$ are connected, then $S = Y_u Y_v$, which cannot be translated into $Z_u Z_v$ via local Cliffords in the set $\{ I, H \}$. Therefore $u$ and $v$ must not be connected, $\textrm{nhd}(u) = \textrm{nhd}(v)$, and $S = X_u X_v$. This gives a stabiliser $Z_u Z_v$ for $|\psi\>$ when $C_u \otimes C_v = H\otimes H$. Hence $u$ and $v$ satisfy condition (ii).
\end{proof}

We can now prove the main theorem.

\begin{theorem}\label{thm:main-result}
The parameter optimisation algorithm of~\cite{kissinger2019tcount} described in Section~\ref{sec:opt-parameters} is optimal. That is, if $C_1$ is a parametrised circuit, $C_2$ is the optimised circuit it reduces to found by~\cite{kissinger2019tcount}, and $C_3$ is any other parametrised circuit that $C_1$ parsimoniously affinely reduces to, then $C_3$ has at least as many parameters as $C_2$.
\end{theorem}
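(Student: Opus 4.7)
The plan is to reduce optimality to \emph{minimality} and then derive a contradiction from the structural invariants of the algorithm's output. By Corollary~\ref{cor:minimal-optimal} it suffices to show that the diagram $D_{C_2}$ produced by the algorithm admits no parsimonious affine reduction to a diagram with strictly fewer parameters. I would proceed by contradiction: assume such a reduction to some $D^\star$ with $l<k$ parameters exists, and invoke Lemma~\ref{lem:P-construction} to package it as a ZX-map $\hat P$ with $k$ inputs and $l$ outputs, built from wire swaps, Z-spiders, and $\pi$ phases, satisfying $D_{C_2}' = \lambda'\, (D^\star)' \circ \hat P$ for a constant scalar $\lambda'$.

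The next step is to turn the strict inequality $l<k$ into a Pauli stabiliser. Since the integer matrix $P$ realising the reduction has more columns than rows and each column of $P$ contains exactly one non-zero entry (by Lemma~\ref{lem:params-non-duplicate} combined with parsimony), at least one row $j$ contains two non-zero entries in columns $u$ and $v$, giving $\beta_j = \pm\alpha_u \pm \alpha_v + \cdots + c_j$, where by parsimony $\alpha_u$ and $\alpha_v$ appear in no other $\beta_{j'}$. Simultaneously shifting $\alpha_u \mapsto \alpha_u + \pi$ and $\alpha_v \mapsto \alpha_v + \pi$ leaves every $\beta_{j'}$ invariant modulo $2\pi$. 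Since $Z\ket{+_\alpha} = \ket{+_{\alpha+\pi}}$, this is the same as saying that $Z_u Z_v$ acts trivially on the parameter inputs of $D_{C_2}'$; using Proposition~\ref{prop:scalar-is-constant} to be sure $\lambda'$ is independent of $\vec\alpha$, passing through the Choi isomorphism makes $Z_u Z_v$ a stabiliser of $D_{C_2}'$ viewed as a pure state.

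Finally, I would invoke the structure of the algorithm's output recalled at the end of Section~\ref{sec:opt-non-clifford}: $D_{C_2}'$ is a graph state with local Cliffords, and on every parameter output the local Clifford is either $I$ (direct attachment) or $H$ (phase gadget). Both $u$ and $v$ therefore satisfy the hypotheses of Lemma~\ref{lem:zz-redex}, leaving only two cases. Case (ii) would force $u$ and $v$ to both be phase gadgets with identical neighbourhoods, directly violating the uniqueness invariant for phase-gadget neighbourhoods maintained by the algorithm. Case (i) would force exactly one of $u,v$ to be a phase gadget whose head vertex has a single graph-neighbour, violating the invariant that every phase gadget has at least two neighbours. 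Both cases yield contradictions, so $D_{C_2}$ is minimal, and hence optimal by Corollary~\ref{cor:minimal-optimal}.

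The step I expect to be the main obstacle is the translation from ``the reduction fuses $\alpha_u$ and $\alpha_v$'' into the clean statement ``$Z_u Z_v$ stabilises $D_{C_2}'$''. One has to check that the $\pm$ signs permitted by Lemma~\ref{lem:params-non-duplicate} cancel correctly, that the constant offsets $\vec c$ play no role under $\pi$-shifts (since $2\pi\equiv 0$), and that the scalar factor in the reduction is genuinely $\vec\alpha$-independent so that the stabiliser equation holds on the nose rather than only up to a parameter-dependent phase. Once this bookkeeping is done, the rest is a direct pigeonhole between Lemma~\ref{lem:zz-redex} and the two structural invariants of the algorithm's output.
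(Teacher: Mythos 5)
Your proposal follows the paper's own proof almost step for step: reduce optimality to minimality via Corollary~\ref{cor:minimal-optimal}, package the hypothetical smaller reduction through Lemma~\ref{lem:params-non-duplicate} and Lemma~\ref{lem:P-construction}, pigeonhole on the columns of $P$ to find a row that fuses two parameters $\alpha_u,\alpha_v$, turn that fusion into a weight-two $Z$-type stabiliser of the GSLC diagram, and rule out both cases of Lemma~\ref{lem:zz-redex} against the algorithm's structural invariants. (Your handling of case (i) is a small, valid variation: you appeal to the invariant that every phase gadget has at least two neighbours, whereas the paper argues that the disconnected pair $\{u,v\}$ contributes only a scalar, making the parameters trivial; both close that case. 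You should also state the paper's two without-loss-of-generality reductions --- that every parameter of the target diagram is involved in the reduction and is non-trivial --- since these are hypotheses of the lemmas you invoke.)

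The genuine gap is exactly at the point you deferred as bookkeeping: the $\pm$ signs do \emph{not} cancel in general. If the fusing row reads $\beta_j = \alpha_u - \alpha_v + \cdots + c_j$ (one $+1$ and one $-1$ entry), then the $\pi$-shift argument combined with Proposition~\ref{prop:scalar-is-constant} only gives $D'\circ Z_uZ_v = \mu' D'$ for a \emph{constant} $\mu'$, and in this case $\mu'=-1$. Concretely, let $D_1[\alpha_u,\alpha_v]$ be the Z-fusion of $\ket{+_{\alpha_u}}$ with $X\ket{+_{\alpha_v}}$, so that $D_1[\alpha_u,\alpha_v] = e^{i\alpha_v}\ket{+_{\alpha_u-\alpha_v}}$, a parsimonious affine reduction onto the single-parameter diagram $\ket{+_\beta}$ with $\beta = \alpha_u-\alpha_v$; shifting both parameters by $\pi$ multiplies the state by $e^{i\pi}=-1$, so $-Z_uZ_v$, not $Z_uZ_v$, stabilises $D_1'$. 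Since Lemma~\ref{lem:zz-redex} presupposes $Z_uZ_v\ket{\psi}=+\ket{\psi}$, your argument cannot be invoked in this case, and mixed-sign rows cannot be assumed away (they arise whenever a pivot has negated one of the phases before fusion, and negating the target parameter flips \emph{both} signs, so the row stays mixed). The paper closes this case with an extra step you would need to add: when $s_u+s_v$ is odd, precompose $D_1'$ with the Pauli $X_u$; since $X_u$ anticommutes with $Z_uZ_v$, the new diagram has $+Z_uZ_v$ as a genuine stabiliser, and the Pauli $X$ can be absorbed back into the GSLC form (through a Hadamard it becomes a $Z$, and on a plain vertex it becomes $Z$'s on the neighbours) without changing the graph or the local Cliffords at $u$ and $v$, after which Lemma~\ref{lem:zz-redex} applies as before. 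With that case added, your proof is complete and coincides with the paper's.
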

\begin{proof}
Suppose we start with some parametrised Clifford circuit $C[\vec \alpha']$. Then all its parameters are non-trivial by Proposition~\ref{prop:circ-nontrivial}. Let $D_1[\vec\alpha]$ be the parametrised Clifford diagram produced by applying the ZX-based simplification strategy of Section~\ref{sec:opt-parameters} to $C$. Hence, $C$ parsimoniously affinely reduces to $D_1$. We need to show that $D_1$ is minimal. By Corollary~\ref{cor:minimal-optimal}, $D_1$ is then also optimal.

Suppose now that there exists a parametrised diagram $D_2[\vec\beta]$ such that $D_1[\vec\alpha] = \lambda(\vec \alpha) D_2[P\vec\alpha + \vec c]$ and that the length of $\vec\beta$ is less than that of $\vec\alpha$. We may assume without loss of generality that each parameter in $D_2$ is involved in the reduction (i.e.~that some parameter of $D_1$ is mapped to it), since otherwise we could just consider a $D_3$ equal to $D_2$ but with those parameters set to zero. Now, because all the parameters of $D_1$ are non-trivial and $P$ is parsimonious, $P$ must map the parameters only to parameters of $D_2$ that act non-trivially  in the subspace of the image. We may hence also without loss of generality assume that all the parameters of $D_2$ are non-trivial. By Lemma~\ref{lem:P-construction} we then have:
\ctikzfig{optimal-factor}
Here $\lambda'$ is some constant and $P$ is built from the generators {\tikzfig{p-generators-2} \tikzfig{Z-a}.} Since this equality holds for all $\vec \alpha$ we have:
\begin{equation}\label{eq:optimal-factor-state}
  \tikzfig{optimal-factor-state}
\end{equation}

Because $D_2[\vec\beta]$ has fewer parameters than $D_1[\vec\alpha]$, $\hat P$ should contain at least one $Z$ fusion \tikzfig{Z-fusion}. Let $\alpha_i$ and $\alpha_j$ be two parameters that fuse. Then define the projector $\Pi$ by the projection operation along these parameters:
$$\Pi \ =\  \tikzfig{optimal-stabiliser}$$
Note that we then have $\hat P \circ \Pi = \hat P$, which we can show using some spider fusions:
\[
  \tikzfig{stabiliser-fuse}
\]
Hence, using Eq.~\eqref{eq:optimal-factor-state} we see that $D'_1 \circ \Pi = \lambda' D'_2\circ \hat P \circ \Pi = \lambda' D'_2\circ \hat P = D'_1$, so that $\Pi$ stabilises $D'_1$.
We can see that $\Pi$ stabilises the Pauli $(-1)^{s_i+s_j} Z_iZ_j$, and hence this Pauli is also a stabiliser of $D'_1$. For simplicity, lets first assume $s_i+s_j = 0$ modulo 2, so that $Z_iZ_j$ is a stabiliser of $D'_1$.
Since $D'_1$ arose from the simplification algorithm, we know it is a GSLC diagram. Furthermore, the indices $i$ and $j$ correspond to where a parameter is plugged in, which means that the local Clifford there is either an identity or a Hadamard.
Now, applying Lemma~\ref{lem:zz-redex} we conclude that the vertices corresponding to the indices $i$ and $j$ are either connected to each other and not to anything else, or that they are not connected to each other and have an identical set of neighbours, with both their local Cliffords being a Hadamard. In the first case, the parameters only contribute a scalar factor, since they are not connected to the rest of the diagram, and hence they are trivial, which is a contradiction. The second case is however also not possible, since then the indices $i$ and $j$ would correspond to phase gadgets with identical neighbourhoods, in which case the algorithm would have already fused them.

We are hence led to a contradiction, which means that $D_1$ does not reduce to a diagram $D_2$ that has fewer parameters. Hence, by Corollary~\ref{cor:minimal-optimal}, the parameter count of $D_1$ is the optimal parameter count of the original circuit $C_1$.

If instead we had $s_i+s_j = 1$, then we can consider $D'_1$ precomposed with the Pauli $X_i^{s_i+s_j}$. This diagram then has $Z_iZ_j$ as a stabiliser. The Pauli $X$ can be absorbed into the diagram in a simple way, recovering its GSLC structure, so that we can again appeal to Lemma~\ref{lem:zz-redex} in the same way as before to conclude the indices $i$ and $j$ would correspond to phase gadgets with identical neighbourhoods.
\end{proof}

\begin{remark}\label{rem:post-selection}
  In this proof we only needed the fact that $C$ is a circuit to argue that its parameters are non-trivial, so that this optimality result applies to any type of diagram that has non-trivial parameters. In addition, we didn't make any assumptions on the hypothetical diagram $D_2$ other than that it has fewer parameters. This means that even allowing, for instance, post-selection in your circuits does not allow one to reduce the number of parameters further.
\end{remark}

\subsection{General affine reductions}\label{sec:general-affine}

Our optimality proof of Theorem~\ref{thm:main-result} requires the affine reductions to be parsimonious, i.e.~that each parameter only gets used in one place in the reduction. We expect however that a version of Theorem~\ref{thm:main-result} should continue to hold even for arbitrary affine reductions that are not parsimonious, because we suspect that each non-parsimonious reduction should either be pathological in some sense (such as involving trivial parameters), or either should be rewritable to a parsimonious reduction.

While we have not managed to prove this conjecture, in this section we will demonstrate some evidence towards this result. In particular, we will prove the following proposition.
\begin{proposition}
  Let $D_1[\vec\alpha] = D_2[M\vec \alpha +\vec c]$ be an affine reduction where every column of $M$ contains at most \emph{three} non-zero elements, and suppose that all parameters $\vec \alpha$ are non-trivial. Then there exists a parsimonious reduction $D_1[\vec \alpha] = D_2[M'\vec \alpha + \vec c]$.
\end{proposition}
Hence, as a corollary we can strengthen our Theorem~\ref{thm:main-result} to say that we have optimality over all affine reductions that copy a parameter to at most three places.

Note that proving this proposition is quite technical, and hence most details will be postponed to Appendix~\ref{app:non-parsimonious}.

First, the following lemma demonstrates that for a non-parsimonious reduction, something `odd' seems to be going on when we have a copied parameter: the diagram is invariant under a continuous group of phases. 

\begin{lemma}\label{lem:no-cloning-gadget}
  Suppose that $D_1[\alpha] = D_2[z_1\alpha + c_1,\ldots, z_k\alpha + c_k]$ with $z_1,\ldots, z_k \in \Z$ integers and $c_1,\ldots, c_k\in \R$ some constants. Suppose furthermore without loss of generality that the first $z_1,\ldots, z_l$ are odd and $z_{l+1},\ldots, z_k$ are even. Then $\forall \beta\in\R$ we have:
  \ctikzfig{no-cloning-pf-gadget2}
\end{lemma}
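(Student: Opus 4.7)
The plan is to derive the claimed identity by exploiting that the hypothesis $D_1[\alpha] = D_2[z_1\alpha+c_1,\ldots, z_k\alpha+c_k]$ holds for \emph{every} value of $\alpha$, and in particular at the shifted value $\alpha+\beta$. First I would write down the hypothesis twice: once at $\alpha$, giving $D_1[\alpha] = D_2[z_i\alpha + c_i]_i$, and once at $\alpha+\beta$, giving $D_1[\alpha + \beta] = D_2[z_i\alpha + z_i\beta + c_i]_i$. Via the basic identity $\ket{+_{\gamma+\delta}} = Z[\delta]\ket{+_\gamma}$, the shift $\alpha \mapsto \alpha+\beta$ on the left is realised diagrammatically by pre-composing the single $\alpha$-input of $D_1'$ with a $Z[\beta]$ spider; similarly, each shift $z_i\alpha + c_i \mapsto z_i\alpha + z_i\beta + c_i$ on the right is realised by pre-composing the $i$th input of $D_2'$ with $Z[z_i\beta]$. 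Combining the two instances of the hypothesis thus yields a single diagrammatic equation stating that $D_1$ with a $Z[\beta]$ on its one input equals $D_2$ with $Z[z_i\beta]$ on each of its $k$ inputs.

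The next step is to reduce this raw equation into the form claimed by the lemma by exploiting the parity split $z_1,\ldots,z_l$ odd vs.\ $z_{l+1},\ldots,z_k$ even. For each odd $z_i$, write $z_i = 2m_i + 1$, so $Z[z_i\beta]$ factors as $Z[\beta]$ followed by $Z[2m_i\beta]$; the surviving $Z[\beta]$ factors, one on each odd-indexed wire, assemble (via Hadamard-edge rewrites and spider fusion from Figure~\ref{fig:clifford-rules}) into a single phase gadget of phase $\beta$ attached to the $l$ odd input wires. For each even $z_i = 2m_i$, the factor $Z[z_i\beta] = Z[2m_i\beta]$ is `doubled' in the same sense and, via the $\pi$-pushing rule \eqref{eq:pi-push} applied inside the $D_2'$ Clifford part together with phase-gadget fusion, can be folded into the same gadget structure (or shown to contribute only to the residual scalar $\lambda(\beta)$). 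Finally, using the original hypothesis at $\alpha$ to replace $D_1[\alpha]$ by $D_2[\vec z\alpha+\vec c]$, I eliminate $D_1$ entirely, obtaining the claimed $\beta$-gadget equation expressed solely in terms of $D_2$ (or equivalently $D_1$) with a phase gadget of phase $\beta$ attached to the $l$ odd-indexed inputs.

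The main obstacle I expect is the parity bookkeeping in the second stage. Naively one hopes that $Z[2m\beta]$ factors `disappear', but this is not literally true — they are not the identity for generic $\beta$. The real content is that, once the parameters are plugged by $\ket{+_{z_i\alpha+c_i}}$ and the diagram is viewed as the state $D_1[\alpha]=D_2[\vec z\alpha+\vec c]$, the even-$z_i$ contributions become redundant with the odd ones after fusing through $D_2'$; making this precise requires careful tracking of the non-trivial Clifford scalar factors using Theorem~\ref{thm:completeness-scalars} so that nothing is lost to global phase, and it requires the gadget-fusion manipulation \eqref{eq:gadget-fusion} to consolidate the per-wire $Z[z_i\beta]$ phases into the single gadget claimed on the odd wires.
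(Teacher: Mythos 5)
There is a genuine gap, and it occurs at the very first step. From the hypothesis evaluated at $\alpha$ and at $\alpha+\beta$ you do \emph{not} obtain an open-wire equation saying ``$D_1'$ with a $Z[\beta]$ on its input equals $D_2'$ with $Z[z_i\beta]$ on each of its $k$ inputs''. What you actually get, for each fixed $\alpha$, is an equality of \emph{states}: $D_1'\circ Z[\beta]\,\ket{+_{\alpha}} = D_2'\circ\bigl(\bigotimes_i Z[z_i\beta]\bigr)\,\bigotimes_i \ket{+_{z_i\alpha+c_i}}$, where the $k$ inputs of $D_2'$ are plugged with \emph{correlated} states, all driven by the single parameter $\alpha$. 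You cannot open those $k$ wires, because the inputs never range over a product basis independently --- this correlation is precisely the obstruction that makes non-parsimonious reductions delicate, and it is the whole point of the lemma. The paper's proof circumvents this by evaluating the hypothesis at just the two values $\alpha = \beta + a\pi$, $a\in\{0,1\}$: since $z_j a\pi \equiv a\pi \pmod{2\pi}$ for odd $z_j$ and $\equiv 0$ for even $z_j$, the residual $\alpha$-dependence on every odd wire is the \emph{same} X-basis state $\ket{+_{a\pi}}\in\{\ket{+},\ket{-}\}$, which is exactly what a fan-out X-spider copies. This collapses all of the correlated dependence onto a single wire carrying $\ket{+_{a\pi}}$; because $a=0,1$ yields a basis (Lemma~\ref{lem:basis-decomp}), that one wire can then be left open, and composing with $Z(-\beta)$ gives the claimed equation. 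Your shift-by-$\beta$ setup never produces copyable inputs, so it has no route to the X-spider structure that appears in the conclusion.

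The second stage of your plan also contains a step that is false as a diagrammatic identity: $Z[\beta]$ applied separately to each of the $l$ odd wires is \emph{not} rewritable into a single phase gadget of phase $\beta$ attached to those wires. The former implements $\ket{\vec x}\mapsto e^{i\beta(x_1+\cdots+x_l)}\ket{\vec x}$, the latter a phase depending only on the parity $x_1\oplus\cdots\oplus x_l$; these are different linear maps, so no combination of spider fusion and Hadamard-edge rewrites from Figure~\ref{fig:clifford-rules} or Eq.~\eqref{eq:gadget-fusion} can relate them. Similarly, the even-indexed factors $Z[z_i\beta]$ neither vanish nor retreat into a scalar $\lambda(\beta)$: in the true statement they survive as states $\ket{+_{z_i\beta}}$ plugging the even-indexed wires of $D_2'$. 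Finally, the appeal to Theorem~\ref{thm:completeness-scalars} is not what is needed here --- the paper's argument for this lemma is elementary and exact (a choice of two parameter values, one X-spider copy, and a basis argument), with no scalar-function bookkeeping required.
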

\begin{proof}
  Define $\tilde{D}_2$ as $D_2$ with the constant phases $\vec c\in \R^k$ added to the wires corresponding to the parameters. I.e.~$\tilde{D}_2[\vec \beta] := D_2[\vec \beta+\vec c]$. Then we can write $D_1[\alpha] = \tilde{D}_2[z_1\alpha,\ldots, z_k\alpha]$. Hence, without loss of generality we may assume that $\vec c = 0$ (at the cost of making $D_2$ a Clifford diagram with potential non-Clifford phases on its parameter wires).

  Now let $\beta\in \R$. Then for $a\in \{0,1\}$ we have $D_1[\beta + a\pi] = D_2[z_1\beta + a\pi,\ldots, z_l\beta + a\pi, z_{l+1}\beta,\ldots , z_k\beta]$, where the $a\pi$ doesn't appear on the last parameters, since those $z_j$ are even by assumption.
  Now we can do the following rewrite:
  \begin{equation}
    \tikzfig{no-cloning-pf-gadget}
  \end{equation}
  So now on the left-hand side and on the right-hand side we have a single input of $\ket{+_{a\pi}}$. Since these form a basis we conclude that this must hold for any input, and hence we can leave that wire open. Applying a $Z(-\beta)$ to both sides then gives us the desired equation.
\end{proof}

Note furthermore that this shows that any non-parsimonious reduction is equivalent to a parsimonious one where we precompose the diagram with a small number of spiders.

\begin{proposition}\label{prop:to-non-parsimonious}
  Let $D_1[\vec \alpha] = D_2[M \vec \alpha + \vec c]$ for some integer matrix $M$ and constants $\vec c$. Then $D_1[\vec \alpha] = (D_2'\circ E)[\vec \alpha]$ where $E$ is constructed out of Z-spiders with phases from $\vec c$ and fan-out X-spiders.
\end{proposition}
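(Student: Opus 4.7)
The plan is to induct on the number of parameters $k$, peeling off one parameter at a time by invoking Lemma~\ref{lem:no-cloning-gadget} in a way that extracts an X-fanout and Z-phase gadget for each column of $M$ in turn. The base case $k = 0$ is immediate: $D_1 = D_2[\vec c]$ is a concrete diagram, so we take $D_2' := D_2$ and let $E$ be the empty-input state consisting of a Z-spider with phase $c_j$ feeding into each of the $l$ parameter wires of $D_2$. By construction $E$ is built entirely from Z-spiders with phases in $\vec c$ (with the trivial, single-leg X-fanouts).

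For the inductive step, suppose the result is known for $k-1$ parameters and fix $\alpha_2, \ldots, \alpha_k$ as formal symbols in the diagram. Viewing both sides of $D_1[\vec\alpha] = D_2[M\vec\alpha + \vec c]$ as one-parameter diagrams in $\alpha_1$, the right-hand side has exactly the form required by Lemma~\ref{lem:no-cloning-gadget}, with coefficients given by the first column of $M$ and constants $\tilde c_j := c_j + \sum_{i \geq 2} M_{j,i}\alpha_i$ (which carry, abstractly, the remaining parameter dependence). Applying the lemma rewrites the $\alpha_1$-inputs of $D_2$ into a single $\ket{+_{\alpha_1}}$ feeding through an X-fanout whose legs carry the Z-phases encoding the parities of the entries $M_{j,1}$ (this is where the odd/even split in the lemma matters) and the constants from $\tilde c_j$. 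The phases that absorb $\tilde c_j$ split cleanly into a part lying in $\vec c$ (the original constants) and a part that is simply the re-insertion of the unprocessed parameters $\alpha_2, \ldots, \alpha_k$ at the $j$-th input wire of the modified diagram $D_2^{(1)}$ (so $D_2^{(1)}$ still expects the remaining $k-1$ parameters as inputs). Iterating this procedure for $\alpha_2, \ldots, \alpha_k$ yields a final diagram $D_2^{(k)}$ such that $D_1[\vec\alpha] = D_2^{(k)}[\vec\alpha] \circ E$, where $E$ is the composition of $k$ X-fanout gadgets with Z-phase legs carrying only the constants from $\vec c$ (and $\pi$-phase parity corrections that are themselves part of the Clifford structure already implicit in~$D_2$). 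Setting $D_2' := D_2^{(k)}$ finishes the induction.

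The main obstacle is ensuring that successive invocations of Lemma~\ref{lem:no-cloning-gadget} commute and that the ``frozen'' parameters $\alpha_{i+1}, \ldots, \alpha_k$, which appear symbolically inside the constants $\tilde c_j$ at step $i$, can safely be re-extracted at later steps. The key point is that the lemma acts locally on the parameter-input wires of $D_2$ without disturbing any other wire, so the rewrites for different $\alpha_i$ affect disjoint portions of the overall diagram and their order is immaterial. A small but careful bookkeeping step is needed at the end to verify that every Z-spider in $E$ carries a phase drawn from $\vec c$ (possibly together with a Clifford correction from the odd/even parity split), and that no non-Clifford phases other than the original parameters $\alpha_i$ have been introduced; this follows because the only non-Clifford data entering $E$ is the single $\ket{+_{\alpha_i}}$ per fanout, with all other phases traceable to entries of $\vec c$.
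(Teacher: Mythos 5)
Your proposal is correct and follows essentially the same route as the paper: the paper's (one-line) proof is precisely to iterate the construction of Lemma~\ref{lem:no-cloning-gadget}, fixing all parameters but one, extracting the resulting X-fanout and constant Z-phases into $E$, and instantiating the free angle $\beta$ of the lemma at $0$ in every step. Your induction with the frozen parameters absorbed into the constants $\tilde c_j$, and their later re-extraction, is just a more explicit bookkeeping of that same iteration.
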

\begin{proof}
  Do the same construction as in Lemma~\ref{lem:no-cloning-gadget} iteratively where we fix every parameter except one, to extract it into $E$ and set $\beta=0$ in all cases.
\end{proof}

This does not imply that non-parsimonious reductions can always be reduced to a parsimonious one with the same parameter count. One could for instance have the non-parsimonious reduction $D_1[\alpha,\beta,\gamma] = D_2[\alpha+\beta,\alpha+\gamma]$, so that $D_2$ only has two parameters. The construction above then gives:
\begin{equation}
  \tikzfig{parsimonious-reduction}
\end{equation}
Hence, $D_2$ composed with the $E$ from Proposition~\ref{prop:to-non-parsimonious} has three parameters instead.
So while the construction of Proposition~\ref{prop:to-non-parsimonious} gives us a parsimonious reduction between $D_1'$ and $D_2'\circ E$, the diagram $D_2'\circ E$ might have more parameters than the original $D_2$. In addition, if there are non-Clifford constants $\vec c$, then these get incorporated in $E$, so that the reduction is no longer to a Clifford diagram.

We suspect however that every non-parsimonious reduction is `pathological'. For instance, that the only parameters that can get copied are those that are trivial anyway, or otherwise that the reduction function can be simplified to a parsimonious one. For example, suppose we have:
\[
  \tikzfig{D2-simple-non-parsimonious}
\]
Here $D_2$ has three parameters and $D_1$ has a single parameter, and we see that $D_1[\alpha] = D_2[\alpha,2\alpha,-2\alpha]$ using the spider fusion rule. But we also have the parsimonious reduction $D_1[\alpha] = D_2[\alpha,0,0]$, and so the copying of the parameters is not `necessary' in this case. 
This particular case of going from a non-parsimonious reduction to a parsimonious actually follows directly from Lemma~\ref{lem:no-cloning-gadget}: the one case where we can get rid of the copying of the parameter, is when only one of the $z_j$ is odd, with the rest being even: the X-spider we introduce in Lemma~\ref{lem:no-cloning-gadget} is then an identity and can be removed to get a straightforward parsimonious reduction.

However, to prove Proposition~\ref{prop:to-non-parsimonious} we need to find stronger consequences
from Lemma \ref{lem:no-cloning-gadget}. This is quite technical, and the full proof of Proposition~\ref{prop:to-non-parsimonious} requires several case distinctions, and so we postpone it to Appendix~\ref{app:non-parsimonious}.

\section{Measurement-based quantum computing}\label{sec:MBQC}

We have so far only talked about parameters in the context of parametrised circuits, but they are also a useful concept in measurement-based quantum computing (MBQC). The most well-studied form of MBQC is the \emph{one-way model}~\cite{MBQC1,MBQC2}. In this model you start with a graph state and you do measurements in one of the three principal planes of the Bloch sphere. Subsequent measurement angles may depend on previous measurement outcomes, which is needed in order to correct undesirable measurement outcomes and to get a deterministic result of the overall computation. All the instructions needed to carry out this computation are captured in a \emph{measurement pattern}, which specifies the \emph{open graph} of the computation (the graph state with designated vertices corresponding to in- and outputs), the order in which each vertex is measured, and in which measurement plane and angle it is measured depending on the previously observed measurement outcomes.
Note that for a given open graph it might not be possible to pick a measurement order and set of corrections that results in a deterministic computation. In order to ensure this we need certain combinatorial properties of the underlying open graph that ensure there is a \emph{flow} for the errors to be pushed along.
In particular, a measurement pattern is \emph{uniformly}, \emph{strongly} and \emph{stepwise} deterministic if and only if the underlying graph state has a \emph{generalised flow} (gflow)~\cite{GFlow}.

A pattern is uniformly deterministic if it is deterministic for every choice of measurement angle for the chosen measurement planes. It is strongly deterministic if every measurement outcome happens with equal probability, and it is stepwise deterministic if after every single measurement the set of corrections to make it deterministic is known again.
A generalised flow then is a property that an open graph can have which has to do with the neighbourhoods of the vertices and choice of measurement plane.

The most general flow condition is known as \emph{Pauli flow}. In this setting, qubits are either measured as some Pauli, or they are measured in an arbitrary angle in a given plane of the Bloch sphere. The uniform determinism condition is then weakened to only apply to those qubits measured in the arbitrary angle. When represented as a ZX-diagram, such patterns correspond to parametrised Clifford diagrams~\cite{Simmons2021Measurement}. In~\cite{wetering-gflow} and~\cite{Simmons2021Measurement} it is shown how to rewrite measurement patterns with gflow or Pauli flow in order to remove certain measured qubits while preserving the flow conditions. On the level of ZX-diagrams these rewrites correspond precisely to the strategy of~\cite{kissinger2019tcount} that we have shown to be optimal when it comes to parameter optimisation.

We hence also have the following result.
\begin{theorem}
  Let $\mathcal{D}$ be a measurement pattern with gflow which has an equal number of input and output vertices (so that it implements a unitary), and let $\mathcal{D}'$ be the optimised measurement pattern produced by the algorithm of \cite{wetering-gflow}. Then $\mathcal{D}'$ has an optimal number of parametrised measurements amongst those patterns with gflow that $\mathcal{D}$ affinely parsimoniously reduces to.
\end{theorem}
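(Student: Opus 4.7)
The plan is to transfer the optimality result of Theorem~\ref{thm:main-result} from parametrised ZX-diagrams to measurement patterns with gflow, using the translation indicated in the paragraph preceding the statement.

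First, I would encode the pattern $\mathcal{D}$ as a parametrised Clifford ZX-diagram $D_\mathcal{D}$, following~\cite{Simmons2021Measurement}, so that each parametrised measurement contributes a single Z-spider carrying the corresponding parameter. Since $\mathcal{D}$ has an equal number of inputs and outputs and admits a gflow, the circuit-extraction procedure of~\cite{wetering-gflow} produces a unitary $U_{\vec\alpha}$ from $D_\mathcal{D}[\vec\alpha]$, so $D_\mathcal{D}$ represents a parametrised unitary. An argument analogous to Proposition~\ref{prop:circ-nontrivial} then shows every parameter of $D_\mathcal{D}$ is non-trivial: extracting a circuit in which $\alpha_j$ appears on a unique $Z[\alpha_j]$ gate, and using the fact that $Z[\alpha_j-\alpha_j']\neq \mathrm{id}$ whenever $\alpha_j\neq \alpha_j'$.

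Second, I would invoke the identification, already stated in the text above, between the rewriting procedure of~\cite{wetering-gflow} on measurement patterns with gflow and the ZX-rewriting strategy of~\cite{kissinger2019tcount} on the corresponding parametrised Clifford diagram. Under this correspondence the optimised pattern $\mathcal{D}'$ is sent to the optimised ZX-diagram $D_{\mathcal{D}'}$ produced by the strategy of Section~\ref{sec:opt-parameters}, with the number of parametrised measurements matching the number of parameters on either side.

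Third, suppose for contradiction that there were a pattern $\mathcal{E}$ with gflow such that $\mathcal{D}$ parsimoniously affinely reduces to $\mathcal{E}$ and $\mathcal{E}$ has strictly fewer parametrised measurements than $\mathcal{D}'$. Its translation $D_\mathcal{E}$ is a parametrised Clifford diagram with non-trivial parameters (by the same unitarity argument applied to $\mathcal{E}$) which inherits the parsimonious affine reduction from $D_\mathcal{D}$. Theorem~\ref{thm:main-result} then forces $D_\mathcal{E}$ to have at least as many parameters as $D_{\mathcal{D}'}$, contradicting the assumption.

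The main obstacle is making the translation rigorous: verifying that measurement patterns with gflow map faithfully into the subclass of parametrised Clifford ZX-diagrams covered by Theorem~\ref{thm:main-result}, with the count of parametrised measurements matching the parameter count, and that the algorithm of~\cite{wetering-gflow} coincides rewrite-for-rewrite with the simplification strategy of Section~\ref{sec:opt-parameters}. Both ingredients are assembled from~\cite{Simmons2021Measurement} and~\cite{wetering-gflow}, so no new ZX-machinery is needed, but care is required because the non-triviality of parameters relies essentially on unitarity, which in turn depends on the equal-input/output hypothesis; if one dropped that hypothesis, Remark~\ref{rem:post-selection} would still apply to give optimality in the post-selected setting, but the present statement is cleanest in the unitary case.
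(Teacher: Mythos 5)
Your proposal is correct and follows essentially the same route as the paper's proof: translate the pattern to a parametrised Clifford ZX-diagram, use circuit extraction via~\cite{wetering-gflow} together with Proposition~\ref{prop:circ-nontrivial} to establish non-triviality of all parameters (for $\mathcal{D}$ and for any pattern it reduces to), then apply Theorem~\ref{thm:main-result} and the identification of the pattern-rewriting procedure with the ZX strategy of Section~\ref{sec:opt-parameters}. The only cosmetic difference is that you phrase the final step as a contradiction whereas the paper argues directly.
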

\begin{proof}
  $\mathcal{D}$ can be represented as a parametrised Clifford diagram. It can be turned into a unitary circuit using the circuit extraction technique of~\cite{wetering-gflow}. Each parametrised measurement becomes a $Z[\alpha]$ phase gate in the resulting extracted circuit, and hence by Proposition~\ref{prop:circ-nontrivial}, all parameters are not trivial. Each measurement pattern with gflow that $\mathcal{D}$ can reduce to can also be represented as a parametrised ZX-diagram, and can also be extracted to a unitary circuit, and so must only have non-trivial parameters. Hence, the result of Theorem~\ref{thm:main-result} applies, and the parametrised ZX-diagram we get by applying the rewrite strategy summarised in Section~\ref{sec:opt-parameters} has the optimal number of parameters amongst the diagrams that $\mathcal{D}$ parsimoniously affinely reduces to. This ZX rewrite strategy consists of the same rewrites as those described for measurement patterns in~\cite{wetering-gflow}, and hence the measurement pattern~\cite{wetering-gflow} produces indeed has the optimal number of parameters.
\end{proof}

\section{Conclusion and discussion}

We showed that we can find the minimal number of parameters of a Clifford circuit with parametrised phase gates under the condition that each parameter occurs uniquely and that the notion of reduction is restricted to analytic maps that don't clone any parameter to multiple places. The algorithm that finds the minimal parameter count is the one described in earlier work of some of the authors~\cite{kissinger2019tcount}. Our result also applies to measurement patterns with gflow and shows that the technique of~\cite{wetering-gflow} produces a measurement pattern with the minimal number of non-Clifford measurements (under the condition that we treat these non-Clifford angles as `black box parameters').

Note that the ZX-calculus-based approach of~\cite{kissinger2019tcount} was primarily interested in optimising T-count, and as an approach it always seemed to match the T-count of the circuit-based approach based on fusing Pauli exponentials of~\cite{zhang2019optimizing}. Evidence that these phase folding methods should result in the same non-Clifford count was further given in~\cite{Simmons2021Measurement}. Hence, it seems likely that the approach of~\cite{zhang2019optimizing} should also give optimal parameter counts. That means that writing a circuit as a series of Pauli exponentials and merging those exponentials of the same Pauli when there are only commuting Paulis in between them is essentially the best possible rewrite strategy when our goal is minimising the number of non-Clifford components of the circuit and we can't use any specific knowledge on the angles of rotation involved (barring the use of discontinuous parameter changing rules like Eq.~\eqref{eq:generalised-euler}). This result was shown formally in~\cite{vandaele2024optimal}, which appeared after our initial preprint.

Our results demonstrate that any circuit optimisation strategy that wants to do better at removing non-Clifford phases needs to use specific knowledge of the phases involved. This is indeed the case in T-count optimisation wherein relations to Reed-Muller decoding~\cite{amy2016t} and symmetric 3-tensor factorisation~\cite{heyfron2018efficient} are used. In particular, in~\cite{amy2016t} it was shown that for diagonal circuits built out of CNOT gates and phase gates, there are only non-trivial identities when the phases involved are dyadic rational multiples of $\pi$. 

There are several possible generalisations of our result that could be considered. We have already shown that generalising the discrete gate set from Cliffords to Clifford+T results in a parameter optimisation problem that is NP-hard, but we left open the question of the hardness of other possible generalisations. In particular, we have only established the hardness of optimisation when parameters are allowed to be reused in the setting of post-selected quantum circuits. We don't know if the problem remains hard in the unitary setting. This is particularly relevant, because many ans\"{a}tze for variational circuits involve controlled-phase gates \cite{sim2019expressibility, havlivcek2019supervised} that get decomposed into repeated parametrised phase gates. We have also left open the hardness when we relax the notion of reduction between parametrised circuits to include parameter transformations that are discontinuous in parameter space, like the Euler angle transformation of Eq.~\eqref{eq:generalised-euler}. This rule suffices to make the ZX-calculus complete for arbitrary linear maps~\cite{vilmarteulercompleteness} so it stands to reason that rewriting when this rule is allowed becomes hard.
Lastly, our optimality result applies to \emph{parsimonious} reductions: those that do not clone the parameter. We've showed in Section~\ref{sec:general-affine} that when a parameter is used at most 3 times in a reduction, that we can also find an equivalent parsimonious reduction. We suspect that this should hold for any non-parsimonious reduction, and hence that our optimality result should hold even when allowing any affine reduction.

\medskip
\noindent\textbf{Acknowledgments}: 
This work is supported by the Engineering and Physical Sciences Research Council grant number EP/Z002230/1: (De)constructing quantum software (DeQS). RY would like to thank Simon Harrison for his generous support via the Wolfson Harrison UKRI Quantum Foundation Scholarship. JvdW is supported by the NWO Veni personal fellowship.

\bibliographystyle{plainnat}
\bibliography{bibliography}

\appendix

\section{Hardness of optimisation with repeated parameters}\label{app:repeated-params}

We have not yet determined the hardness of optimisation of Clifford circuits with repeated parameters. 
The argument of Proposition~\ref{prop:hardness-clifford-T} does not work, since we cannot `reliably' produce a Toffoli gate using just Clifford gates and parametrised phase gates: if all the parameters are set to Clifford angles, then the whole circuit is Clifford and hence certainly cannot implement a Toffoli. It turns out that it is however possible to construct a Toffoli for any other choice of parameter using repeated parametrised phase gates if we allow the circuit to be \emph{post-selected}.

In order to define the reduction problem for post-selected quantum circuits we slightly relax Definition~\ref{def:reduction} to allow for a zero scalar.
\begin{definition}
  We say a parametrised post-selected circuit $C_1$ with parameter space $\R^k$ \emph{reduces} to post-selected circuit $C_2$ with parameter space $\R^l$ when there exists a function $f:\R^k\to \R^l$ such that $C_1[\vec \alpha] = \lambda(\vec \alpha) C_2[f(\vec \alpha)]$ for all $\vec \alpha\in \R^k$ where $\lambda: \R^k\to \CC$ is a function representing a global scalar $\lambda(\vec \alpha)$ that may depend on the parameters and is allowed to be zero. We define an affine reduction as before by restricting $f$ to be affine.
\end{definition}

\begin{definition}  \textbf{(Affine) Parameter optimisation for post-selected circuits}: Given a parametrised post-selected quantum circuit $C_1$ with parameter space $\R^k$, find a parametrised quantum circuit $C_2$ with parameter space $\R^l$ that it (affinely) reduces to, such that $l$ is \emph{minimal}.

\end{definition}

\noindent In this appendix we will prove the following result.
\begin{proposition}
  Boolean satisfiability reduces to affine parameter optimisation for post-selected circuits. 
\end{proposition}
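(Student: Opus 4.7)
The plan is to mimic the argument of Proposition~\ref{prop:hardness-clifford-T}, using post-selection in place of $T$-gates to implement the classical oracle for a Boolean formula. Given a SAT instance $\phi(\vec{x})$, I will build a parametrised post-selected Clifford circuit $C_\phi[\alpha, \beta]$ (in which $\alpha$ and $\beta$ may each appear several times) such that $C_\phi$ affinely reduces to a single-parameter circuit if and only if $\phi$ is identically $0$, i.e., unsatisfiable. To preclude the degenerate case in which $\phi$ is identically $1$, I will work throughout with $\phi(\vec{x})\wedge z$ for a fresh variable $z$: this Boolean function is never identically $1$, and is identically $0$ iff $\phi$ is UNSAT, so distinguishing ``constant'' from ``non-constant'' for it is exactly UNSAT and hence at least NP-hard.

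The crux is constructing a post-selected Clifford circuit that realises the reversible oracle $U_\phi : \ket{\vec{x}, y} \mapsto \ket{\vec{x}, y\oplus\phi(\vec{x})}$. I would first express $\phi$ as a reversible Boolean circuit over Toffoli, CNOT and NOT with polynomially many ancillas, and then replace each Toffoli by a standard Clifford-with-post-selection gadget: gate teleportation consuming a $\ket{CCZ}$ magic state, where the teleportation circuit itself is Clifford with Pauli corrections, and the $\ket{CCZ}$ state is prepared from stabiliser resources by a Clifford circuit whose ancilla measurements are post-selected on a specific outcome. The resulting post-selected Clifford circuit implements $U_\phi$ exactly on $\vec{x},y$ up to a non-zero constant scalar $\kappa$ coming from the post-selection success amplitudes, and $\kappa$ is independent of $\alpha,\beta$ because those gates act on qubits disjoint from the post-selected ancillas.

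With $U_\phi$ in hand I define
\[
  C_\phi[\alpha, \beta] \ :=\ U_\phi^\dagger\cdot (I\otimes Z[\beta])\cdot U_\phi \cdot (I\otimes Z[\alpha]),
\]
whose diagonal action on $\ket{\vec{x}, y}$ is $\kappa\,e^{i\alpha y + i\beta(\phi(\vec{x})\oplus y)}\ket{\vec{x}, y}$. If $\phi\equiv 0$ the exponent collapses to $(\alpha+\beta)y$ and a single parameter $\gamma=\alpha+\beta$ suffices; if $\phi\equiv 1$ the exponent is $\beta+(\alpha-\beta)y$, again a single parameter modulo a global phase. Conversely, if $\phi$ is non-constant, pick witnesses $\vec{x}_0, \vec{x}_1$ with $\phi(\vec{x}_0)=0$ and $\phi(\vec{x}_1)=1$: restricted to the four basis vectors $\ket{\vec{x}_i, y}$, the coefficient vectors $(y,\phi(\vec{x}_i)\oplus y)$ of $(\alpha,\beta)$ span a two-dimensional space, so no affine map $(\alpha,\beta)\mapsto a\alpha+b\beta+c$ with one-dimensional image can reproduce the dependence. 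Hence the optimal parameter count of $C_{\phi\wedge z}$ is $1$ iff $\phi$ is UNSAT, so an oracle for affine parameter optimisation of post-selected circuits decides SAT by complementation.

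The main obstacle is the construction of $U_\phi$ as an exact post-selected Clifford circuit whose post-selection scalar is parameter-independent. While general results on $\mathsf{PostBQP}$ give us the necessary computational power in principle, some care is required to organise the magic-state preparations so that (a) the targeted stabiliser outcomes have strictly non-zero amplitude (so that $\kappa\neq 0$ and the resulting map is not vacuously the zero map), and (b) the post-selected ancillas are genuinely disjoint from the qubits carrying $Z[\alpha], Z[\beta]$, so that $\kappa$ cleanly factors out of the functional equation and does not itself depend on $(\alpha,\beta)$. Once these are verified, the gradient argument ruling out a one-parameter reduction for non-constant $\phi$ and the encoding of SAT via $\phi\wedge z$ are routine.
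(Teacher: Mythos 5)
There is a fatal gap at the heart of your construction: the oracle $U_\phi$ you describe cannot exist. You propose to prepare $\ket{CCZ}$ magic states ``from stabiliser resources by a Clifford circuit whose ancilla measurements are post-selected on a specific outcome.'' But post-selected Clifford circuits cannot create magic states: Clifford unitaries, stabiliser-state preparations, and post-selections onto computational-basis (or any stabiliser) outcomes map stabiliser states to stabiliser states (or to zero). A $\ket{CCZ}$ state is not a stabiliser state, so no such preparation gadget exists, and consequently no parameter-free post-selected Clifford implementation of the Toffoli or of $U_\phi$ exists. This is exactly the obstruction the paper flags at the start of its appendix: one cannot ``reliably'' produce a Toffoli from Cliffords alone, post-selected or not, because setting all parameters to Clifford angles yields a Clifford map. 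There is also a structural reason your plan cannot be repaired within your stated framework: your circuit $C_\phi[\alpha,\beta]$ uses each parameter exactly once, and by Remark~\ref{rem:post-selection} together with Theorem~\ref{thm:main-result}, parameter optimisation for post-selected Clifford circuits with uniquely occurring (non-trivial) parameters is solvable in polynomial time. So an NP-hardness reduction of the shape you propose would imply $\mathsf{P}=\mathsf{NP}$; any correct proof \emph{must} exploit repeated parameters, which is the entire point of the proposition.

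What the paper does instead is to use the repeated parametrised phase gates themselves as the non-Clifford resource. Using the ZH-calculus, the Fourier decomposition of H-boxes into phase gadgets, H-box fusion and supplementarity, it builds an AND gate (and hence a Toffoli) out of post-selection plus many occurrences of the \emph{same} parameter $\alpha$, valid up to a scalar of the form $(1-e^{i4\alpha})^k$ that vanishes precisely when $\alpha$ is a Clifford angle. This parameter-dependent, possibly-zero scalar is why the paper relaxes the definition of reduction for post-selected circuits to allow $\lambda(\vec\alpha)=0$, and it is irreconcilable with your requirement (b) that the post-selection scalar $\kappa$ be a nonzero constant independent of the parameters. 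The remaining ingredients of your argument --- the two-phase-gate circuit $e^{i\alpha y+i\beta(\phi(\vec x)\oplus y)}$, the rank argument showing non-constant $\phi$ forces two parameters, and your $\phi\wedge z$ trick (a clean alternative to the paper's post-processing step of evaluating $f(0\cdots 0)$ to separate UNSAT from tautologies) --- are fine, but they all sit downstream of the oracle construction, which is where the real content of the proof lies.
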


In order to prove this it will be helpful to use the \emph{ZH-calculus}~\cite{backens2018zhcalculus,zhcompleteness2020}. In particular we define a new type of generator, the \emph{H-box}:
\begin{equation}
  \tikzfig{H-spider} \ = \ \sum_{\vec x, \vec y} a^{x_1\cdots x_m y_1\cdots y_n} \ketbra{\vec y}{\vec x}, \qquad a \in \mathbb{C}
\end{equation}
This is a matrix that has a $1$ in every entry except for the bottom-right corner (where $x_1=\cdots=x_m=y_1=\cdots=y_n=1$) where there is an $a$. In particular, the 1-input 1-output H-spider with $a=-1$ is a rescaled Hadamard. For this reason, when $a=-1$ we don't write the label in the H-box.

The reason we are interested in H-boxes, is because they allow us to easily represent the AND operation $\text{AND}\ket{x,y} = \ket{xy}$:
\begin{equation}
  \tikzfig{H-box-AND}
\end{equation}
Hence, we can use H-boxes to construct the Toffoli:
\begin{equation}
  \tikzfig{toffoli-hbox}
\end{equation}
In order to relate H-boxes to parametrised phase gates we need to decompose it into the generators of the ZX-calculus. How to do this is described in~\cite{GraphicalFourier2019}:
\begin{equation}\label{eq:Fourier}
  \tikzfig{H-a-state} \ = \ \tikzfig{Z-a-state} \qquad\quad \tikzfig{had-gadget-alpha} \qquad\quad \tikzfig{H-box-3} \ \propto \ \tikzfig{H-box-3-fourier}
\end{equation}
To decompose an $n$-ary H-box with an $e^{i\alpha}$ phase we require $2^n-1$ phase gadgets with $\pm \alpha/2^{n-1}$ phases. Now to show how to implement a Toffoli using parametrised phase gates we will need to additional rewrites, \emph{H-box fusion} and \emph{supplementarity}:
\begin{equation}\label{eq:H-fusion}
  \tikzfig{H-spider-rule-phased} \qquad\quad \tikzfig{supplementarity}
\end{equation}
We then show the following:
\begin{equation}\label{eq:tuomas-proof}
  \tikzfig{tof-from-phases}
\end{equation}
From this we conclude that:
\begin{equation}\label{eq:AND-from-parameters}
  \tikzfig{AND-from-parameters}
\end{equation}
This equation holds whenever $\alpha \neq 0$ (since otherwise the scalar factor in Eq.~\eqref{eq:tuomas-proof} would be zero). Since we want to work with integer multiples of $\alpha$ we can replace $\alpha$ by $4\alpha$, and the condition $\alpha\neq 0$ with the condition $\alpha\neq k\frac\pi2$ in this equation (because the equation must hold regardless of the value of $\alpha$ this substitution is valid):
\begin{equation}
  \tikzfig{AND-from-parameters2} \quad \text{when }\alpha \neq k\frac\pi2
\end{equation}
In order to make the connection to post-selected quantum circuits more clear we can use this construction to build a Toffoli as a post-selected circuit using controlled-phase gates:
\begin{equation}
  \tikzfig{TOF-from-parameters}
\end{equation}
Since we can build a Toffoli (up to a scalar that depends on the parameters), we can construct for every Boolean function $f:\{0,1\}^n\to \{0,1\}$ the map $U_f[\alpha]$ acting as $U_f[\alpha]\ket{\vec x,y} = \lambda(\vec \alpha)\ket{\vec x,y\oplus f(\vec x)}$ for some scalar function $\lambda$ which goes to zero when $\alpha=k\frac\pi2$. Note that we only need the single parameter $\alpha$ as we can construct each Toffoli using the same repeated parameter. In particular, if the construction of $U_f$ involves $k$ Toffolis, then the scalar is $\lambda(\alpha) = (1-e^{i4\alpha})^k$. 
Now with $U_f$ in hand, we can build the same circuit as in Proposition~\ref{prop:hardness-clifford-T} using two additional parameters $\beta$ and $\gamma$ to get the diagonal map $\ket{\vec x, y} \mapsto \lambda'(\alpha) e^{i\beta y + i\gamma (f(\vec x)\oplus y)}\ket{\vec x,y}$. Let's call the post-selected circuit implementing this map $C_f[\alpha,\beta,\gamma]$. 

Note that if $f$ is not satisfiable that then $\beta$ and $\gamma$ can fuse, meaning $C_f$ reduces to a circuit $C'[\delta]$ implementing $C'[\delta]\ket{\vec x,y} = e^{i\delta y}\ket{\vec x,y}$. The reduction is $C_f[\alpha,\beta,\gamma] = \lambda'(\alpha)C'[\beta+\gamma]$. Hence, the optimal parameter count of $C_f$ is $1$ (because it can't be zero). If instead $f$ is always satisfiable, then we can also show that it reduces to a circuit with parameter count 1. When $f$ is satisfiable, but not always satisfiable then we can show that $\beta$ and $\gamma$ have separate effects on at least two different input states to $C_f$ so that the optimal parameter count is at least $2$.

We can now state our reduction from SAT to parameter optimisation with post-selected quantum circuits. For a given Boolean formula $f$, construct the post-selected parametrised circuit $C_f$. Then determine its optimal parameter count. If this is more than $1$, then $f$ is satisfiable. If it is exactly $1$, then it is either not satisfiable or always satisfiable. Test the value $f(0\cdots 0)$. If this is $1$ then $f$ is satisfiable, and if it is not, then it is not satisfiable everywhere and hence must not be satisfiable at all.

\section{Affine functions are all you need}\label{app:affine-functions}

We will here formalise what we mean by the parameter maps consisting of analytic functions on the unit circle.
We require a parameter map $\Phi = (\Phi_1,\ldots, \Phi_l)$ to consist of functions $\Phi_j(\vec \alpha) = 2\pi f((\vec \alpha \text{ mod } 2\pi)/(2\pi)) \text{ mod } 2\pi$ for some smooth $f$. 
Here we are dividing and multiplying by $2\pi$ to make the following arguments more convenient. 
We additionally take the modulo $2\pi$ inside of the argument of $f$ so that we can view $f$ as a function $f:[0,1]^n\to \R$. 
To make $\Phi_j$ a proper map of the unit circle we need $f(1,x_2,\ldots, x_n) - f(0,x_2,\ldots, x_n) = k$ for some integer $k\in\mathbb{Z}$ and arbitrary $x_j \in [0,1]$, and similarly for the other arguments (note that this $k$ cannot depend on the choice of $x_2,\ldots x_n$ due to continuity of $f$). In addition, for $f$ to be smooth as a function on the unit circle we also need $f'(0,x_2,\ldots,x_n) = f'(1,x_2,\ldots, x_n)$, so that there is no discontinuity in the derivative at the boundary. The same holds for all higher derivatives as well.
We assume that $f$ is in each argument \emph{analytic}, meaning that it is infinitely differentiable, and given by a power series.  Concretely, writing $g(x) := f(x,x_2,\ldots, x_n)$ for some fixed choice of $x_2,\ldots, x_n$, we will assume that
\begin{equation*}
  g(x) \ := \ \sum_{m=0}^\infty a_m (x-b)^m
\end{equation*}
for some choice of $b, a_m \in \R$, such that this infinite sum absolutely converges within an open interval $I$ containing  $[-1,1]$.

Our discussion above then shows that we should assume that $g(1) -g(0) = k\in \mathbb{Z}$, and $g^{(j)}(0) = g^{(j)}(1)$, where $g^{(j)}$ denotes the $j$th derivative of $g$.




\begin{lemma}
  Let $g$ be an analytic function satisfying the conditions $g(1) = g(0) + k$ for some $k\in \Z$ and $g^{(j)}(0) = g^{(j)}(1)$ for all $j$. Then $g$ satisfies $g(x) = kx + c$ for some constant $c\in \R$.
\end{lemma}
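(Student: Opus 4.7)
The plan is to strip off the linear part, establish periodicity, and then argue that the resulting analytic periodic function is constant. First I would define $h(x) := g(x) - kx$, which is analytic on the same interval $I \supset [-1,1]$ as $g$. The hypotheses translate cleanly into properties of $h$: one has $h(1) - h(0) = g(1) - g(0) - k = 0$, $h'(0) = g'(0) - k = g'(1) - k = h'(1)$, and $h^{(j)} = g^{(j)}$ for $j \geq 2$, giving $h^{(j)}(0) = h^{(j)}(1)$ for every $j \geq 0$.

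Next I would compare the Taylor expansions of $h$ centred at $0$ and at $1$. By the derivative-matching they share the same coefficient sequence $a_m := h^{(m)}(0)/m!$, and both series converge on open intervals of radius $R > 1$ around their respective centres. On the overlap of these two domains the identity theorem forces $\sum_{m \geq 0} a_m x^m = \sum_{m \geq 0} a_m (x-1)^m$, which is equivalent to $h(x) = h(x-1)$ for $x$ in a neighbourhood of $1$. Iterating analytic continuation then extends $h$ to a $1$-periodic analytic function on all of $\R$.

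The main obstacle is to conclude from here that $h$ must be constant, since mere analyticity and periodicity on $\R$ are not by themselves enough. I would exploit the two independent families of coefficient identities produced above: the derivative-matching gives $\sum_{m \geq j+1} \binom{m}{j} a_m = 0$ for each $j \geq 0$, while equating the two Taylor expansions on the overlap yields $\sum_{m \geq j+1} (-1)^{m-j} \binom{m}{j} a_m = 0$ for each $j \geq 0$. Adding and subtracting these separates the even-index and odd-index coefficients into two independent infinite linear systems, and the goal is to work upward from $j = 0$ and pin down all $a_m = 0$ for $m \geq 1$, leveraging the strong analytic condition that the power series of $h$ represents it on an open interval strictly larger than $[-1,1]$. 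Once $h \equiv a_0$, the conclusion $g(x) = kx + a_0$ is immediate.
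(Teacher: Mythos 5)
Your plan follows the same route as the paper's own proof: derive the functional equation $h(x)=h(x-1)$ (equivalently $g(x)=g(x-1)+k$) by matching all derivatives at the endpoints and invoking analyticity, then try to kill every higher Taylor coefficient using the resulting infinite families of linear relations. Up to the derivation of those relations your argument is sound. But the step you explicitly leave as ``the goal'' --- working upward from $j=0$ through the triangular systems to pin down $a_m=0$ for all $m\geq 1$ --- is a genuine gap, and it is one that cannot be closed. The two families of relations do not have a unique solution: the Taylor coefficients of $h(x)=\sin(2\pi x)$ (all odd-index $a_m$ non-zero) satisfy every one of them, since $\sin(2\pi x)$ is entire and $1$-periodic, so its derivatives match at $0$ and $1$ and its expansions about $0$ and about $1$ coincide termwise, with all rearrangements justified by absolute convergence. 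Nor can ``leveraging the strong analytic condition'' rescue this: the power series of $\sin(2\pi x)$ represents it on all of $\R$, which is as strong as the convergence hypothesis can possibly get. Indeed, $g(x)=\sin(2\pi x)$ with $k=0$ satisfies every hypothesis of the lemma and violates its conclusion, so the statement as written is false and no completion of your final step exists.

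For comparison, the paper's own proof has exactly the same hole, only hidden behind words: after establishing $g(x)=g(x-1)+k$, it writes down an infinite upper-triangular linear system for the $a_j$ and asserts that ``standard induction arguments'' yield the unique solution $a_1=k$, $a_j=0$ for $j>1$. That induction does not exist --- each equation involves infinitely many unknowns, the triangular structure has no base case to start from, and the coefficients of $\sin(2\pi x)$ are a non-zero solution of the homogeneous system. So your proposal reproduces the paper's strategy including its fatal step; the only difference is that you flagged the step as unfinished where the paper claims it is routine. Any correct version of this result would require changing the hypotheses rather than the proof: analytic circle maps are far from affine in general, e.g.\ $x\mapsto x+\tfrac{1}{4\pi}\sin(2\pi x)$ is an analytic circle diffeomorphism satisfying all the stated boundary-matching conditions.
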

\begin{proof}
  Define $h(x) := g(x-1) + k$. Note first that $h$ is still an analytic function, and furthermore that $h(1) = g(0) + k = g(1)$ and $h^{(j)}(1) = g^{(j)}(0) = g^{(j)}(1)$. Hence $g$ and $h$ agree on all derivatives and hence must be equal. As a result $g(x) = g(x-1) + k$, and hence also $g^{(j)}(x) = g^{(j)}(x-1)$.

  Now let's use this equality and evaluate $g$ and its derivatives at the base point $b$ of the series as well as in $b-1$.
  We of course have $g(b) = a_0$. But also $g(b) = g(b-1) + k = a_0 + k + \sum_{j=1}^\infty a_j (-1)^j$. Hence, $\sum_{j=1}^\infty a_j (-1)^j = -k$.
  We do the same for the first derivative: $a_1 = g^{(1)}(b) = g^{(1)}(b-1) = \sum_{j=1}^\infty j a_j (-1)^{j-1}$. After some rewriting we get $\sum_{j=2}^\infty j a_j (-1)^j = 0$ (note the sum starting at $j=2$, and replacing $(-1)^{j-1}$ by $(-1)^j$ by factoring out a global $-1$ factor). In general, doing this with the $l$th derivative, we get the equation 
  $$\sum_{j=l+1}^\infty j!/(j-l)! a_j (-1)^j \ = \ 0. $$
  This then gives us an infinite set of linear equations for the $a_j$ in upper triangular form:
  \begin{equation}
    \begin{tabular}{cllllll}
      $-k$ & $=$ & $-a_1$ & $+a_2$  & $-a_3$ & $+a_4$ & $\ldots$ \\
      $0$  & $=$ &  & $+2a_2$  & $-3a_3$ & $+4a_4$ & $\ldots$ \\
      $0$  & $=$ &  &   & $-6a_3$ & $+12a_4$ & $\ldots$ \\
      & $\vdots$ &&&&&
    \end{tabular}
  \end{equation}
  Using standard induction arguments one can show that this has a unique solution given by $a_1 = k$, and $a_j = 0$ for $j>1$. Hence $g(x) = a_0 + k(x - b) = kx - kb + a_0$. Taking the constant $c=a_0-kb$ finishes the proof.
\end{proof}

Recalling that we defined $g(x) := f(x,x_2,\ldots, x_n)$ we see with the above lemma that $f(x,x_2,\ldots, x_n) = kx + c$, where the $c$ is necessarily $f(0,x_2,\ldots, x_n)$. We can do the exact same argument for the argument $x_2$ as well to arrive at $f(x,x_2,\ldots, x_n) = k_2x_2 + f(x,0,x_3,\ldots,x_n) = kx + k_2 x_2 + f(0,0,x_3,\ldots, x_n)$. Hence, repeating the argument for each parameter we see that $f(x_1,\ldots,x_n) = k_1x_1+\cdots k_n x_n + c$ where $c=f(0,\ldots,0)$. Note that we can write this as $f(\vec x) = \vec k \cdot \vec x + c$ where $\cdot$ is the dot product and $\vec k = (k_1,\ldots, k_n)$.

Now $\Phi = (\Phi_1,\ldots, \Phi_l)$ where $\Phi_j(\vec \alpha) = 2\pi f_j((\vec \alpha \text{ mod } 2\pi)/(2\pi)) \text{ mod } 2\pi = 2\pi (\vec k_j\cdot \frac{\vec \alpha}{2\pi} + \frac{c_j}{2\pi}) \text{ mod } 2\pi = (\vec k_j\cdot \vec\alpha + c_j) \text{ mod } 2\pi$. Writing $\vec c = (c_1,\ldots, c_l)$ and $M = (\vec k_1~\cdots~\vec k_l)$ we see that we then indeed get $\Phi(\vec \alpha) = M\vec \alpha + \vec c \text{ mod } 2\pi$.

\section{Proof of Parametrised completeness}\label{app:completeness}

In this Appendix we will set out to prove Theorem~\ref{thm:completeness-scalars}.

We will need the following facts:
\begin{lemma}
  Let $S$ be a scalar Clifford diagram. Then $S$ is equal to $\sqrt{2}^n e^{i k \frac\pi4}$ for some integers $n,k\in \mathbb{Z}$.
\end{lemma}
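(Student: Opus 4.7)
The plan is to prove this by reducing an arbitrary scalar Clifford diagram to a product of ``atomic'' scalar values and checking closure. Let $T := \{0\} \cup \{\sqrt{2}^n e^{ik\pi/4} : n,k \in \mathbb{Z}\}$. The first step is the easy closure observation: if $z_j = \sqrt{2}^{n_j}e^{ik_j\pi/4}$ for $j=1,2$, then $z_1 z_2 = \sqrt{2}^{n_1+n_2} e^{i(k_1+k_2)\pi/4} \in T$, and $0 \cdot z = 0 \in T$, so $T$ is closed under multiplication. Since the semantics of disjoint union of scalar subdiagrams is just the product of their values, it will suffice to show that we can reduce $S$ to a disjoint union of atomic scalar components, each with value in $T$, picking up only $T$-valued scalar corrections along the way.

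Next I would enumerate the atomic closed diagrams. A $0$-legged Z-spider with phase $j\frac{\pi}{2}$ evaluates to $1 + e^{ij\pi/2}$, giving $\{2, 1+i, 0, 1-i\}$, i.e.~$\{\sqrt 2^2 e^{i0}, \sqrt 2 e^{i\pi/4}, 0, \sqrt 2 e^{-i\pi/4}\}\subset T$. By the colour-change rule~\HadamardRule{} the same holds for X-spiders. The Hadamard box, per the ZX convention fixed in Figure~\ref{fig:clifford-rules}, contributes a $\sqrt{2}^{\pm 1}$ factor when it appears in scalar form; this is in $T$.

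For the inductive reduction I would apply the Clifford simplification strategy from Section~\ref{sec:opt-non-clifford} (originating in \cite{cliffsimp}): put the diagram into graph-like form, then exhaustively apply local complementation \eqref{eq:lc-simp} and pivoting \eqref{eq:pivot-simp} to remove all internal spiders. Since the diagram is closed, every spider is internal, so the procedure will terminate by peeling off all vertices, leaving only a collection of disjoint $0$-legged scalar spiders and Hadamard boxes. The key lemma here is that the scalar correction factors accompanying each Clifford-fragment rewrite (which are $\sqrt{2}^n e^{ik\pi/2}$ from pivoting, and similar factors from local complementation and the Hopf/bialgebra rules) all lie in $T$. Combined with the closure of $T$ under products, this gives $S \in T$, which is exactly the statement.

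The main technical obstacle is the bookkeeping of the scalar correction factors for local complementation and pivoting, which the rewrite rules of Figure~\ref{fig:clifford-rules} display only up to scalars. A cleaner alternative route, which I would fall back on if the bookkeeping becomes awkward, is a sum-over-paths argument: assign a value in $\{0,1\}$ to every edge, so that each spider contributes a factor in $\{0,1,i,-1,-i\}$ and each Hadamard a factor $\pm 1/\sqrt 2$; then $S$ is a sum of terms each in $\mathbb{Z}[i]\cdot 2^{-m/2}$ for a fixed $m$, and the stabiliser-formalism evaluation of closed Clifford tensor networks (a discrete quadratic Gauss sum) shows the sum is either $0$ or of the form $\sqrt{2}^{n} e^{ik\pi/4}$, exactly as required.
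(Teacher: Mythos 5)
The paper never proves this lemma: it is introduced in Appendix~C under ``We will need the following facts'' with no argument attached, implicitly deferring to standard stabiliser theory. So your proposal supplies a proof where the paper supplies none, and both of your routes are essentially sound. The rewriting route does terminate for closed diagrams: since every spider is internal, exhaustive local complementation and pivoting leaves only isolated zero-legged $0/\pi$ spiders, each valued in $\{2,0\}$, so everything reduces to your closure observation for $T$. Its weakness is the one you identify yourself: the paper's rules Eq.~\eqref{eq:lc-simp} and Eq.~\eqref{eq:pivot-simp} are stated only up to scalar (the paper even says so explicitly), so the scalar-exact versions must be imported from the literature, at which point the argument is no longer self-contained. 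Your fallback route is the cleaner one and is essentially the standard argument: a closed Clifford tensor network evaluates to a quadratic Gauss sum over $\mathbb{F}_2$-valued edge assignments, and such sums are $0$ or of the form $\sqrt{2}^n e^{ik\pi/4}$. Note that the crude sum-over-paths bound alone is genuinely insufficient --- each term lying in $\mathbb{Z}[i]\cdot 2^{-m/2}$ does not rule out values like $(1+2i)\cdot 2^{-m/2}$, whose modulus is not a power of $\sqrt 2$ --- so the Gauss-sum structure is doing all the real work, and you are right to invoke it explicitly rather than pretend the term-by-term analysis suffices.

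One point worth flagging: as literally stated the lemma is false, because a scalar Clifford diagram can be zero (the zero-legged Z-spider with phase $\pi$ evaluates to $1+e^{i\pi}=0$), and $0\neq\sqrt{2}^n e^{ik\pi/4}$ for any integers $n,k$. Your target set $T:=\{0\}\cup\{\sqrt{2}^n e^{ik\pi/4}: n,k\in\mathbb{Z}\}$ is the correct repair, and it matches how the paper actually uses the lemma (zero diagrams are treated as a separate case in Lemma~\ref{lem:Clifford-norm} and Proposition~\ref{prop:trivial-parameter-cases}). But then your closing claim that $S\in T$ is ``exactly the statement'' is off by precisely this zero case: you prove the corrected statement, not the literal one, and it would be better to say so.
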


\begin{lemma}\label{lem:Clifford-norm}
  Let $D$ be a Clifford diagram. Consider now the diagrams $D_+$ and $D_-$ we get by plugging in a $\ket{+}$, respectively a $\ket{-}$ state into one designated input or output wires of $D$. Then if $D\neq 0$ exactly one of the following is true:
  \begin{itemize}
    \item $\norm{D_+} = \norm{D_-}$.
    \item $\norm{D_+} = 0$.
    \item $\norm{D_-} = 0$.
  \end{itemize}
\end{lemma}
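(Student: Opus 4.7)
The plan is to reduce to the case where $D$ is a scalar multiple of a normalised stabiliser state, and then appeal to the classification of single-qubit marginals of such states. First, by bending wires via the Choi-Jamio\l{}kowski isomorphism (which preserves norms), I may assume without loss of generality that $D$ has no inputs and represents an unnormalised pure state $\ket{\Psi}$ on some collection of qubits, with one designated wire $d$ onto which we project $\bra{+}$ or $\bra{-}$. Since $D$ is a nonzero Clifford diagram, the structure theorem for Clifford diagrams (obtainable via Clifford completeness together with an AP normal form) writes $\ket{\Psi} = c\ket{\Phi}$ with $c \in \CC\setminus\{0\}$ a Clifford scalar and $\ket{\Phi}$ a normalised pure stabiliser state, so that $\norm{D_\pm}^2 = |c|^2 \bra{\pm}\rho_d\ket{\pm}$ where $\rho_d = \tr_{\bar d}(\ketbra{\Phi}{\Phi})$.

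Next, I would invoke the classification of single-qubit marginals of pure stabiliser states: $\rho_d$ must itself be a uniform distribution over a stabiliser subspace on one qubit, so it lies in the finite set $\{I/2,\ketbra{0}{0},\ketbra{1}{1},\ketbra{+}{+},\ketbra{-}{-},\ketbra{+i}{+i},\ketbra{-i}{-i}\}$. This follows from the stabiliser projector formula $\ketbra{\Phi}{\Phi} = 2^{-n}\sum_{g\in S} g$: partial-tracing out the other qubits yields a uniform sum of those group elements supported only on $d$, and the nontrivial ones form an abelian subgroup of the single-qubit Pauli group, whose centraliser is spanned by at most one Pauli.

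A direct case analysis then completes the argument. Whenever $\rho_d \in \{I/2,\ketbra{0}{0},\ketbra{1}{1},\ketbra{+i}{+i},\ketbra{-i}{-i}\}$ one checks $\bra{+}\rho_d\ket{+} = \bra{-}\rho_d\ket{-} = 1/2$, giving $\norm{D_+} = \norm{D_-} = |c|/\sqrt{2}$. If instead $\rho_d = \ketbra{+}{+}$ then $\norm{D_-} = 0$ and $\norm{D_+} = |c| > 0$; the case $\rho_d = \ketbra{-}{-}$ is symmetric. Since $c\neq 0$ rules out both norms vanishing simultaneously, exactly one of the three listed alternatives holds in each case.

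The main obstacle is justifying the two structural inputs used as black boxes above: that any nonzero Clifford diagram equals a nonzero Clifford scalar times a normalised stabiliser state, and that single-qubit marginals of stabiliser states are classified as above. Both are folklore in the stabiliser/ZX literature but are not made explicit in the excerpt, so I would either cite standard references or include the short projector-sum derivation for the marginal and combine Clifford completeness with AP-form normalisation for the structure theorem.
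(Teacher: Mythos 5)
Your proof is correct and follows essentially the same route as the paper's: reduce to a state by bending wires, note that a nonzero Clifford diagram is proportional to a stabiliser state, and observe that plugging $\ket{+}$ or $\ket{-}$ amounts to post-selecting an $X$ measurement, whose outcome statistics on a stabiliser state are either deterministic or unbiased. The only difference is that the paper cites this last fact as well known, whereas you prove it explicitly via the stabiliser projector-sum formula and the classification of single-qubit marginals.
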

\begin{proof}
  Without loss of generality we can take $D$ to be a state, which is hence (proportional to) a stabiliser state. Then plugging in $\ket{+}$ or $\ket{-}$ into one of the wires corresponds to a post-selection on an $X$ measurement. It is well known that a Pauli measurement on a stabiliser state is either deterministic or unbiased, corresponding to the scenarios described above.
\end{proof}

We will now first study in-depth the behaviour of trivial parameters, and diagrams which just have a single parameter, before moving to the general case.

\begin{definition}
  We say a parametrised diagram $D$ is \emph{singular} when there exists a choice of parameters $\vec\alpha$ such that $D[\vec\alpha] = 0$.
\end{definition}

\begin{lemma}
  Let $D$ be a singular parametrised diagram with just a single parameter. Then that parameter is trivial.
  Furthermore, if there is more than one value of the parameter for which $D$ is zero, then $D=0$. Otherwise the unique value $\alpha$ of the parameter such that $D[\alpha]=0$ is Clifford: $\alpha = k\frac\pi2$.
\end{lemma}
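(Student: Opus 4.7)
The plan is to express $D[\alpha]$ as an affine function of $e^{i\alpha}$ and then use Lemma~\ref{lem:Clifford-norm} to pin down where its zeros can occur. Writing the single parametrised input of $D'$ as $\ket{+_\alpha} = \ket{0} + e^{i\alpha}\ket{1}$, we get $D[\alpha] = A + e^{i\alpha} B$, where $A := D'\ket{0}$ and $B := D'\ket{1}$ are fixed Clifford diagrams obtained by plugging the two computational basis states into the parameter wire.

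First I would handle triviality and the ``$D = 0$'' dichotomy. Suppose $D[\alpha_0] = 0$ for some $\alpha_0$. Then $A = -e^{i\alpha_0} B$, so $D[\alpha] = (e^{i\alpha} - e^{i\alpha_0}) B$ for every $\alpha$. Hence each $D[\alpha]$ is a scalar multiple of the single fixed diagram $B$, so for any two distinct values $\beta, \gamma$ the diagrams $D[\beta]$ and $D[\gamma]$ are proportional (noting that the ``Clifford choice of the other parameters'' side of the definition of trivial is vacuous here, since there are no other parameters). If additionally $D[\alpha_1] = 0$ for some $\alpha_1 \neq \alpha_0$, then $(e^{i\alpha_1} - e^{i\alpha_0}) B = 0$ forces $B = 0$, hence $A = 0$, and therefore $D = 0$.

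For the last claim, suppose $\alpha_0$ is the unique zero of $D$, so $D \neq 0$ and in particular $D' \neq 0$. I would apply Lemma~\ref{lem:Clifford-norm} to $D'$ at the parameter wire. Since $\ket{+_0} = \sqrt 2\, \ket{+}$ and $\ket{+_\pi} = \sqrt 2\, \ket{-}$, the lemma gives either $\|D[0]\| = \|D[\pi]\|$ or one of $D[0], D[\pi]$ vanishes. In the latter case, uniqueness of the zero immediately pins $\alpha_0 \in \{0, \pi\}$. In the former, the identity $\|D[\alpha]\| = |e^{i\alpha} - e^{i\alpha_0}|\,\|B\|$ reduces the norm equation to $|1 - e^{i\alpha_0}| = |1 + e^{i\alpha_0}|$, i.e.\ $\cos\alpha_0 = 0$, whence $\alpha_0 \in \{\pm\pi/2\}$. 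In every case $\alpha_0$ is a multiple of $\tfrac{\pi}{2}$. The only mild subtlety is keeping track of the unnormalised $\ket{+_\alpha}$ convention when applying Lemma~\ref{lem:Clifford-norm}, which is stated in terms of the normalised $\ket{+}, \ket{-}$; this is routine and introduces only irrelevant $\sqrt{2}$ factors.
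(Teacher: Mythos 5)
Your proof is correct, and its skeleton coincides with the paper's: both arguments rest on the fact that $D[\alpha]$ depends affinely on $e^{i\alpha}$, both invoke Lemma~\ref{lem:Clifford-norm} at the values $0$ and $\pi$, and both conclude with the same modulus computation $|1-e^{i\alpha_0}| = |1+e^{i\alpha_0}| \Leftrightarrow \cos\alpha_0 = 0$. The only real difference is how the affine dependence is packaged. The paper expands $\ket{+_\gamma}$ in the basis $\{\ket{+_{\alpha_0}},\ket{+_\beta}\}$ via Lemma~\ref{lem:basis-decomp}, obtaining $D[\gamma] = b\, D[\beta]$ with an explicit coefficient $b$, and then extracts $|b|=1$ at $\beta=0$, $\gamma=\pi$; you instead expand the parameter wire in the computational basis, $D[\alpha] = A + e^{i\alpha}B$, so that a zero at $\alpha_0$ gives the factored form $D[\alpha] = (e^{i\alpha}-e^{i\alpha_0})B$. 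Your packaging is marginally cleaner: triviality, the claim that two zeros force $D=0$, and the norm identity $\norm{D[\alpha]} = |e^{i\alpha}-e^{i\alpha_0}|\,\norm{B}$ all drop out of the single factored formula, whereas the paper re-runs the basis-decomposition argument separately for each claim. What the paper's phrasing buys is consistency with the rest of the appendix, where the subsequent results (Proposition~\ref{prop:trivial-parameter-cases}, Lemma~\ref{lem:remove-phase-function}) are all stated and proved in the language of decompositions into $\ket{+_\alpha}$ states; your route also needs $B\neq 0$ (equivalently $D\neq 0$) made explicit before dividing out $\norm{B}$, which you do note.
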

\begin{proof}
  Suppose that for some phase $\alpha$ we have $D[\alpha] = 0$, and let $\beta$ and $\gamma$ be any other phases. Write $\ket{+_\gamma} = a \ket{+_\alpha} + b\ket{+_\beta}$.
  Then $D[\gamma] = a D[\alpha] + bD[\beta] = b D[\beta]$, so that the parameter is indeed trivial.
  Now, if there were a second value of the parameter for which $D$ is zero, then we could have taken $\beta$ to be equal to that value, which would have shown $D[\gamma]=0$. Since $\gamma$ is arbitrary this indeed shows $D=0$.

  So suppose that $\alpha$ is the unique value for which $D[\alpha]=0$. Suppose that $\alpha\neq 0$ and $\alpha\neq \pi$. We will show that then $\alpha = \pm \frac\pi2$. Now with the above argument, taking $\gamma:=\pi$ and $\beta:= 0$ we calculate $D[\pi] = b D[0]$. Since we know that $D[\pi] \neq 0$ and $D[0] \neq 0$, Lemma~\ref{lem:Clifford-norm} shows that necessarily $\lvert b \rvert = 1$. But Lemma~\ref{lem:basis-decomp} also gives us $b = \frac{e^{i\alpha} +1}{e^{i\alpha}-1}$. Hence we must have $\lvert e^{i\alpha}+1\rvert = \lvert e^{i\alpha} - 1\rvert$. Converting the expressions in sine and cosine shows that this only holds when $\alpha=\pm \frac\pi2$.
\end{proof}

\begin{proposition}\label{prop:trivial-parameter-cases}
  Let $D\neq 0$ be a parametrised diagram with a single parameter $\alpha$ and suppose $\alpha$ is trivial. Then exactly one of the following is true:
  \begin{itemize}
    \item $D[0] = 0$ and $D[\alpha] = \frac12(1-e^{i\alpha}) D[\pi]$.
    \item $D[\pi] = 0$ and $D[\alpha] = \frac12(1+e^{i\alpha}) D[0]$.
    \item $D[\frac\pi2] = 0$ and $D[\alpha] = \frac12 (1+e^{i(\alpha+\frac\pi2)}) D[0]$.
    \item $D[-\frac\pi2] = 0$ and $D[\alpha] = \frac12 (1+e^{i(\alpha-\frac\pi2)})D[0]$.
    \item $D[\pi] = D[0]$ and $D[\alpha] = D[0]$.
    \item $D[\pi] = -D[0]$ and $D[\alpha] = e^{i\alpha} D[0]$.
  \end{itemize}
  In particular, if $D$ is non-singular, then $D[\pi] = \pm D[0]$.
\end{proposition}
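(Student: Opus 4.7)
The plan is to fix a reference point and expand $D[\alpha]$ with respect to the basis from Lemma~\ref{lem:basis-decomp}, then split into cases depending on where $D$ vanishes. Specifically, since $\alpha=0$ and $\alpha=\pi$ give a basis, we always have
\[
  D[\alpha] \ = \ \tfrac{1+e^{i\alpha}}{2}\, D[0] \ + \ \tfrac{1-e^{i\alpha}}{2}\, D[\pi].
\]
By the previous lemma (singular trivial parameters vanish at a unique Clifford value, or $D\equiv 0$), there are essentially two regimes: either $D$ is non-singular, or $D$ vanishes at a unique $\alpha_0\in\{0,\pi,\tfrac{\pi}{2},-\tfrac{\pi}{2}\}$.

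First I would dispose of the singular cases by plugging the vanishing value into the expansion above. If $D[0]=0$, the expansion collapses to $D[\alpha] = \tfrac{1}{2}(1-e^{i\alpha})D[\pi]$, giving case 1; if $D[\pi]=0$, it collapses to case 2. If $D[\pi/2]=0$, substituting $\alpha=\pi/2$ forces $D[\pi] = -i\,D[0]$, and after inserting this back and rewriting $(1-i)+(1+i)e^{i\alpha}$ in polar form one reads off a scalar multiple of $(1+e^{i(\alpha+\pi/2)})\,D[0]$, matching case 3 (up to a Clifford normalisation constant that one could absorb into the statement of $D[0]$). The case $D[-\pi/2]=0$ is symmetric and yields case 4.

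For the non-singular regime, write $D[\pi]=\lambda\,D[0]$ with $\lambda\neq 0$. Since $D=D'\circ (\ket{+_\alpha}\otimes \mathrm{id})$ with $D'$ Clifford, and since $\ket{+_0}=\sqrt 2\ket{+}$, $\ket{+_\pi}=\sqrt 2\ket{-}$, Lemma~\ref{lem:Clifford-norm} applied to $D'$ forces $\|D[0]\|=\|D[\pi]\|$, hence $|\lambda|=1$. Now apply Lemma~\ref{lem:Clifford-norm} again to the rotated Clifford diagram $D'\circ (S\otimes \mathrm{id})$, whose $\ket{+}/\ket{-}$ plug-ins correspond to $D[\pi/2]$ and $D[-\pi/2]$: non-singularity rules out the two "one is zero" options, so $\|D[\pi/2]\|=\|D[-\pi/2]\|$. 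A short computation with the expansion yields
\[
  \|D[\pi/2]\|^2 \ = \ (1+\mathrm{Im}\,\lambda)\|D[0]\|^2, \qquad \|D[-\pi/2]\|^2 \ = \ (1-\mathrm{Im}\,\lambda)\|D[0]\|^2,
\]
so $\mathrm{Im}\,\lambda=0$, and combined with $|\lambda|=1$ we get $\lambda=\pm 1$. Substituting $\lambda=1$ into the expansion gives $D[\alpha]=D[0]$ (case 5); $\lambda=-1$ gives $D[\alpha]=e^{i\alpha}D[0]$ (case 6). The final sentence of the proposition is then immediate: in the non-singular regime we land in case 5 or 6, i.e.\ $D[\pi]=\pm D[0]$.

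The cases are mutually exclusive essentially by construction (a non-singular $D$ cannot satisfy any of 1--4 since they each require $D$ to vanish somewhere, and the singular lemma guarantees at most one vanishing point). The main technical step to be careful with is the non-singular argument: one must justify applying Lemma~\ref{lem:Clifford-norm} to the \emph{rotated} Clifford diagram (absorbing the $S$ gate into $D'$ keeps it Clifford), and then derive the norm identities cleanly enough to conclude $\mathrm{Im}\,\lambda=0$. Everything else is bookkeeping with the basis expansion.
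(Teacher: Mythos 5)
Your proof is correct in substance, and its singular half proceeds exactly as the paper's does: expand $D[\alpha]$ in the basis $\{\ket{+_0},\ket{+_\pi}\}$ of Lemma~\ref{lem:basis-decomp} and plug in the vanishing value. The non-singular half, however, takes a genuinely different route. The paper writes $D[\pi]=e^{i\phi}D[0]$ (using triviality plus Lemma~\ref{lem:Clifford-norm}) and then argues informally that $\phi$ must be a multiple of $\frac\pi2$ because the relative phase of a Clifford diagram ``would otherwise correspond to a T gate'', finally excluding $\phi=\pm\frac\pi2$ by checking that these force $D[\mp\frac\pi2]=0$. You instead apply Lemma~\ref{lem:Clifford-norm} a second time, to the rotated diagram $D'\circ(S\otimes\mathrm{id})$ --- still Clifford, with $\ket{+}/\ket{-}$ plug-ins equal (up to $\tfrac{1}{\sqrt 2}$) to $D[\tfrac\pi2]$ and $D[-\tfrac\pi2]$ --- so that non-singularity yields $\norm{D[\frac\pi2]}=\norm{D[-\frac\pi2]}$, and your identities $\norm{D[\pm\frac\pi2]}^2=(1\pm\mathrm{Im}\,\lambda)\norm{D[0]}^2$ force $\mathrm{Im}\,\lambda=0$, hence $\lambda=\pm1$. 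This replaces the paper's appeal to the structure of Clifford phases by a purely norm-theoretic argument; I checked the computations and they are sound. What your route buys is rigour at the paper's weakest step; what the paper's route buys is brevity, since it never needs the explicit norm expansion.

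One place where you are too generous to the printed statement rather than to your own computation: in case 3 your expansion correctly yields $D[\alpha]=\tfrac{1-i}{2}\bigl(1+e^{i(\alpha+\pi/2)}\bigr)D[0]$, and the factor $(1-i)$ cannot be ``absorbed into $D[0]$''. Indeed, the formula as printed is inconsistent: setting $\alpha=0$ in it gives $D[0]=\tfrac12(1+i)D[0]$, forcing $D[0]=0$ and hence $D=0$ by the preceding lemma, contradicting $D\neq 0$. The same happens in case 4 with a factor $(1+i)$. The correct statements are $D[\alpha]=\tfrac12\bigl(1+e^{i(\alpha+\pi/2)}\bigr)D[-\tfrac\pi2]$ and $D[\alpha]=\tfrac12\bigl(1+e^{i(\alpha-\pi/2)}\bigr)D[\tfrac\pi2]$, i.e.\ the reference value should be the non-vanishing Clifford point $\mp\tfrac\pi2$, not $0$ (or equivalently one keeps $D[0]$ and your scalar factors). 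Your calculation has in fact uncovered a typo in the proposition that the paper's proof --- which dismisses these cases as ``similarly calculated'' --- does not catch; state the corrected formula rather than waving the constant away.
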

\begin{proof}
  If $D[0] = 0$, then using $\ket{+_\alpha} = a\ket{+_0} + b \ket{+_\pi}$, we get $D[\alpha] = bD[\pi]$. Using Lemma~\ref{lem:basis-decomp} we see that $b = \frac12(1-e^{i\alpha})$. The other 3 cases where $D$ is singular are similarly calculated. 

  So suppose $D$ is not singular. By Lemma~\ref{lem:Clifford-norm} we must then have $\norm{D[\pi]} = \norm{D[0]}$ and hence $D[\pi] = e^{i\phi} D[0]$. 
  Now because $e^{i\phi}$ must be expressible as a Clifford phase, we must have $\phi = k\frac\pi4$. But actually, since $\phi$ is a relative phase that only gets applied when $\pi$ is inserted in the diagram, and not when $0$ is, we must have $\phi = k\frac\pi2$, since it otherwise would correspond to a T gate. It is straightforwardly verified using the previously used technique, that if $\phi = \pm \frac\pi2$ that then $D[\mp\frac\pi2] = 0$. Hence, as $D$ is non-singular, we must have $\phi = 0$ or $\phi = \pi$, which correspond to the two remaining cases.
\end{proof}

\begin{proposition}
  Let $D_1$ and $D_2$ be parametrised diagrams with a single parameter. Suppose that $D_1[\alpha] = \lambda(\alpha) D_2[\alpha]$ for some scalar function $\lambda:\R\to\CC$ which is non-zero for all $\alpha$. If $\alpha$ is trivial in $D_1$, then $\alpha$ is trivial in $D_2$ and there exists some scalar Clifford diagram $S$ such that either $D'_1 = S\otimes D'_2$ or $D'_1 = S\otimes (D'_2\circ X(\pi))$. In either case we can rewrite, up to a scalar, $D_1[\alpha]$ into $D_2[\alpha]$, using the Clifford rewrite rules.
\end{proposition}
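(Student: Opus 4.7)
My plan is to transfer triviality of $\alpha$ from $D_1$ to $D_2$, then analyse which of the six cases of Proposition~\ref{prop:trivial-parameter-cases} each diagram falls into, and finally match up the resulting factorisations to produce the claimed structural identity. First, since $D_1[\alpha]=\lambda(\alpha)D_2[\alpha]$ with $\lambda$ nowhere zero, any proportionality $D_1[\beta]\propto D_1[\gamma]$ transfers directly to $D_2$, so $\alpha$ is trivial in $D_2$ as well, and Proposition~\ref{prop:trivial-parameter-cases} applies to both diagrams.

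Next I would classify which pairings of the six cases are possible. If $D_1$ is in one of the singular cases (1--4), it vanishes at a unique Clifford angle $\alpha_0\in\{0,\pi,\pm\pi/2\}$; since $\lambda(\alpha_0)$ is finite and non-zero, $D_2$ must vanish at the same $\alpha_0$, pinning it to the identical case. Hence these four cases are diagonal. For the non-singular cases 5 and 6, all four pairings 5--5, 5--6, 6--5, 6--6 are realisable, corresponding to $\lambda(\alpha)=c$ or $\lambda(\alpha)=ce^{\pm i\alpha}$. In each of the six cases, the parameter wire of $D_i'$ must factor through a Pauli effect: for example, in case~5 the identity $D_1[\alpha]=D_1[0]$ forces $D_1'\circ\ket{1}=0$, which combined with linearity gives $D_1'=E_1\otimes\bra{0}$ as a linear map for some Clifford diagram $E_1$; by Clifford completeness, this equality can be realised by a Clifford rewrite at the diagram level. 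The other five cases factor analogously through $\bra{-},\bra{+},\bra{+_{-\pi/2}},\bra{+_{\pi/2}},\bra{1}$ respectively.

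For the matched pairs the two effects coincide and comparing the functional forms of $\lambda$ gives $E_1=c\,E_2$ for a constant $c$, yielding $D_1'=S\otimes D_2'$ with $S=c$. For the crossed pairs 5--6 and 6--5, the parameter wires factor through $\bra{0}$ and $\bra{1}$, which are swapped by $X(\pi)$, giving $D_1'=S\otimes (D_2'\circ X(\pi))$. That $S$ is Clifford follows by pairing both sides with a Clifford covector producing a non-zero value, so that $S$ appears as a ratio of two Clifford scalars. The rewriting claim then follows from Clifford completeness: in the matched cases we directly rewrite $D_1[\alpha]$ to $S\cdot D_2[\alpha]$; in the crossed cases I would additionally apply the $\pi$-push rule~\eqref{eq:pi-push} to convert $X(\pi)\ket{+_\alpha}$ into $e^{i\alpha}\ket{+_{-\alpha}}$, and then use the case-6 factorisation of $D_2'$ to relate $D_2[-\alpha]$ back to $D_2[\alpha]$ up to a further $\alpha$-dependent scalar. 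The main obstacle will be verifying that the proportionality scalar $S$ is genuinely a Clifford scalar rather than merely an element of $\mathbb{C}$, and carrying out the crossed-case bookkeeping cleanly so that the terminal diagram is $D_2[\alpha]$ rather than $D_2[-\alpha]$.
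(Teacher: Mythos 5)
Your proposal is correct and follows essentially the same route as the paper's proof: both apply Proposition~\ref{prop:trivial-parameter-cases} to each diagram, note that the four singular cases must coincide for $D_1$ and $D_2$ because $\lambda$ is nowhere zero, and then split the non-singular cases into matched pairs (where $\lambda$ is constant and $D_1'=S\otimes D_2'$) and crossed pairs (where $\lambda(\alpha)=e^{\mp i\alpha}\lambda(0)$ and $D_1'=S\otimes(D_2'\circ X(\pi))$), with your factorisation of $D_i'$ through an effect being a repackaging of the paper's direct computation of $D_i[\alpha]$ from $D_i[0]$ and $D_i[\pi]$. The only missing detail is the trivial edge case $D_1[\alpha]=0$ for all $\alpha$, which must be dispatched separately since Proposition~\ref{prop:trivial-parameter-cases} assumes $D\neq 0$.
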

\begin{proof}
  We make some case distinctions. First, if $D_1[\alpha] = 0$ for all $\alpha$, then the same is true for $D_2[\alpha]$ and we are done. So assume $D_1 \neq 0$. By Proposition~\ref{prop:trivial-parameter-cases} there are then 6 cases to check. The first 4 are very similar, so we only do the first of those. Suppose $D_1[0] = 0$. Then because $\lambda(0)\neq 0$ by assumption, we must also have $D_2[0] = 0$.
  Hence, by the first case in Proposition~\ref{prop:trivial-parameter-cases} we have 
  $$\lambda(\alpha) D_2[\alpha] = D_1[\alpha] = \frac12(1-e^{i\alpha})D_1[\pi] = \lambda(\pi) \frac12(1-e^{i\alpha}) D_2[\pi] = \lambda(\pi) D_2[\alpha].$$
  Hence, $\lambda(\alpha) = \lambda(\pi)$ for all $\alpha\neq 0$. Since $D_2[0] = 0$ we might as well take $\lambda(0) = \lambda(\pi)$. Let $S$ denote the scalar Clifford diagram denoting $\lambda(\pi)$. Then $D_1 = S\otimes D_2$.

  The other cases where one of the diagrams is zero for one of the values of the parameter are handled similarly. Hence, assume that $D_1[\alpha] \neq 0$ for all $\alpha$. Then also $D_2[\alpha] \neq 0$. We must then have $D_1[\pi] = \pm D_1[0]$, and $D_2[\pi] = \pm D_2[0]$. Again, we make some case distinctions. 

  Suppose $D_1[\pi] = D_1[0]$ and $D_2[\pi] = D_2[0]$. Then $D_1[\alpha] = D_1[0]$ and $D_2[\alpha] = D_2[0]$, so that $D_1[\alpha] = D_1[0] = \lambda(0) D_2[0] = \lambda(0) D_2[\alpha]$, and we are done.
  If instead $D_1[\pi] = - D_1[0]$ and $D_2[\pi] = - D_2[0]$, then $D_1[\alpha] = e^{i\alpha} D_1[0]$ and $D_2[\alpha] = e^{i\alpha} D_2[0]$, so that we can again verify that $D_1[\alpha] = \lambda(0) D_2[\alpha]$ and we are done.

  Hence, assume that $D_1[\pi] = D_1[0]$, while $D_2[\pi] = - D_2[0]$ (the other remaining case is handled analogously).
  Using that $D_2[\alpha] = e^{i\alpha} D_2[0]$ we then calculate 
  \[\lambda(\alpha)e^{i\alpha} D_2[0] = \lambda(\alpha) D_2[\alpha] = D_1[\alpha] = D_1[0] = \lambda(0) D_2[0],\]
  so that $\lambda(\alpha) = e^{-i\alpha} \lambda(0)$.
  Letting $\overline{D}_2' = D'_2 \circ X(\pi)$ we note that $\overline{D}_2[\alpha] = e^{i\alpha}D_2[-\alpha] = D_2[0]$.
  Hence, $D_1[\alpha] = e^{-i\alpha}\lambda(0) D_2[\alpha] = \lambda(0) D_2[0] = \lambda(0) \overline{D}_2[\alpha]$.
  We see then that ZX proves that $D_1[\alpha]$ and $S\otimes \overline{D}_2[\alpha]$ are equal, for $S = \lambda(0)$.
\end{proof}

\begin{lemma}\label{lem:remove-phase-function}
  Let $D_1[\alpha]$ and $D_2[\alpha]$ be (not necessarily Clifford) parametrised diagrams with just one non-trivial parameter $\alpha$. Suppose there is some scalar function $\lambda:\R\to \CC$ such that $D_1[\alpha] = \lambda(\alpha) D_2[\alpha]$, then $\lambda(\alpha) = \lambda'$ is constant and we have $D_1' = \lambda' D_2'$.
\end{lemma}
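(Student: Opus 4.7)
The plan is to use the basis $\{\ket{+_0}, \ket{+_\pi}\}$ (via Lemma~\ref{lem:basis-decomp}) to pin down $\lambda(\alpha)$ by comparing its values at the two basis phases, and then to appeal to linear independence (forced by non-triviality) to conclude $\lambda$ is constant.

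First I would show that non-triviality of $\alpha$ in $D_1$ forces both $D_1[0]$ and $D_1[\pi]$ to be non-zero and linearly independent. Writing $\ket{+_\alpha} = a(\alpha)\ket{+_0} + b(\alpha)\ket{+_\pi}$ via Lemma~\ref{lem:basis-decomp} gives $D_1[\alpha] = a(\alpha)D_1[0] + b(\alpha)D_1[\pi]$. If $D_1[0]$ and $D_1[\pi]$ were linearly dependent, or if one of them vanished, every $D_1[\alpha]$ would lie in a one-dimensional span; for distinct $\beta,\gamma \notin \{0,\pi\}$ the diagrams $D_1[\beta]$ and $D_1[\gamma]$ are both non-zero (since $a,b$ vanish only at $\pi$ and $0$ respectively) and hence proportional, contradicting non-triviality. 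By the same token, applying $D_1[\alpha] = \lambda(\alpha)D_2[\alpha]$ yields that $D_2[0]$ and $D_2[\pi]$ are linearly independent and non-zero, that $\lambda(0)$ and $\lambda(\pi)$ are uniquely determined and non-zero, and moreover that $\lambda(\alpha) \neq 0$ for every $\alpha$ (for $\lambda(\alpha_0) = 0$ would produce $D_1[\alpha_0] = 0$ and hence a forbidden linear relation between $D_1[0]$ and $D_1[\pi]$).

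Next, I would substitute the basis expansion into $D_1[\alpha] = \lambda(\alpha)D_2[\alpha]$ and use $D_1[0] = \lambda(0)D_2[0]$ and $D_1[\pi] = \lambda(\pi)D_2[\pi]$ to rewrite the left side, obtaining
\begin{equation*}
  a(\alpha)\lambda(0)D_2[0] + b(\alpha)\lambda(\pi)D_2[\pi] \;=\; \lambda(\alpha)\bigl(a(\alpha)D_2[0] + b(\alpha)D_2[\pi]\bigr).
\end{equation*}
Matching coefficients against the linearly independent $D_2[0]$ and $D_2[\pi]$ gives $a(\alpha)\bigl(\lambda(\alpha) - \lambda(0)\bigr) = 0$ and $b(\alpha)\bigl(\lambda(\alpha) - \lambda(\pi)\bigr) = 0$. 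For any $\alpha \notin \{0,\pi\}$ both $a(\alpha)$ and $b(\alpha)$ are non-zero, whence $\lambda(\alpha) = \lambda(0) = \lambda(\pi) =: \lambda'$, a constant.

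Finally, from $D_1[\alpha] = \lambda' D_2[\alpha]$ for every $\alpha$, i.e.\ $D_1' \circ \ket{+_\alpha} = \lambda' D_2' \circ \ket{+_\alpha}$, the fact that $\{\ket{+_0}, \ket{+_\pi}\}$ is a basis of $\CC^2$ lets me conclude $D_1' = \lambda' D_2'$ as linear maps, by the same two-evaluations-suffice argument used in the proof of Proposition~\ref{prop:parameter-equality}. The main obstacle is the first step: carefully unpacking the definition of non-triviality to simultaneously rule out $D_1[0] = 0$, $D_1[\pi] = 0$, and linear dependence of the two. Once this is established, the remainder is elementary coefficient matching in a two-dimensional basis followed by a one-line completeness argument.
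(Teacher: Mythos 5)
Your proposal is correct and follows essentially the same route as the paper's proof: both expand $\ket{+_\gamma}$ in a basis of two $\ket{+}$-states via Lemma~\ref{lem:basis-decomp}, use non-triviality to force the corresponding two instantiations of $D_2$ to be linearly independent, and match coefficients to conclude $\lambda$ is constant. The only (cosmetic) difference is that you fix the basis at $\{0,\pi\}$ whereas the paper works with an arbitrary pair $\{\alpha,\beta\}$ and a third point $\gamma$.
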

\begin{proof}
  Pick some $\alpha$ and $\beta$ distinct from one another. By non-triviality of the parameter we know that $D_1[\alpha]$ is not proportional to $D_1[\beta]$ for this choice of $\alpha$ and $\beta$. This then also implies that $D_2[\alpha]$ is not proportional to $D_2[\beta]$. 
  Let $\gamma$ be any other phase, i.e.~$\gamma\neq \alpha$ and $\gamma\neq \beta$.

  Let $a$ and $b$ be the constants such that $\ket{+_\gamma} = a\ket{+_\alpha} + b \ket{+_\beta}$. These $a$ and $b$ exist because $\ket{+_\alpha}$ and $\ket{+_\beta}$ form a basis. Note that necessarily $a\neq 0$ and $b\neq 0$, as $\gamma\neq \alpha,\beta$.
  Then:
  \[\tikzfig{phase-cancel-lemma-pf1}\]
  But on the other hand:
  \[\tikzfig{phase-cancel-lemma-pf2}\]
  Hence, we must have:
  \[a \lambda(\alpha) D_2[\alpha] + b \lambda(\beta) D_2[\beta] \ =\ a \lambda(\gamma) D_2[\alpha] + b \lambda(\gamma) D_2[\beta]. \]
  Now bring matching terms to each side:
  \[a(\lambda(\alpha) - \lambda(\gamma)) D_2[\alpha] = - b(\lambda(\beta) - \lambda(\gamma)) D_2[\beta].\]
  Since by assumption $D_2[\alpha]$ and $D_2[\beta]$ are not proportional to each other, we must have that each side is equal to $0$. As we furthermore know that $a,b\neq 0$, we must then have $\lambda(\alpha) - \lambda(\gamma) = 0 = \lambda(\beta) - \lambda(\gamma)$. Hence: $\lambda(\alpha) = \lambda(\gamma) = \lambda(\beta)$.
  Since $\gamma$ was an arbitrary phase not equal to $\alpha$ or $\beta$ we see hence that $\lambda(\gamma) = \lambda'$ is a constant. But then $D_1[\alpha] = \lambda' D_2[\alpha]$ for all $\alpha$, and hence the diagrams must be equal when we leave the wire open: $D_1' = \lambda' D_2'$.
\end{proof}
As we also noted in Remark~\ref{rem:finite-values}, to use Proposition~\ref{prop:parameter-equality} we only really need to know that the two parametrised maps agree on two values. The proof of Lemma~\ref{lem:remove-phase-function} furthermore required a third value in order to get the phases to be equal.
We hence note the following corollary.
\begin{corollary}\label{cor:phase-finite-values}
  Let $D_1[\alpha]$ and $D_2[\alpha]$ be parametrised diagrams with just one parameter $\alpha$ that is furthermore non-trivial, and suppose that $D_1[k \frac\pi2] = \lambda(k) D_2[k\frac\pi2]$ for some scalar function $\lambda:\mathbb{Z}\to \CC$ and all $k\in \mathbb{Z}$. Then $D'_1 = \lambda' D'_2$ for some constant scalar $\lambda'$.
\end{corollary}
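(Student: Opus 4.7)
The plan is to adapt the proof of Lemma~\ref{lem:remove-phase-function} essentially verbatim, except that the three distinct values of the parameter used in the basis-decomposition argument are chosen from the four Clifford angles $\{0,\pi/2,\pi,3\pi/2\}$ available modulo $2\pi$. Since three distinct Clifford values suffice for the basis decomposition trick and we have four such values at our disposal, there is enough room to run the argument entirely inside the set where we are given information.

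First I would propagate the hypotheses to $D_2$. Non-triviality of the parameter in $D_1$ implies that $D_1[\beta]$ is not proportional to $D_1[\gamma]$ for any two distinct values $\beta,\gamma$, and in particular $D_1$ never vanishes (a vanishing value would immediately witness triviality against any other). Hence $\lambda(k)\neq 0$ for every $k\in\mathbb{Z}$, and the relation $D_1[k\pi/2]=\lambda(k)D_2[k\pi/2]$ transfers non-triviality to $D_2$ at the Clifford values; in particular $D_2[0]$ and $D_2[\pi]$ are non-zero and not proportional, hence linearly independent as qubit states.

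Next I would choose the three Clifford values $0,\pi,\pi/2$ and apply Lemma~\ref{lem:basis-decomp} to expand $\ket{+_{\pi/2}}=a\ket{+_0}+b\ket{+_\pi}$ with $a,b\neq 0$. Plugging this decomposition into the parameter wire of both $D_1$ and $D_2$ gives $D_i[\pi/2]=aD_i[0]+bD_i[\pi]$ for $i=1,2$, and combining with the hypothesised relations at the three Clifford values yields
\begin{equation*}
  a(\lambda(1)-\lambda(0))D_2[0] + b(\lambda(1)-\lambda(2))D_2[\pi] = 0.
\end{equation*}
Linear independence of $D_2[0]$ and $D_2[\pi]$ together with $a,b\neq 0$ forces $\lambda(0)=\lambda(1)=\lambda(2)=:\lambda'$. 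To extend to all $\alpha$, I would invoke Lemma~\ref{lem:basis-decomp} again to write $\ket{+_\alpha}=a(\alpha)\ket{+_0}+b(\alpha)\ket{+_\pi}$, so $D_i[\alpha]=a(\alpha)D_i[0]+b(\alpha)D_i[\pi]$, and substituting $D_1[0]=\lambda'D_2[0]$ and $D_1[\pi]=\lambda'D_2[\pi]$ gives $D_1[\alpha]=\lambda'D_2[\alpha]$ for every $\alpha$. Since $\{\ket{+_0},\ket{+_\pi}\}$ already spans the qubit state space, this equality on the image of the parameter wire promotes to the equality of Clifford parts $D_1'=\lambda'D_2'$.

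I do not expect any genuine obstacle: the corollary is a straightforward specialisation of Lemma~\ref{lem:remove-phase-function} to the Clifford sublattice. The only point requiring attention is that the scalar function $\lambda$ automatically takes non-zero values under the non-triviality hypothesis, and that non-triviality transfers from $D_1$ to $D_2$ via the given proportionalities; both follow immediately from the lemma preceding Proposition~\ref{prop:circ-nontrivial} on the behaviour of trivial parameters.
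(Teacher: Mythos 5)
Your proposal is correct and matches the paper's intended argument: the paper derives this corollary without a separate proof, simply by observing that the proof of Lemma~\ref{lem:remove-phase-function} only ever uses three distinct parameter values, which can all be chosen from the Clifford angles $\{0,\tfrac\pi2,\pi\}$. Your write-up instantiates exactly that argument, including the correct observations that non-triviality forces $\lambda(k)\neq 0$ and transfers linear independence of $D_2[0]$ and $D_2[\pi]$.
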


\begin{proposition*}[Restatement of Prop.~\ref{prop:scalar-is-constant}]
  Let $D_1$ and $D_2$ be (not necessarily Clifford) parametrised diagrams with the same number of parameters and where all the parameters of $D_1$ are non-trivial. Then if $D_1[\vec \alpha] = \lambda(\vec \alpha) D_2[\vec \alpha]$ for all $\vec \alpha$, then $D_1' = \lambda' D_2'$ for some constant scalar $\lambda'$.
\end{proposition*}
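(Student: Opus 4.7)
The plan is to proceed by induction on the number $k$ of parameters, using Lemma~\ref{lem:remove-phase-function} as the base case $k=1$. For the inductive step I would assume the proposition for $k-1$ parameters and consider $D_1,D_2$ with $k$ parameters, separating out the last parameter $\alpha_k$ from $\vec\alpha=(\alpha_1,\ldots,\alpha_{k-1})$. Fixing $\alpha_k$ to an arbitrary Clifford value $c$, I would apply the inductive hypothesis to the restricted $(k-1)$-parameter diagrams $D_i[\vec\alpha,c]$. The crucial point is that each $\alpha_j$ with $j<k$ remains non-trivial in $D_1[\vec\alpha,c]$: non-triviality of $\alpha_j$ in $D_1$ asks that non-proportionality hold at every Clifford setting of the remaining parameters, so it holds in particular at any setting with $\alpha_k=c$. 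The inductive hypothesis then yields that $\lambda(\vec\alpha,c)=\lambda_c$ is constant in $\vec\alpha$.

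To tie the $\lambda_c$ for different Clifford $c$ together, I would next fix any Clifford vector $\vec c\in\Z[\frac\pi2]^{k-1}$ and apply Lemma~\ref{lem:remove-phase-function} to the single-parameter diagrams $E_i[\alpha_k]:=D_i[\vec c,\alpha_k]$. Non-triviality of $\alpha_k$ in $D_1$ at the Clifford setting $\vec c$ guarantees that $\alpha_k$ is non-trivial in $E_1$, so the lemma yields that $\lambda(\vec c,\alpha_k)=\mu_{\vec c}$ is constant in $\alpha_k$. Comparing with the previous step at any Clifford $c$ gives $\lambda_c=\lambda(\vec c,c)=\mu_{\vec c}$, so $\lambda_c$ is in fact independent of $c$; call this common value $\lambda^*$. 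Consequently $D_1[\vec\alpha,c]=\lambda^*D_2[\vec\alpha,c]$ for every $\vec\alpha$ and every Clifford $c$.

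To extend to arbitrary $\alpha_k$, I would use the basis decomposition $\ket{+_{\alpha_k}}=a\ket{+_0}+b\ket{+_\pi}$ from Lemma~\ref{lem:basis-decomp}. Since $D_i[\vec\alpha,\alpha_k]$ depends linearly on the state plugged in at the $\alpha_k$-input, we get $D_1[\vec\alpha,\alpha_k]=a D_1[\vec\alpha,0]+b D_1[\vec\alpha,\pi]=\lambda^*\bigl(a D_2[\vec\alpha,0]+b D_2[\vec\alpha,\pi]\bigr)=\lambda^* D_2[\vec\alpha,\alpha_k]$, so $\lambda$ is globally equal to $\lambda^*$. The abstract-diagram conclusion $D_1'=\lambda^* D_2'$ then follows since $\{\ket{+_\alpha}:\alpha\in\R\}$ spans each input qubit space, so agreement on all parametrised inputs lifts to agreement of the underlying Clifford maps with those wires left open.

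The main subtlety I expect is the non-triviality bookkeeping: one has to verify that non-triviality is preserved when passing to a restriction in which some of the parameters have been replaced by Clifford constants, which rests on the fact, readable off the definition, that non-triviality is universally quantified over Clifford settings of the complementary parameters. The second delicate point is the linking step, where one observes that evaluating the two independently-obtained constants $\lambda_c$ and $\mu_{\vec c}$ at a common Clifford point $(\vec c,c)$ suffices to collapse them to a single global constant $\lambda^*$, so that no further induction or analytic argument is needed.
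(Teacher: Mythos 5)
Your argument is correct, and it reaches the paper's conclusion by a noticeably different organisation of the same basic ingredients. The paper opens the parameter wires one at a time: it first applies Lemma~\ref{lem:remove-phase-function} with all but one parameter fixed at Clifford angles, and then, because after that step the scalar relation between the partially-opened diagrams is only known at Clifford multiples of the next parameter, it must invoke Corollary~\ref{cor:phase-finite-values} (the variant of the lemma that assumes agreement only at the angles $k\frac\pi2$) to keep peeling off wires. Your induction on the parameter count avoids Corollary~\ref{cor:phase-finite-values} entirely: constancy of $\lambda$ in the first $k-1$ arguments for each Clifford $\alpha_k=c$ comes from the inductive hypothesis, constancy in $\alpha_k$ for each Clifford $\vec c$ comes from Lemma~\ref{lem:remove-phase-function}, the two families of constants are linked at a common fully-Clifford point, and the extension to arbitrary $\alpha_k$ is the same $\ket{+_0},\ket{+_\pi}$ linearity trick the paper already uses in Proposition~\ref{prop:parameter-equality}. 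Your bookkeeping of non-triviality under restriction is also right: non-triviality is universally quantified over Clifford settings of the complementary parameters, so it survives fixing $\alpha_k$ to a Clifford value. What each route buys: yours is self-contained modulo the one-parameter lemma; the paper's avoids the linking step at the price of proving and invoking the finite-values corollary.

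One detail to tighten in the linking step. The inductive hypothesis, as stated, only yields $\tilde D_1'=\lambda_c\tilde D_2'$, which pins down $\lambda(\vec\alpha,c)=\lambda_c$ only at points where $D_2[\vec\alpha,c]\neq 0$; so the identification $\lambda_c=\lambda(\vec c,c)=\mu_{\vec c}$ needs $D_2[\vec c,c]\neq 0$ at the fully Clifford linking point. This is easily supplied: a singular one-parameter diagram has a trivial parameter, so non-triviality of every parameter of $D_1$ forces $D_1[\beta,\vec c\,]\neq 0$ for every Clifford setting of the remaining parameters, hence $D_1[\vec c,c]\neq 0$ and therefore $D_2[\vec c,c]\neq 0$. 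With that observation the collapse to a single global constant $\lambda^*$ goes through exactly as you describe.
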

\begin{proof}
  For convenience we will write $D_1[\alpha,\vec \alpha]$ for the set of parameters in $D_1$. 
  For an integer string $\vec k\in \{0,1\}^n$ let $\vec\alpha^{\vec k} := (k_1\frac\pi2,\ldots, k_n\frac\pi2)$. We then write $D_i^{\vec k}:=D_i[-, \vec\alpha^{\vec k}]$ for the parametrised diagrams $D_1^{\vec k}$ and $D_2^{\vec k}$ which each have a single non-trivial parameter $\alpha$. Note that $D_1^{\vec k}[\alpha] = \lambda(\alpha,\vec \alpha^{\vec k}) D_2^{\vec k}[\alpha]$ for every $\alpha$. Hence, by setting $\lambda^{\vec k}(\alpha) := \lambda(\alpha, \vec \alpha^{\vec k})$ we satisfy the conditions of Lemma~\ref{lem:remove-phase-function} so that we must have
  $(D_1^{\vec k})' = \lambda_{\vec k} (D_2^{\vec k})'$, where $\lambda_{\vec k}$ is a scalar.

  Now consider the diagrams $D_i[-,\vec \alpha]$ for $i=1,2$, which are like $D_i[\alpha,\vec \alpha]$, but with the wire where $\alpha$ is plugged in left open: 
  \ctikzfig{phased-parameter-equality-pf1}
  Consider now a shorter integer string $\vec k' \in \{0,1\}^{n-1}$ and the one-parameter diagrams $E_i^{\vec k'}[\beta] := D_i[-,\beta, \vec\alpha^{\vec k'}]$. 
  Then $E_1^{\vec k'}[k\frac\pi2] = \lambda_{k,\vec k'} E_2^{\vec k'}[k\frac\pi2]$ so that this satisfies the assumptions of Corollary~\ref{cor:phase-finite-values} (with $\lambda(k):= \lambda_{k,\vec k'}$). We hence note that $\lambda_{k,\vec k'}$ must be independent of $k$, so let's call the scalar just $\lambda_{\vec k'}$. Additionally, we must then have $(E_1^{\vec k'})' = \lambda_{\vec k'} (E_2^{\vec k'})'$.

  We can now repeat the same argument until we have gone through all the parameters. We then conclude that $D'_1 = \lambda' D'_2$ for some constant scalar $\lambda'$.
\end{proof}

\begin{theorem*}[Restatement of Theorem~\ref{thm:completeness-scalars}]
  Let $D_1$ and $D_2$ be two parametrised diagrams with the same number of parameters and where all the parameters of $D_1$ are non-trivial. Then if $D_1[\vec \alpha] = \lambda(\vec \alpha) D_2[\vec \alpha]$ for all $\vec \alpha$, then $D_1' = C\otimes D_2'$ for some Clifford scalar $C$ and we can uniformly rewrite $D_1[\vec \alpha]$ into $C\otimes D_2[\vec \alpha]$ using Clifford rewrites.
\end{theorem*}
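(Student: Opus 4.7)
The plan is to combine Proposition~\ref{prop:scalar-is-constant} with Clifford completeness and an elementary argument that the residual global scalar lies in the Clifford ring. First, I would invoke Proposition~\ref{prop:scalar-is-constant} directly: since all parameters of $D_1$ are non-trivial and $D_1[\vec\alpha] = \lambda(\vec\alpha)\, D_2[\vec\alpha]$ for every $\vec\alpha$, there is a \emph{constant} scalar $\lambda' \in \mathbb{C}$ such that $D_1' = \lambda' D_2'$ on the nose. So the parameter dependence of the global phase is already eliminated, and what remains is to realise $\lambda'$ as a Clifford scalar diagram $C$ and then exhibit an explicit sequence of Clifford rewrites.

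The second step is to argue that $\lambda' = \sqrt{2}^{\,n} e^{i k \pi/4}$ for integers $n,k$. Here I would use that both $D_1'$ and $D_2'$ are already Clifford ZX-diagrams. Assuming $D_2' \neq 0$ (the case $D_2'=0$ forces $D_1'=0$ and is handled by taking $C=1$), choose stabiliser inputs $\ket{\psi}$ and stabiliser effects $\bra{\phi}$ such that $\bra{\phi} D_2' \ket{\psi}\neq 0$. Plugging these into both sides yields two scalar Clifford ZX-diagrams whose values must, by the preliminary lemma at the start of Appendix~\ref{app:completeness}, each be of the form $\sqrt{2}^{\,n_i} e^{i k_i \pi/4}$. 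Their ratio is $\lambda'$, so $\lambda'$ has the required shape, and there exists a scalar Clifford diagram $C$ with $\llbracket C \rrbracket = \lambda'$. Consequently $D_1' = C \otimes D_2'$ as Clifford linear maps on the nose.

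The third step is the rewrite. Since $D_1'$ and $C \otimes D_2'$ are Clifford diagrams denoting the same linear map with matching scalars, Clifford completeness~\cite{BackensCompleteness} produces a finite sequence of Clifford ZX-rewrites transforming one into the other. Because this sequence acts entirely on the Clifford part of the diagram above the parametrised inputs $\ket{+_{\alpha_i}}$, which sit untouched on input wires throughout, the same sequence transforms $D_1[\vec\alpha]$ into $C \otimes D_2[\vec\alpha]$ uniformly in $\vec\alpha$, as required.

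The main obstacle is the middle step: certifying that $\lambda'$ is a Clifford scalar rather than merely a complex number. Proposition~\ref{prop:scalar-is-constant} yields only constancy; the extra ring-theoretic structure must be extracted from the fact that $D_1'$ and $D_2'$ separately lie in the Clifford fragment, which constrains their matrix entries to $\mathbb{Z}[1/\sqrt{2}, e^{i\pi/4}]$. The only mild subtlety is choosing a stabiliser input/effect pair for which $D_2'$ does not vanish; this can always be done when $D_2'\neq 0$, so no further case analysis is needed beyond the trivial $D_2'=0$ case.
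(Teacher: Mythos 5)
Your proposal is correct and follows essentially the same route as the paper's proof: invoke Proposition~\ref{prop:scalar-is-constant} to make $\lambda$ constant, certify that the constant is a Clifford scalar, and then conclude via Clifford completeness with the observation that the rewrites never touch the plugged-in $\ket{+_{\alpha_i}}$ inputs (which is exactly Proposition~\ref{prop:parameter-equality}). The only difference is cosmetic: the paper certifies Cliffordness of $\lambda'$ by instantiating $\vec\alpha = \vec 0$ so that both sides become Clifford diagrams, whereas you extract a non-vanishing stabiliser matrix entry and use the closed form $\sqrt{2}^{\,n}e^{ik\pi/4}$ of scalar Clifford diagrams — a slightly more explicit version of the same step.
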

\begin{proof}
  By Proposition~\ref{prop:scalar-is-constant} we may assume that $\lambda$ is in fact a constant $\lambda'$.
  Now if we put $\vec \alpha = (0\cdots 0)$ then both $D_1[0\cdots 0]$ and $D_2[0\cdots 0]$ are entirely Clifford diagrams. But as we also have $D_1[0\cdots 0] = \lambda' D_2[0\cdots 0]$ we see that $\lambda'$ must be a scalar that is representable as a Clifford diagram. Denote the Clifford diagram representing $\lambda'$ by $C$. Then we see that $D_1[\alpha] = C\otimes D_2[\alpha]$ for all $\alpha$. Hence by Proposition~\ref{prop:parameter-equality} we see that $D_1' = C\otimes D_2'$ and that these can be rewritten into one another by the Clifford rewrite rules.
\end{proof}

\begin{remark}
  This proof currently requires a strong form of non-triviality, which is that for any choice of $\vec \alpha$, $D[\alpha, \vec \alpha]$ is not proportional to $D[\alpha', \vec \alpha]$, for any choice of $\alpha\neq \alpha'$. Or in the contra-positive form: we already call a parameter trivial when there is at least one choice of $\vec \alpha$ such that there exist $\alpha\neq\alpha'$ with $D[\alpha, \vec \alpha]$ proportional to $D[\alpha',\vec \alpha]$.
  We conjecture that we can generalise the proof to only requiring that $D[\alpha,\vec \alpha] \neq 0$ for any choice of parameter.
\end{remark}

\section{From non-parsimonious reductions to parsimonious ones}\label{app:non-parsimonious}

This section works towards a proof of Proposition~\ref{prop:to-non-parsimonious}.

First, we derive the following stronger consequences from the equation we found in Lemma~\ref{lem:no-cloning-gadget}. Here we say a bit string $\vec x\in \{0,1\}^k$ is \emph{even} when its Hamming weight is even.
\begin{lemma}\label{lem:weight-space}
  Let $z_1,\ldots, z_k \in \Z$ be integers and $c_1,\ldots, c_k\in \R$ some constants, and suppose that $D_1[\alpha] = D_2[z_1\alpha + c_1,\ldots, z_k\alpha + c_k]$. 
  Let furthermore $\ket{A_w} := \sum_{\vec x\,;\, \vec z\cdot \vec x = w} \ket{\vec x}$ be a superposition of all the basis states $\ket{\vec x}$ with equal \emph{weight} $\vec z\cdot \vec x$.
  Then $D_2'\ket{A_w} = 0$ if $w\not\in\{0,1\}$.
\end{lemma}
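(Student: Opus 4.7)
The plan is to read the hypothesis as a Fourier identity in the variable $e^{i\alpha}$. Unfolding the definition of a parametrised diagram, the equation $D_1[\alpha] = D_2[z_1\alpha + c_1,\ldots,z_k\alpha + c_k]$ becomes
\[
  D_1'\bigl(\ket{0} + e^{i\alpha}\ket{1}\bigr) \;=\; D_2'\bigotimes_{i=1}^k \ket{+_{z_i\alpha + c_i}},
\]
and the left-hand side manifestly lies in the two-dimensional Fourier subspace spanned by $\{1, e^{i\alpha}\}$. So my strategy is to expand the right-hand side into Fourier modes in $\alpha$ and match coefficients.

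To expand, I would distribute the tensor product in the computational basis as $\sum_{\vec x} e^{i\vec c\cdot \vec x}\, e^{i\alpha(\vec z\cdot \vec x)}\ket{\vec x}$, and then regroup by the integer weight $w := \vec z\cdot \vec x$. Writing $\ket{B_w} := \sum_{\vec x:\vec z\cdot\vec x = w} e^{i\vec c\cdot \vec x}\ket{\vec x}$, the identity takes the form
\[
  A_0 + e^{i\alpha}A_1 \;=\; \sum_{w\in\Z} e^{iw\alpha}\, D_2'\ket{B_w}
\]
for all $\alpha \in \R$, where $A_j := D_1'\ket{j}$. Since $\{\alpha \mapsto e^{iw\alpha}\}_{w\in\Z}$ are linearly independent as functions on $\R$ (a standard Vandermonde or $L^2$-orthogonality argument on any finite set of frequencies involved), matching Fourier coefficients forces $D_2'\ket{B_0} = A_0$, $D_2'\ket{B_1} = A_1$, and $D_2'\ket{B_w} = 0$ for every other $w$.

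Finally, to pass from $\ket{B_w}$ to the $\ket{A_w}$ of the lemma statement, I would observe that $\ket{B_w} = U\ket{A_w}$ for the diagonal unitary $U := \bigotimes_i Z(c_i)$ acting as $\ket{\vec x}\mapsto e^{i\vec c\cdot \vec x}\ket{\vec x}$. Mirroring the WLOG reduction used at the start of Lemma~\ref{lem:no-cloning-gadget}, these $Z(c_i)$ phases can be absorbed into the parameter wires of $D_2'$ (equivalently, assuming $\vec c = \vec 0$ from the outset), which converts the $\ket{B_w}$ identity into $D_2'\ket{A_w} = 0$ for $w \notin \{0, 1\}$. I expect the only real obstacle to be clean book-keeping around the constants $c_i$; the essential content of the lemma is simply the observation that $D_1[\alpha]$ has Fourier support contained in $\{0, 1\}$, forcing every other weight subspace to be annihilated by $D_2'$.
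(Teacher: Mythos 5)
Your proof is correct, but it takes a genuinely different and more direct route than the paper. You expand $\bigotimes_i \ket{+_{z_i\alpha+c_i}}$ in the computational basis, regroup by the integer weight $w=\vec z\cdot\vec x$, and match Fourier coefficients of $e^{iw\alpha}$ against the two-mode left-hand side $D_1'\ket{0}+e^{i\alpha}D_1'\ket{1}$; since only finitely many weights occur, linear independence of $\{e^{iw\alpha}\}_{w}$ for distinct integers $w$ (applied componentwise, e.g.\ via $L^2$-orthogonality on $[0,2\pi]$) immediately kills every mode outside $\{0,1\}$, and absorbing the $Z(c_i)$ phases into $D_2'$ converts $\ket{B_w}$ into $\ket{A_w}$ exactly as the paper's own WLOG does. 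The paper instead routes through Lemma~\ref{lem:no-cloning-gadget}: it plugs $\ket{0}$ and $\ket{1}$ into the residual wire of that diagrammatic identity, observes that the resulting states $\ket{\psi_\beta}$ are mapped to a $\beta$-independent vector, forms differences $\ket{\psi_\beta}-\ket{\psi_{\beta'}}$, and then invokes the Chebyshev/Vandermonde machinery of Lemma~\ref{lem:independent-vectors} to separate the weight components. Your argument buys brevity and avoids both auxiliary lemmas; it even yields the extra identities $D_2'\ket{A_0}=D_1'\ket{0}$ and $D_2'\ket{A_1}=D_1'\ket{1}$, which the differencing step in the paper necessarily discards. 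What the paper's detour buys is reuse: the explicit X-spider equation of Lemma~\ref{lem:no-cloning-gadget} is needed again for Proposition~\ref{prop:to-non-parsimonious} and the case analysis in Appendix~\ref{app:non-parsimonious}, so the paper obtains the present lemma as a by-product of machinery it must build anyway. As a standalone proof of the stated claim, yours is the cleaner one.
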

\begin{proof}
  Start with the equation from Lemma~\ref{lem:no-cloning-gadget}, and without loss of generality assume all the $c_i$ are zero.
  We see that only the right-hand side contains a $\beta$ and that this equation must hold for for all choices of $\beta$. Now, let's plug in $\ket{0}$:
  \begin{equation}\label{eq:no-cloning-pf-zero}
  \tikzfig{no-cloning-pf-gadget-zero}
  \end{equation}
  The state that depends on $\beta$ that is plugged into $D'_2$ on the right-hand side is a particular superposition of computational basis states. First, lets write $\vec z = \vec a;\vec b$ where $\vec a = z_1\cdots z_l$ are the odd $z_j$'s and $\vec b = z_{l+1}\cdots z_k$ are the even ones. 
  First, for the qubits, $l+1,\ldots k$, we see that the state plugged into each can be written as $\ket{0} + e^{i\beta z_j}\ket{1} = \sum_{x_j} e^{i\beta z_j x_j}\ket{x_j}$. Hence, we can write the full state on these qubits $l+1,\ldots, k$ as $\sum_{\vec x} e^{i \beta \vec b\cdot \vec x}\ket{\vec x}$. 
  The first $l$ qubits have a slightly more complicated structure. The X-spider is (up to a scalar factor we will ignore) a superposition over all the even basis states: $\sum_{\vec x \text{ even}} \ket{\vec x}$. Since all the $z_1,\ldots, z_l$ are odd, $x_1+\cdots+ x_l$ being even is the same as $z_1x_1 + \cdots + z_lx_l$ being even. Hence we can write the state on the first $l$ qubits as $\sum_{\vec a\cdot \vec x \text{ even}}e^{i\beta \vec a\cdot \vec x}\ket{\vec x}$. To make the expression for the last $k-l$ qubits $\sum_{\vec x} e^{i \beta\vec b\cdot \vec x}\ket{\vec x}$ match this more, note that we can also write this as $\sum_{\vec x; \vec b\cdot \vec x \text{ even}} e^{i \beta\vec b\cdot \vec x}\ket{\vec x}$ because all the $b_j = z_{l+j}$ are even.
  We can then combine these representations of the first $l$ and last $k-l$ qubits as:
  \begin{equation}\label{eq:weight-superposition}
      \ket{\psi_\beta} \ := \ \sum_{\substack{\vec x\in \B^k \\ \vec z\cdot \vec x \,\equiv\, 0 \text{ mod } 2}} e^{i\beta \vec z \cdot \vec x} \ket{\vec x} \ = \ 
  \sum_{w \text{ even}} e^{i\beta w} \sum_{\substack{\vec x\in \B^k \\ \vec x\cdot \vec z = w}}\ket{\vec x}.
  \end{equation}
  Hence, we see that for $\vec x=x_1\cdots x_k$ we have $\ket{\vec x}$ in the superposition when $\vec z\cdot \vec x$ even. We will call this the \emph{weight} and we set $w=\vec z\cdot \vec x$. In the rightmost expression in Eq.~\eqref{eq:weight-superposition} we grouped together all the terms with the same weight. Note that because some of the $z_j$ can be negative, that the weight can also be negative. We can hence write $\ket{\psi_\beta} = \sum_{w \text{ even}} e^{i\beta w}\ket{A_w}$ where $\ket{A_w} := \sum_{\vec x\,;\, \vec z\cdot \vec x = w} \ket{\vec x}$. 
  Since $D'_2$ maps every $\ket{\psi_\beta}$ to the same thing (namely, to the left-hand side of Eq.~\eqref{eq:no-cloning-pf-zero}, which is $D_1'\ket{0}$), $D'_2$ must map $\ket{\psi_\beta} - \ket{\psi_{\beta'}}$ to zero, for any $\beta,\beta'\in\R$. Let $K$ be the kernel of $D'_2$. Then $\ket{\psi_\beta} - \ket{\psi_{\beta'}} \in K$. Our goal is to show that $\ket{A_w}\in K$ if $k\neq 0,1$. We will do this by looking at different combinations of $\ket{\psi_\beta} - \ket{\psi_{\beta'}}$. Note that:
  \begin{align*}
    \ket{\psi_\beta} - \ket{\psi_{\beta'}} \ &= \ \sum_{\substack{w\neq 0\\w\text{ even}}} (e^{iw \beta} - e^{iw\beta'})\ket{A_w} \ 
    \\&= \ \sum_{\substack{w > 0\\w\text{ even}}}\left[ (e^{iw \beta} - e^{iw\beta'})\ket{A_w} + (e^{-iw \beta} - e^{-iw\beta'}) \ket{A_{-w}}\right].
  \end{align*}
  In particular, if we set $\beta' = -\beta$, we get
  \[
    \sum_{\substack{w > 0\\w\text{ even}}} 2i\sin{w\beta} (\ket{A_w} - \ket{A_{-w}}) \in K
  \]
  We show later in Lemma~\ref{lem:independent-vectors} that this is only possible for all $\beta$, if $(\ket{A_w} - \ket{A_{-w}})\in K$ for all even $w\neq 0$. Intuitively this is because the angle doubling formulas for sine are not linear (but are instead polynomial).
  Since then $(\ket{A_w} - \ket{A_{-w}})\in K$ we can add this term multiplied by $e^{-iw \beta} - e^{-iw\beta'}$ to the expression for $\ket{\psi_\beta} - \ket{\psi_{\beta'}}$ above to cancel out the $\ket{A_{-w}}$ term and still get something that is in $K$. This expression then becomes:
  $$
    \sum_{\substack{w > 0\\w\text{ even}}}(e^{iw \beta} + e^{-iw \beta} - e^{iw\beta'} - e^{-iw\beta'})\ket{A_w} \ = \ 
    2\sum_{\substack{w > 0\\w\text{ even}}}(\cos{w\beta} - \cos{w\beta'})\ket{A_w}.
  $$
  Again, by referring to Lemma~\ref{lem:independent-vectors}, this can only be in $K$ for all values of $\beta$ and $\beta'$ when $\ket{A_w}\in K$ itself, for even $w>0$. Combining this with $(\ket{A_w} - \ket{A_{-w}})\in K$, we see then that also $\ket{A_{-w}}\in K$.
  Hence, $\ket{A_w}\in K$ if $w\neq 0$ for even $w$.

  If in Eq.~\eqref{eq:no-cloning-pf-zero} we plugged in $\ket{1}$ instead of $\ket{0}$, we would have gotten an input state that is a combination of the \emph{odd} weight bit string superpositions. To be precise, the input state for a given $\beta$ would be
  \[
    \ket{\psi_\beta} \ = \ e^{-i\beta} \sum_{w \text{ odd}} e^{iw\beta} \ket{A_w} 
    \ = \ \sum_{w \text{ odd}} e^{i(w-1)\beta} \ket{A_w}.
  \]
  The global $e^{-i\beta}$ factor comes from inputting a $\ket{1}$ which absorbs the first $-\beta$ phase, introducing a $e^{-i\beta}$ (global phases are important here since we want to subtract states from each other). Now, as before $\ket{\psi_\beta} - \ket{\psi_{\beta'}}$ is in the kernel $K$ of $D'_2$. We can analyse this similarly to find that when $w\neq 1$ is odd we must have $\ket{A_w}\in K$ . For $w=1$, the term $e^{i(w-1)\beta} = 1$ is constant and so does not appear in the expression $\ket{\psi_\beta} - \ket{\psi_{\beta'}}$, which is why we can't claim anything about it.

  We see then that $D'_2\ket{A_w} = 0$ if $w\not\in \{0,1\}$.
\end{proof}

The following lemma was used in the proof above.
\begin{lemma}\label{lem:independent-vectors}
  Let $S\subseteq V$ be a subspace of a complex vector space $V$, and let $\ket{v_1},\ldots \:\ket{v_m}$ $\in V$ be linearly independent vectors. 
  Then in each of the following situations we must have $\ket{v_k}\in S$ for all $k$:
  \begin{enumerate}
    \item If for all $\beta\in \R$ we have $\sum_{k=1}^m \sin{2k \beta} \ket{v_k} \in S$.
    \item If for all $\beta,\beta'\in \R$ we have $\sum_{k=1}^m (\cos{2k \beta}-\cos{2k\beta'}) \ket{v_k} \in S$.
  \end{enumerate}
\end{lemma}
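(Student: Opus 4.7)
My plan is to reduce both parts of the lemma to the following standard linear-algebra fact: if $f_1,\ldots,f_m$ are linearly independent real-valued functions on a set $X$, then there exist points $x_1,\ldots,x_m \in X$ such that the $m\times m$ matrix $M$ with $M_{ji} = f_i(x_j)$ is invertible (this is proved by a straightforward induction on $m$, using independence to extend at each step). Given this, for each $j$ the vector $u_j := \sum_{i=1}^m f_i(x_j)\ket{v_i}$ lies in $S$ by hypothesis, and inverting $M$ expresses each $\ket{v_i}$ as a finite linear combination of the $u_j$'s, so $\ket{v_i}\in S$.

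For part (1), I would apply this reduction with $f_k(\beta) = \sin(2k\beta)$ on $X = \R$. The functions $\sin(2\beta),\sin(4\beta),\ldots,\sin(2m\beta)$ are well-known to be linearly independent on $\R$, for instance by the orthogonality of distinct-frequency sines on $[0,\pi]$, or by re-expressing everything in the basis $\{e^{\pm 2ik\beta}\}$ and invoking the linear independence of distinct complex exponentials.

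For part (2), I would first specialise $\beta' = 0$, which gives $\sum_k (\cos(2k\beta)-1)\ket{v_k}\in S$ for every $\beta\in\R$. So the relevant family is $g_k(\beta) := \cos(2k\beta) - 1$. To verify linear independence, suppose $\sum_k a_k (\cos(2k\beta) - 1) = 0$ identically; rearranging gives $\sum_k a_k \cos(2k\beta) = \sum_k a_k$, but the left-hand side is a trigonometric polynomial supported on nonzero frequencies and so cannot equal a nonzero constant, forcing $\sum_k a_k = 0$ and then $a_k = 0$ for each $k$ by the linear independence of $\cos(2\beta),\ldots,\cos(2m\beta)$. With independence in hand, the same reduction as in part (1) applies, using some $\beta_1,\ldots,\beta_m$ on which the matrix $(g_k(\beta_j))$ is invertible.

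The only real obstacle is bookkeeping: for part (2) one must check independence carefully because the subtracted constant could, in principle, introduce a dependence, but the argument above shows it does not. Everything else is packaging the standard existence of good sample points together with the well-known linear independence of distinct-frequency sines and cosines.
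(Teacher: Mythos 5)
Your proposal is correct, and it takes a genuinely different route from the paper. The paper's proof first rewrites $\cos 2k\beta$ via Chebyshev polynomials as polynomials in $\cos\beta$, sets $\beta'=\tfrac{\pi}{2}$ to kill the constant terms, obtains $\sum_k \cos^{2k}\!\beta\,\ket{w_k}\in S$ where the $\ket{w_k}$ are an upper-triangular integer recombination of the $\ket{v_k}$, and then extracts the $\ket{w_k}$ one by one using the invertibility of a Vandermonde matrix $\bigl(z_j^{2k}\bigr)$ before unwinding the triangular change of basis. You instead invoke the general fact that linearly independent functions $f_1,\dots,f_m$ admit sample points at which the evaluation matrix $\bigl(f_i(x_j)\bigr)$ is invertible, and apply it directly to $\sin 2k\beta$ (part 1) and to $\cos 2k\beta - 1$ after specialising $\beta'=0$ (part 2), verifying the independence of these trig families by standard orthogonality or distinct-frequency arguments. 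Your route is cleaner: it avoids the Chebyshev expansion and the triangular bookkeeping entirely, and it makes visible that the Vandermonde step in the paper is just a concrete instance of the same sample-point principle. One small observation in your favour: your argument never actually uses the linear independence of the vectors $\ket{v_k}$, only that of the coefficient functions, so it proves a slightly more general statement; the paper's proof likewise does not essentially need that hypothesis. The only detail you leave implicit is the induction establishing the sample-point fact, but your sketch of it is standard and correct.
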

\begin{proof}
    We only prove the second claim as it is the more complicated one, and the first one follows similarly.
    Recall that the \emph{Chebyshev polynomials} $T_n(x)$ are degree $n$ polynomials satisfying $\cos{n \beta} = T_n(\cos \beta)$. In particular $T_{2k}(x) = a_{2k,2k}x^{2k} + a_{2k,2k-2}x^{2k-2} + \cdots + a_{2k,2}x^2 \pm 1$ where all the $a_{2k,2j}$ are integers and non-zero. Hence, we can rewrite $\cos{2k \beta}$ in terms of powers of $\cos{\beta}$ and group terms accordingly. We hence get
    \[
    \sum_{k=1}^m (\cos{2k \beta}-\cos{2k\beta'})\ket{v_k} = \sum_{k=0}^m (\cos^{2k}{\beta} - \cos^{2k}{\beta'}) \ket{w_k},
    \]
    where $\ket{w_k}$ for $k>0$ is a non-zero integer combination of precisely all $\ket{v_k},\ket{v_{k+1}},\ldots\ket{v_m}$ and $\ket{w_0}$ is a (different) combination of all $\ket{v_1},\ldots, \ket{v_m}$. Now, set $\beta'=\frac\pi2$, so that $\cos{\beta'}=0$. Then all the powers of $\cos{\beta'}$ disappear, except for $k=0$, which is the constant term. The above expression then simplifies to $\sum_{k=1}^m \cos^{2k}{\beta} \ket{w_k}$. Note that the sum now starts at $k=1$, instead of $0$.

    Note now that if we can show that all $\ket{w_k}\in S$ for $k>0$, that we get all $\ket{v_k}\in S$, since the $\ket{w_k}$ form an `upper triangular' combination of the $\ket{v_k}$: $\ket{w_m}$ only contains an integer combination of $\ket{v_m}$, and hence if $\ket{w_m}\in S$, then $\ket{v_m}\in S$. Then we consider $\ket{w_{m-1}}$ which is an integer combination of $\ket{v_m}$ and $\ket{v_{m-1}}$, so that then $\ket{v_{m-1}}$ must also be in $S$. We repeat this until we get to $\ket{w_1}$ which shows that $\ket{v_1}\in S$.

    So it remains to show that $\ket{w_k} \in S$ for all $k$. Writing $x=\cos{\beta}$, we see that we have $\sum_{k=1}^m x^{2k} \ket{w_k} \in S$ for all $x\in [0,1]$. Let $z_1,\ldots, z_m\in (0,1)$ be distinct numbers. Then the matrix $M_{ij} = z_j^{2i} = (z_j^2)^i$ is a Vandermonde matrix and hence is full-rank, so that by elementary row operations it can be reduced to the identity.
    Consider then the states $\ket{\psi_j} = \sum_{k=1}^m z_j^{2k} \ket{w_k} \in S$, i.e.~where we evaluate on $x=z_j$. Considering these states $\ket{\psi_j}$ as the `rows' and applying the same `elementary row operations' (i.e adding these states together in the same way as we would Gauss-Jordan reduce $M_{ij}$), we see that we are left with the states `on the diagonal' $\ket{w_j}$, which are hence also in $S$ as desired.

    Proving this result for the expression with $\sin$ instead of $\cos$ follows similarly, by using the Chebyshev polynomials of the second kind.
\end{proof}

The property from Lemma~\ref{lem:weight-space} is useful for ruling out certain possibilities for a non-parsimonious reduction. Note for instance that there is a unique maximum weight $w_m$ which we get when we set $x_j=1$ iff $z_j\geq 0$ (note that we may take all $z_j\neq 0$ as $z_j=0$ corresponds to the parameter not appearing in that spot, and it hence not being relevant in Lemma~\ref{lem:weight-space}). Hence $\ket{A_{w_m}} = \ket{\vec x_m}$ is just a single computational basis state corresponding to a bit string $\vec x_m$. We then know that $D_2'\ket{\vec x_m} = 0$ if $w_m\neq 0$ and $w_m\neq 1$. 
Note that $w_m=0$ only happens when all the weights are negative. In that case we will instead consider the \emph{minimum} weight, for which its associated $\ket{A_w}$ will also be computational basis state that is sent to zero by $D_2'$. Similarly, if $w_m=1$, then either there is a negative $z_j$, so that we can also choose the minimum weight instead, or there is just a single non-zero $z_j$ which must be equal to $z_j=1$, so that the reduction is already parsimonious. In any case we may without loss of generality assume that there is a $\ket{\vec x_m}$ such that $D_2'\ket{\vec x_m} = 0$.
If we assume that the constants $\vec c$ in the reduction are zero, then $D_2'$ is a Clifford diagram, and then knowing that a certain computational basis vector is sent to zero gives us valuable information.

To formalise this statement we need to use a particular representation of Clifford states that was originally described in~\cite{VandenNest2010Classical}, and recovered in the ZX-calculus in~\cite{poor2023qupit}: the \emph{affine with phases} form that we also saw in Eq.~\eqref{eq:AP-state-decompose}.
In particular, any stabiliser state can be written, up to normalisation, as 
\begin{equation}\label{eq:AP-form}
  \sum_{\substack{\vec x\\ A\vec x = \vec b}} e^{i\frac\pi2 \phi(\vec x)} \ket{\vec x}
\end{equation}
where $A$ is a matrix over the field with two elements $\{0,1\}$, $\vec b$ is some fixed vector, and $\phi$ is a (integer-valued) function consisting of a quadratic part and a linear part in $\vec x$. That is, each stabiliser state is a uniform superposition over some \emph{affine} subset of all computational basis states $\{ \vec x \,|\, A\vec x = \vec b\}$. If we compose a stabiliser state with some diagonal unitaries, we can absorb the action of those unitaries into $\phi$, leaving the superposition of states intact. Note that the matrix $A$ is not uniquely defined, as we only need the set of solutions $\vec x$ to $A\vec x = \vec b$, and hence we can apply row operations to $A$ and $\vec b$ and preserve the same state. Note that the system of equations $A\vec x = \vec b$ is only inconsistent (i.e.~has no solutions) when we are dealing with the zero state, so we will from now on assume that it is in fact consistent.

This form has a straightforward representation as a ZX-diagram:
\begin{equation}\label{eq:AP-state-unfuse}
    \tikzfig{AP-state-unfuse}
\end{equation}
Here each of the internal X-spiders corresponds to an equation of $A$, with the biadjacency matrix between the X-spiders and Z-spiders being equal to $A$ (hence, each $1$ in $A$ corresponds to a wire in the diagram).
  
We are here dealing with Clifford \emph{maps}, instead of \emph{states}, but this just means we have to update Eq.~\eqref{eq:AP-form} to also work with inputs instead of only having outputs.
All of this means that we can write:
$$D_2' := \sum_{\substack{\vec x,\vec y\\A(\vec x,\vec y) = \vec b}} e^{i\frac\pi2 \phi(\vec x, \vec y)} \ketbra{\vec y}{\vec x}.$$
In the ZX-picture, this just corresponds to also having some Z-spiders with inputs, instead of just outputs as in the diagram~\eqref{eq:AP-state-unfuse}. 
We know that $D_2'\ket{\vec x_m} = 0$, and hence we must have $A(\vec x_m, \vec y) \neq \vec b$ for all $\vec y$.
The fact that some $\vec x_m$ is not part of that subspace implies there must be a certain equation in $A$ that witnesses that. We can choose this equation to only involve the $\vec x$ part, and not the $\vec y$ part, as the following lemma demonstrates.
\begin{lemma}\label{lem:affine-restriction}
  Let $\vec x_m\in\{0,1\}^k$, $A=(A_X~A_Y)$ be a $(k+n)\times r$ Boolean matrix and $\vec b\in\{0,1\}^r$ and suppose the system of equations defined by $(A,\vec b)$ has solutions. Suppose furthermore that for all $\vec y\in\{0,1\}^n$ we have $A(\vec x_m,\vec y) = A_X\vec x_m+A_Y\vec y \neq \vec b$. 
  Then there exists $A'=(A_X'~A_Y')$ and $\vec b'$ with the same solution set $\{(\vec x, \vec y)\,|\, A(\vec x, \vec y) = \vec b\} = \{(\vec x, \vec y)\,|\, A'(\vec x, \vec y) = \vec b'\}$, such that $A'$ contains a row $(\vec a_X~\vec 0)$, which only involves the `X side' of $A'$ that witnesses $(\vec x_m, \vec y)$ not being in the solution set. I.e.~supposing this is the $j$th row, we have $\vec a_X\vec x_m \neq b_j'$.
\end{lemma}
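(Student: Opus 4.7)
The plan is to invoke a standard Fredholm-alternative style argument over $\mathbb{F}_2$. The hypothesis says that the inhomogeneous linear system $A_Y \vec y = \vec b - A_X \vec x_m$ has no solution. Over $\mathbb{F}_2$ this is equivalent to saying that $\vec b - A_X \vec x_m$ is not in the column span of $A_Y$, which by duality means there exists a row vector $\vec v \in \{0,1\}^r$ with $\vec v^T A_Y = \vec 0$ but $\vec v^T(\vec b - A_X \vec x_m) \neq 0$, i.e.\ $\vec v^T A_X \vec x_m \neq \vec v^T \vec b$.

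Concretely I would carry this out by performing Gaussian elimination on the augmented matrix $(A_Y \mid A_X \mid \vec b)$ using row operations only on the $A_Y$ side to bring $A_Y$ into reduced row echelon form. The rows in which the $A_Y$ part has become $\vec 0$ are the ``pure $X$'' rows; they have the shape $(\vec 0 \mid \vec a_X^{(i)} \mid b'_i)$. Call the resulting augmented matrix $(A_Y' \mid A_X' \mid \vec b')$; it defines the same solution set as $(A_Y \mid A_X \mid \vec b)$ because row operations preserve solution sets. For any $\vec x$, the non-pure-$X$ rows each have a pivot on the $Y$-side, so an appropriate $\vec y$ can always be found to satisfy them once $\vec x$ is fixed; hence the system $A'(\vec x, \vec y) = \vec b'$ is solvable in $\vec y$ iff $\vec x$ satisfies every pure-$X$ row.

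Applying this to $\vec x = \vec x_m$: by hypothesis no $\vec y$ solves the system, so $\vec x_m$ must fail at least one pure-$X$ equation, i.e.\ there is some index $j$ with $\vec a_X^{(j)} \vec x_m \neq b'_j$. This row $(\vec a_X^{(j)} \mid \vec 0)$ with right-hand side $b'_j$ is exactly the witness required. The only mild care needed is in justifying that, after eliminating on the $Y$-side only, the non-pure-$X$ rows remain solvable in $\vec y$ for any choice of $\vec x$; this is immediate from the pivot structure, since each such row has a leading $1$ in a distinct $Y$-column and the $Y$-columns of the already-eliminated rows have been zeroed out above and below that pivot. The main obstacle, such as it is, is purely bookkeeping: being careful that the row operations are performed only on the $Y$-columns (so that pivot-free rows genuinely depend only on $\vec x$), and that the resulting $A'$ can be taken equal to the reduced matrix (if one insists on the same number of rows) or obtained by adjoining the row $\vec v^T(A \mid \vec b)$ (which is implied by the other equations and so does not change the solution set).
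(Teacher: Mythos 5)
Your proposal is correct and follows essentially the same route as the paper: row-reduce so that $A_Y$ is in reduced row echelon form, observe that every row with a pivot in a $Y$-column can be satisfied by an appropriate choice of $\vec y$ without disturbing the others, and conclude that the insolvability at $\vec x_m$ must be witnessed by a row whose $Y$-part is zero. The Fredholm-alternative framing is a nice high-level summary but the concrete argument coincides with the paper's.
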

\begin{proof}
    We write $A=(A_X~A_Y)$ for $A_X$ the part of the matrix acting on $\vec x$ and $A_Y$ the part that acts on $\vec y$. Then let $A' = (A_X'~A_Y')$ be the matrix we get by applying row operations to $A$ in order to bring $A_Y$ into reduced row echelon form (i.e.~we apply Gaussian elimination to it), and let $\vec b'$ be the transformed version of $\vec b$ (with the same row operations applied to this column vector). We claim now there is a row of $A'$ that witnesses the failure of $(\vec x_m, \vec y)$ to be in the solution set $A'(\vec x,\vec y)=\vec b'$, that has zero components along the $Y$ part of $A'$. 

    Suppose this is not the case. Hence, on all the rows where $A_Y'$ is zero, the $X$ part satisfies the equation when we fill in $\vec x_m$. We will then find a $\vec y$ such that $(\vec x_m, \vec y)$ is in the solution set, which is a contradiction on the starting assumption, so that we must have indeed had a row that witnesses $\vec x_m$ not being part of the solution set where the $Y$ part of $A'$ is zero.
    Since $A_Y'$ is in reduced row echelon form, this means that each non-zero row must contain a pivot column, a $1$ on that column, that is zero on all the other rows. Suppose the $j$th column is a pivot of the $i$th row, then exactly one of the choices $y_j=1$ or $y_j=0$ in $\vec y$ will result in a satisfied equation when we plug in $(\vec x_m, \vec y)$, with the other option giving an unsatisfied equation. Furthermore, choosing $y_j=1$ or $y_j=0$ does not change the satisfiability of any other equation, since $j$ is a pivot. Hence we can set $y_j$ so that the $i$th equation is satisfied when we plug in $(\vec x_m, \vec y)$, without affecting whether any other equation is satisfied. So, by deciding the value of all the pivots in $\vec y$ and setting all the other components of $\vec y$ to zero, we can find a $\vec y$ such that $A'(\vec x_m,\vec y) = \vec b'$, which is the contradiction we seek.
\end{proof}
Hence, we may now assume that we have a representation of our map $D_2'$ where there is an equation in $A$ that only involves $\vec x$, that witnesses that $\vec x_m$ is not part of the superposition (i.e.~is sent to zero).

Let's now work through an example to make clear what this allows us to do.
Suppose we had $D_1[\alpha] = D_2[z_1\alpha, z_2\alpha]$ where $z_1>0$ and $z_2<0$ and where both are odd, so that $\vec x = (1,0)$ is the unique state giving maximum weight (and $\vec x' = (0,1)$ is the unique state giving the minimum weight which we should use instead if $z_1=1$). Hence, assuming $z_1>1$, we have $D'_2\ket{10} = 0$. Then there must be an equation defining the affine subspace of $D_2$ involving the two variables $x_1$ and $x_2$ for which $(1,0)$ is not a solution. There are three possible equations that do this: $x_1 = 0$, $x_2 = 1$ and $x_1\oplus x_2 = 0$. We can split up the cases in terms of how many variables are involved in the equation. First, let's suppose $A$ contains the equation $x_2=1$. In that case the qubit wire leading to the corresponding parameter can be disconnected from the rest of the diagram:
\begin{equation}
  \tikzfig{two-parameter-disconnect}
\end{equation}
We can then use this to simplify the expression of Lemma~\ref{lem:no-cloning-gadget}, as we now know that $I\otimes \ketbra{1}{1}$ stabilises $D_2'$:
\begin{equation}\label{eq:two-parameter-disconnect-2}
  \tikzfig{two-parameter-disconnect-2}
\end{equation}
By Lemma~\ref{lem:no-cloning-gadget}, this is equal to $D_1'$, and hence we see that $D_1[\alpha] = D_2[-\alpha,0]$, which is parsimonious. If instead of the equation $x_2=1$, we had $x_1=0$, we could do an analogous argument.

Now suppose that instead we had the final possibility, $x_1\oplus x_2 = 0$, for which $(1,0)$ is also not a solution. In that case we can also simplify the expression coming from Lemma~\ref{lem:no-cloning-gadget}, as we know that $D_2'$ must be stabilised by the projector onto $\text{span}(\ket{00},\ket{11})$, which is a Z-spider:
\begin{equation}\label{eq:two-parameter-disconnect-3}
  \tikzfig{two-parameter-disconnect-3}
\end{equation}
So in this case we have $D_1[\alpha] = D_1[0]$, so that the parameter is trivial.

This in fact covers all the possibilities for a parameter that appears twice (as there must be an equation involving just two variables present in the representation of $D_2'$, and so the equation either contains just 1 variable, or both variables, which are covered by the above two examples), so we see that either we can reduce it to a parsimonious case, or the parameter is necessarily trivial. Note that we didn't need the fact that the constants $\vec c$ were Clifford, as we only used information about the affine subspace, and not the phases, so that this holds for any affine reduction were variables appear at most twice.
If we have more than one parameter, we can leave the other parameters `plugged in', which just introduce additional phases we can ignore, so that we can repeat this argumentation for each parameter independently:
\begin{proposition}
  Let $D_1[\vec\alpha] = D_2[M\vec \alpha +\vec c]$ be an affine reduction where every column of $M$ contains at most two non-zero elements, and suppose that all parameters $\vec \alpha$ are non-trivial. Then there exists a parsimonious reduction $D_1[\vec \alpha] = D_2[M'\vec \alpha + \vec c]$.
\end{proposition}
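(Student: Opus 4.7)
The plan is to extend the two-variable worked example at the end of Appendix~\ref{app:non-parsimonious} to the full multi-parameter case, handling each column of $M$ independently. Fix an arbitrary column $l$ of $M$, corresponding to the parameter $\alpha_l$. If column $l$ has zero non-zero entries then $\alpha_l$ would be trivial, contradicting our assumption; if it has exactly one non-zero entry then the column is already parsimonious. So assume the column has exactly two non-zero entries at positions $i$ and $j$, with integer coefficients $z_i, z_j$.

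First I would specialise to a single-parameter reduction by freezing every other parameter $\alpha_{l'}$ (for $l' \neq l$) to some Clifford value. This yields a one-parameter reduction $\tilde D_1[\alpha_l] = \tilde D_2[z_i \alpha_l + c_i', z_j \alpha_l + c_j']$ where $\tilde D_2$ is a Clifford diagram whose constants absorb contributions of the frozen parameters. Applying Lemma~\ref{lem:no-cloning-gadget} to this reduction, followed by Lemma~\ref{lem:weight-space}, I obtain a computational basis state $\ket{\vec x_m}$ on the two wires indexed by $i,j$ such that $\tilde D_2'\ket{\vec x_m} = 0$ (take the extremal-weight basis state, switching between maximum and minimum to dodge the marginal values $w \in \{0,1\}$; if no such choice works then $|z_i| = |z_j| = 1$ and after sign adjustment the column is already parsimonious).

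Next, using the affine-with-phases form of $\tilde D_2'$ together with Lemma~\ref{lem:affine-restriction}, I would extract an equation in the affine subspace of $\tilde D_2'$ that constrains only the input variables $x_i, x_j$ (the other inputs being fixed by the frozen parameters, and the outputs eliminated by the lemma) and witnesses $\vec x_m$ lying outside the subspace. Such an equation must have one of three forms: $x_i = a$, $x_j = b$, or $x_i \oplus x_j = c$. In the first two cases the corresponding wire is stabilised by $\ket{a}\!\bra{a}$ (or $\ket{b}\!\bra{b}$), so pushing this projection through that wire detaches the parametrised phase gate from $D_2$ up to a scalar, yielding a modified $M'$ in which only one of positions $i, j$ is used for $\alpha_l$ (parsimonious for this column). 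In the third case the stabiliser $Z_iZ_j$ (up to sign, absorbed into $\vec c$) of $D_2'$ fuses the two wires so that the two copies of $z\alpha_l$ combine into a single parameter-independent phase as in Eq.~\eqref{eq:two-parameter-disconnect-3}, forcing $\alpha_l$ to be trivial and contradicting the hypothesis.

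The main obstacle I anticipate is ensuring that the column-by-column modifications compose coherently into a single parsimonious $M'$ for the same $D_2$ and $\vec c$. The key observation is that the stabiliser equations extracted above are intrinsic to the Clifford map $D_2'$ and do not depend on the parameter assignments, so the wire-detachment transformations commute across distinct columns and can be applied iteratively. A secondary subtlety is that the detachment absorbs certain Clifford phases into $\vec c$, so one must either allow $\vec c$ to update (which the statement permits) or reinterpret these shifts as relabellings internal to $D_2$; in either formulation the resulting $(M', \vec c)$ provides the desired parsimonious reduction.
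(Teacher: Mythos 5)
Your proposal is correct and follows essentially the same route as the paper's proof: freeze the remaining parameters to reduce to a single-parameter, two-occurrence reduction, use Lemma~\ref{lem:weight-space} to find a computational basis state killed by $D_2'$, extract an input-only witnessing equation via Lemma~\ref{lem:affine-restriction}, and run the same three-case analysis (one variable, the other variable, or the parity equation forcing triviality), then iterate over the parameters. The only cosmetic difference is that the paper phrases the final composition step as an explicit induction on the number of parameters rather than a commutation argument, and your fallback case where no extremal weight lies outside $\{0,1\}$ is in fact vacuous once both entries of the column are non-zero.
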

\begin{proof}
  We prove by induction on the number of parameters in $D_1$. If there is 1 parameter, then we can write $D_1[\alpha] = D_2[z_1\alpha+c_1, z_2\alpha+c_2]$. Depending on whether $z_1$ and $z_2$ are lesser or greater than 0, $D_2'$ either sends $\ket{00}$, $\ket{01}$, $\ket{10}$, or $\ket{11}$ to zero, as a consequence of Lemma~\ref{lem:weight-space}. In all cases, we know by Lemma~\ref{lem:affine-restriction} that $D_2'$ contains an equation that only involves the two input variables which witnesses that this state is sent to zero. By the analysis above, we can then either conclude that there is a parsimonious reduction, or that the parameter is trivial (which is ruled out by the conditions of the proposition).

  Now, for the induction step, by fixing one parameter to a fixed value, we can include it into the constant $\vec c$, reducing it to the case with one fewer parameter. Then induction shows that each of the other parameters can be included in the reduction parsimoniously. Then working with this new reduction where all parameters are used parsimoniously, except for the one we fixed, we then fix a different parameter, to show that we can also change the reduction to use the original parameter parsimoniously.
\end{proof}

We can do a similar argument for when we have three repetitions of a variable, although the method breaks down when we have four repetitions. So let's consider a reduction $D_1[\alpha] = D_2[z_1\alpha + c_1,z_2\alpha + c_2, z_3\alpha + c_3]$.

Due to Lemma~\ref{lem:affine-restriction}, we now still get an equation restricting the affine subspace of $D_2'$. We can again make a case distinction on how many of the three variables this equation contains. If this involves just one variable, the reduction can be restricted to one where the variable only occurs twice using a version of Eq.~\eqref{eq:two-parameter-disconnect-2}, so that the previous case applies and we are done. If the equation involves two variables, then we can use a version of Eq.~\eqref{eq:two-parameter-disconnect-3} to reduce it to the parsimonious case as well:
\begin{equation}
  \tikzfig{three-parameter-disconnect-2}
\end{equation}
In this particular case then $D_2[\alpha,\beta,\gamma] = D_2[\alpha,\beta+\gamma,0]$ for any $\alpha,\beta,\gamma$ (though note that in the general case, we might also get some minus signs). Hence: $D_1[\alpha] = D_2[z_1\alpha+c_1, (z_2+z_3)\alpha + (c_2+c_3),0]$, so that we have now also reduced to the case of two repetitions of the parameter, and the previous proposition applies.

Now if the restricting equation involves all three variables we need to make a case distinction. If none or only one of the $z_j$ is odd, then by Lemma~\ref{lem:no-cloning-gadget} we already know the reduction can be made parsimonious (or the parameter is trivial). If all three $z_j$ are odd, then the three-qubit version of Eq.~\eqref{eq:two-parameter-disconnect-3} applies, and the parameter in $D_1$ is trivial:
\begin{equation}\label{eq:three-parameter-disconnect}
  \tikzfig{three-parameter-disconnect}
\end{equation}
Then use the diagram from Lemma~\ref{lem:no-cloning-gadget} to relate $D_1$ to the diagram in Eq.~\eqref{eq:three-parameter-disconnect} to show the parameter is indeed trivial.

The remaining case is then when two of the $z_j$ are odd. In that case, the equivalent of Eq.~\eqref{eq:two-parameter-disconnect-3} and Eq.~\eqref{eq:three-parameter-disconnect} toggles the connectivity of the X-spider, so that it is equivalent to the X-spider only being connected to just 1 of the spiders:
\begin{equation}\label{eq:three-parameter-disconnect-3}
  \tikzfig{three-parameter-disconnect-3}
\end{equation}
Hence, in this case we again see the parameter becomes parsimonious.
\begin{proposition}
  Let $D_1[\vec\alpha] = D_2[M\vec \alpha +\vec c]$ be an affine reduction where every column of $M$ contains at most \emph{three} non-zero elements, and suppose that all parameters $\vec \alpha$ are non-trivial. Then there exists a parsimonious reduction $D_1[\vec \alpha] = D_2[M'\vec \alpha + \vec c]$.
\end{proposition}

These techniques continue to work when we have more parameter repetitions, but then they are no longer sufficient to rule out all cases (as far as we can tell).
For any number of parameter repetitions, if the restricting equation contains 1 or 2 variables, we reduce to a case with fewer repetitions, so that we can use induction. For four repetitions the only problematic cases are then when there are exactly 2 odd $z_j$, and the restricting equation contains 3 variables such that this does not contain all odd $z_j$, or the restricting equation contains all 4 variables. We conjecture that using some other consequences of Lemma~\ref{lem:weight-space} we should also be able to prove that when we have this situation that we can still construct a parsimonious reduction.

\end{document}